\newcommand{\mylabel}[2]{#1#2}
\newtheorem{theorem}{Theorem}
\newtheorem{lemma}{Lemma}
\newtheorem{problem}{Problem}
\newtheorem{corollary}{Corollary}
\newtheorem{remark}{Remark}
\tikzstyle{block} = [draw,rectangle, rounded corners, minimum width=1cm, minimum height=0.8cm,text centered, line width=2pt ]
\tikzstyle{arrow} = [thick,->,>=stealth,line width=2pt]
\tikzset{cross/.style={cross out, draw=black, minimum size=2*(#1-\pgflinewidth), inner sep=0pt, outer sep=0pt},
cross/.default={1pt}}
\tikzset{
  shift left/.style ={commutative diagrams/shift left={#1}},
  shift right/.style={commutative diagrams/shift right={#1}}
}
\newcommand\rsmraise[1]{%
  \ifx#1\displaystyle .8\else
    \ifx#1\textstyle .8\else
      \ifx#1\scriptstyle .6\else
        .45%
      \fi
    \fi
  \fi}
\tikzstyle{block} = [draw,rectangle, rounded corners, minimum width=1cm, minimum height=0.8cm,text centered, line width=2pt ]
\tikzstyle{arrow} = [thick,->,>=stealth,line width=2pt]
\tikzset{
    addarrow/.style={decoration={markings, mark=at position 1 with {\arrow{stealth}}},
                     postaction={decorate}}
}
\newcommand{\tp}{\intercal}		
\newcommand{\R}{\mathbb{R}}			
\newcommand{\ind}{\mathds{1}}		
\DeclareMathOperator{\ee}{\mathbb{E}}			
\DeclareMathOperator{\prob}{\mathbb{P}}			
\DeclareMathOperator{\vecc}{\mathbf{vec}}		
\DeclareMathOperator{\tr}{\mathbf{tr}}			
\DeclareMathOperator{\cov}{\mathbf{cov}}		
\newcounter{l1}
\newcounter{l2}
\newcommand{\bdashlist}{\begin{list}{$-$}{} }
\newcommand{\bromalist}{\begin{list}{\roman{l1}.}{\usecounter{l1}}}
\newcommand{\balphlist}{\begin{list}{(\alph{l2})}{\usecounter{l2}}}
\title{
Optimal Local and Remote Controllers with Unreliable Uplink Channels
}
\author{Seyed Mohammad Asghari, Yi Ouyang, and Ashutosh Nayyar
\thanks{Preliminary version of this paper appears in the proceedings of the 55th annual Conference on Decision and Control (CDC), 2016 (see \cite{Ouyang_Asghari_Nayyar:CDC_2016}).}
\thanks{
S. M. Asghari and A. Nayyar are with the Department of Electrical
Engineering, University of Southern California, Los Angeles, CA. Y. Ouyang is currently with the University of California, Berkeley. 
Email: asgharip@usc.edu; ouyangyi@berkeley.edu; ashutosn@usc.edu.}
 \thanks{This research was supported by NSF under grants ECCS 1509812 and CNS 1446901.}
}
\begin{document}

\maketitle

\begin{abstract}

We consider a networked control system consisting of a remote controller and a collection of linear plants, each associated with a local controller. Each local controller directly observes the state of its co-located plant and can inform the remote controller of the plant's state through an unreliable uplink channel. 
We assume that the downlink channels from the remote controller to local controllers are perfect. 
The objective of the local controllers and the remote controller is to cooperatively minimize a quadratic performance cost.
We provide a dynamic program for this decentralized control problem using the common information approach. Although our problem is not a partially nested problem, we obtain explicit optimal strategies for all controllers.
 In the optimal strategies, all controllers compute common estimates of the states of the plants based on the common information obtained from the communication network. The remote controller's action is linear in the common state estimates, and the action of each local controller is linear in both the actual state of its co-located plant and the common state estimates.
We illustrate our results with numerical experiments using randomly generated models.
\end{abstract}


\section{Introduction}
\label{sec:intro}
The advent of information and communication technologies along with the development of the Internet of Things (IoT)
has drawn increasing attention to networked control systems (NCSs). NCSs are distributed systems in which information is exchanged through a network among various components (controllers, smart sensors, actuators, etc.).
The connectivity of NCS brings numerous opportunities to new applications such as autonomous vehicles, smart grid, remote surgery, smart home, and large manufacturing systems (see \cite{gupta2010networked,HespanhaSurvey,ConnectedVehicles} and references therein).
However, the network connection is subjected to various communication constraints.
One main constraint is the unreliability of communication channels which can greatly affect the performance of NCS \cite{zhang2001stability, Schenato2007}. Therefore, the study of NCS over unreliable channels is of great importance.

The effect of control over unreliable channels has been investigated in \cite{Imer2006optimal,Sinopoli2005,Sinopoli2006,Elia2004, Garone2008,Gupta_2010} for NCS with a single controller.
However, most NCS applications consist of multiple sub-systems where each sub-system may be controlled by a remote controller as well as a local controller and the overall system performance depends on the coordination among the remote controller and all local controllers through the communication network.
In this paper, we consider an NCS consisting of a remote controller and a collection of linear plants, each associated with a local controller as shown in Fig. \ref{fig:SystemModel}.
Each plant is directly controlled by a local controller which can perfectly observe the state of the plant. 
The remote controller can control all plants, but it does not have direct access to the states as its name suggests.
The objective of the local controllers and the remote controller is to cooperatively minimize an overall quadratic performance cost of the NCS. 
The remote controller and local controllers are connected by a communication network where the downlinks from the remote controller to local controllers are perfect but the uplinks from local controllers to the remote controller are unreliable channels with random packet drops. 
Such scenario happens in many situations where the remote controller is equipped with sufficient communication resources, but each local controller has limited transmission capabilities.
For instance, the local controllers can be a group of battery-powered telerobots or autonomous vehicles with limited transmission power proximal to their co-located systems while
the remote controller can be a controlling operator connected to a power outlet or a base station with high transmission power.

The NCS structure we study models various networked systems architectures: 1) The remote controller can model a global controller that affects all local dynamics. For example, in a smart building, the central AC unit plays the role of a remote controller that affects the temperature of multiple rooms which may also have local controllers. Furthermore, in many remotely controlled systems such as UAVs, certain low-level functions like collision avoidance are controlled by a local processor, but many high-level mission-related functions are remotely controlled by a ground control station \cite{seiler2001}; 2) System-wide global references and constraints could be modeled by the remote controller's actions. For example, the remote controller's action can describe the target location for a robot formation problem; and 3) The remote controller can be used to model an access point or a base station that relays and broadcasts information for all local controllers. For example, in vehicle to infrastructure (V2I) communication, the remote controller/access point can relay information among a set of autonomous vehicles  \cite{horowitz2000control}. 

When the local controllers are smart sensors or encoders that can only sense and transmit information, the NCS operation depends only on remote estimation and control. Remote estimation with a single smart sensor has been studied in \cite{LipsaMartins:2011,NayyarBasarTeneketzisVeeravalli:2013,Nourian_2014,Knorn_Dey_2015} and has been extended to the case with multiple smart sensors and general packet drop models in \cite{Gupta_Martins_2009,Gupta_Dana_2009}.
Remote estimation and control of a linear plant has been studied in \cite{Gupta_Hassibi_2007,BansalBasar:1989a,TatikondaSahaiMitter:2004,NairFagnaniZampieriEvan:2007,molin2013optimality,rabi2014separated} under various channel models between smart sensors and a remote controller.
The problem considered in this paper is different from these previous works on NCS because 
our problem is a decentralized control problem with multiple controllers where the dynamics of each plant is controlled by the remote controller as well as the corresponding local controller.
Finding optimal strategies in decentralized control problems is generally considered a difficult problem (see \cite{Witsenhausen:1968, LipsaMartins:2011b,blondel2000survey}).  In general, linear control strategies are not optimal, and even the problem of finding the best linear control strategies is not convex \cite{YukselBasar:2013}. 
Existing optimal solutions of decentralized control problems require either specific information structures, such as partially nested \cite{HoChu:1972,LamperskiDoyle:2011,LessardNayyar:2013,ShahParrilo:2013, Nayyar_Lessard_2015,Lessard_Lall_2015}, stochastically nested \cite{Yuksel:2009}, or other specific properties, such as quadratic invariance \cite{RotkowitzLall:2006} or substitutability \cite{AsghariNayyar:2015, Asghari_Nayyar_substitutability_TAC}. A two-controller partially-nested decentralized control problem with packet drop channels from controllers to actuators but with perfect one-directional communication from controller 1 to controller 2 was investigated in \cite{Chang_Lall_finite,Chang_Lall_infinite}.

For the problem we consider in this paper, none of the above properties hold either due to the unreliable inter-controller communication or due to the nature of dynamics and cost function. We use the common information approach to show that this problem is equivalent to a centralized sequential decision-making problem where the remote controller is the only decision-maker. We provide a dynamic program to obtain the optimal strategies of the remote controller in the equivalent problem. Then, using the optimal strategies of the equivalent problem, we obtain explicit optimal strategies for all local controllers and the remote controller. In the optimal strategies, all controllers compute common estimates of the states of the plants based on the common information obtained from the communication network. The remote controller's action is linear in the common state estimates, and the action of each local controller is linear in both the actual state of its co-located plant and the common state estimates.

\subsection*{Contributions of the Paper}
The main contributions of the paper are as follows.
\begin{enumerate}
\item 
We investigate a decentralized stochastic control problem in which local controllers send their information to a remote controller over unreliable links. To the best of our knowledge, this is the first paper that solves an optimal decentralized control problem with unreliable communication between controllers (in contrast to problems in networked control systems and remote estimation problems where the unreliable communication is between sensors/encoders and controller or between controllers and actuators).

\item The information structure of our problem is not partially nested, hence we cannot a priori restrict to linear strategies for optimal control. 
We use ideas from the common information approach of \cite{nayyar2013decentralized} to compute optimal controllers.
Since the state and action spaces of our problem are  Euclidean spaces, the results and arguments of \cite{nayyar2013decentralized} for finite spaces cannot be directly applied. 
We provide a complete set of results to adapt the common information approach to our linear-quadratic setting with non-partially nested information structure.
Our rigorous proofs carefully handle the issues of measurability constraints, the existence of well-defined value functions and infinite dimensional strategy spaces.

\item We show that the optimal control strategies of this problem admit simple structures--
the optimal remote control is linear in the common estimates of system states and each optimal local control is linear in both the common estimates of system states and the perfectly observed local state. 
The main strengths of our result are that (i) it provides a simple strategy that is proven to be optimal: not only is the strategy in Theorem 3 linear, it uses estimates that can be easily updated; (ii) it provides a tractable way of computing the gain matrices involved in the optimal strategy. In fact, our numerical experiments indicate that the computational burden of finding the optimal gain matrices in our decentralized problem is comparable to finding optimal strategies in a corresponding centralized LQ problem.

\item Our results apply to any noise model with zero mean and finite second moments. In fact, the optimal control strategies of our problem are independent of the noise statistics.

\end{enumerate}

\vspace{-2mm}
\subsection{Notation}
Random variables/vectors are denoted by upper case letters, their realization by the corresponding lower case letter.
For a sequence of column vectors $X, Y, Z,...$, the notation $\vecc(X,Y,Z,...)$ denotes the vector $[X^{\tp}, Y^{\tp}, Z^{\tp},...]^{\tp}$. The transpose and trace of matrix $A$ are denoted by $A^{\tp}$ and $\tr(A)$, respectively. 
In general, subscripts are used as time index while superscripts are used to index controllers.
For time indices $t_1\leq t_2$, $X_{t_1:t_2}$ (resp. $g_{t_1:t_2}(\cdot)$) is the shorthand notation for the variables $X_{t_1},X_{t_1+1},...,X_{t_2}$ (resp.  functions $g_{t_1}(\cdot),\dots,g_{t_1}(\cdot)$). Similarly, 
for $n_1 \leq n_2$, $X^{n_1:n_2}$ (resp. $g^{n_1:n_2}(\cdot)$) is the shorthand notation for the variables $X^{n_1},X^{n_1+1},...,X^{n_2}$ (resp.  functions $g^{n_1}(\cdot),\dots,g^{n_2}(\cdot)$).
For $n_1 \leq n_2$, the product of sets $E^{n_1},E^{n_1+1},...,E^{n_2}$ is denoted by $E^{n_1:n_2}$. Similarly, we use $E_{t_1:t_2}$ to denote the product of sets $E_{t_1},E_{t_1 +1}, \ldots, E_{t_2}$.
For set $\mathcal{A} = \{\alpha_1,\ldots, \alpha_N \}$, the collection of random variables $X^{\alpha_1}, \ldots, X^{\alpha_N}$
 (resp. functions $g^{\alpha_1}(\cdot),\ldots, g^{\alpha_N}(\cdot)$) is denoted by $\{X^{m}\}_{m \in \mathcal{A}}$ (resp. $\{g^m(\cdot)\}_{m \in \mathcal{A}}$).
  Furthermore, the collection of random variables $\{X^{m}\}_{m \in \mathcal{A} \setminus \{\alpha_n \}}$ (resp.  functions $\{g^m(\cdot)\}_{m \in \mathcal{A} \setminus \{\alpha_n \}}$ ) is denoted by $X^{-\alpha_n}$ (resp. $g^{-\alpha_n}$) and the product of sets $E^m$, $m \in \mathcal{A} \setminus \{\alpha_n \}$, is denoted by $E^{-\alpha_n}$. For a collection of random variables $\{X^{m}\}_{m \in \mathcal{A}}$ and  sets $E^m$, $m \in \mathcal{A}$, we use $\vecc(\{X^{m}\}_{m \in \mathcal{A}}) \in E^{\alpha_1} \times E^{\alpha_2} \times \ldots \times E^{\alpha_N}$ to denote that $X^m \in  E^m$ for all $m \in \mathcal{A}$. Moreover, the intersection of the events $E^{\alpha_1}, \ldots, E^{\alpha_N}$ is denoted by $\{E^m\}_{m \in \mathcal{A}}$. For example, $\prob(\{E^m\}_{m \in \mathcal{A}})$ denotes $\prob(\cap_{m \in \mathcal{A}}E^m\})$.

The indicator function of set $E$ is denoted by $\mathds{1}_{E}(\cdot)$, that is, $\mathds{1}_{E}(x) = 1$ if $x \in E$, and $0$ otherwise. If $E$ is an event, then $\mathds{1}_{E}$ denotes the resulting random variable.
$\prob(\cdot)$, $\ee[\cdot]$, and $\cov(\cdot)$ denote the probability of an event, the expectation of a random variable/vector, and the covariance matrix of a random vector, respectively.
For random variables/vectors $X$ and $Y$, $\prob(\cdot|Y=y)$ denotes the probability of an event given that $Y=y$, and $\ee[X|y] := \ee[X|Y=y]$. 
For a strategy $g$, we use $\prob^g(\cdot)$ (resp. $\ee^g[\cdot]$) to indicate that the probability (resp. expectation) depends on the choice of $g$. 
Let $\Delta(\R^n)$ denote the set of all probability measures on $\R^n$ with finite second moment. 
For any $\theta \in \Delta(\R^n)$, $\theta(E) = \int_{\R^n} \mathds{1}_{E}(x) \theta(dx)$ denotes the probability of event $E$ under $\theta$. 
The mean and the covariance of a distribution $\theta \in \Delta(\R^n)$ are denoted by $\mu(\theta)$ and $\cov(\theta)$, respectively, and are defined as $\mu(\theta) = \int_{R^n} x \theta(dx)$ and 
$\cov(\theta) = \int_{R^n} (x - \mu(\theta)) (x - \mu(\theta))^{\tp} \theta(dx)$.  

The notation $\mathbf{I}_{n}$ and $\mathbf{0}_{n \times m}$ is used to denoted a $n \times n$ identity matrix and a $n \times m$ zero matrix, respectively.
For block matrix $B$, $[B]_{n \bullet}$ denotes the $n$-th block row of $B$. For example, for 
$B = \begin{bmatrix}
\mathbf{I}_{m} &\mathbf{0}_{m \times n} \\ \mathbf{0}_{n \times m}  &\mathbf{I}_{n}
\end{bmatrix}$, $[B]_{1 \bullet} = [\mathbf{I}_{m}\hspace{3mm} \mathbf{0}_{m \times n}]$ and $[B]_{2 \bullet} = [\mathbf{0}_{n \times m} \hspace{3mm} \mathbf{I}_{n}]$.

\vspace{-2mm}
\subsection{Organization}
The rest of the paper is organized as follows. We introduce the system model and formulate the multi-controller NCS problem in Section \ref{sec:model}. In Section \ref{sec:structure}, we formulate an equivalent problem using the common information approach and provide a dynamic program for this problem. 
We solve the dynamic program in Section \ref{sec:solution}. 
In Section \ref{sec:discussion}, we discuss some key aspects of our approach and results. In Section \ref{sec:numerical}, we present  some numerical experiments.
Section \ref{sec:conclusion} concludes the paper. The proofs of all the technical results of the paper appear in the Appendices.

\section{System Model and Problem Formulation}
\label{sec:model}
Consider a discrete-time system with $N$ plants, $N$ local controllers, $C^1,C^2, \ldots,C^N$, and one remote controller $C^0$ as shown in Fig. \ref{fig:SystemModel}. We use $\mathcal{N}$ to denote the set $\{1,2, \ldots, N\}$ and $\overline{\mathcal{N}}$ to denote $\{0,1,\ldots, N\}$.
The linear dynamics of plant $n \in \mathcal{N}$ are given by
\begin{align}
&X_{t+1}^n \!=\! A^{nn} X_t^n + B^{nn}U^{n}_t+ B^{n0} U^0_t + W_t^n, t=0,\dots,T,
 \label{Model:system}
\end{align}
where $X_t^n\in \R^{d_X^n}$ is the state of the plant $n \in \mathcal{N}$ at time $t$,
$U^n_t \in \R^{d_U^n}$ is the control action of the controller $C^{n}$, $n \in \overline{\mathcal{N}}$, and $A^{nn}, B^{nn}, B^{n0}$, $n \in \mathcal{N}$, are matrices with appropriate dimensions.
$X^n_0$ is a  random vector with distribution $\pi_{X_0^n}$, $W^n_t \in \R^{d_X^n}$ is a zero-mean noise vector  with distribution $\pi_{W^n_t}$.
$X_0^{1:N},W_{0:T}^{1:N}$ are independent random vectors with finite second moments.
Note that we do not assume that $X_0^{1:N}$ and $W_{0:T}^{1:N}$ are Gaussian.

The overall dynamics can be written as
\begin{align}
X_{t+1} = A X_t + BU_t + W_t
\label{overall_state_dynamic}
\end{align}
where $X_t = \vecc(X^{1:N}_t), U_t = \vecc(U^{0:N}_t),W_t = \vecc(W^{0:N}_t)$ and $A,B$ are defined as

\begin{small}
\begin{align}
A &= \begin{bmatrix}
   A^{11} & & \text{\huge0}\\
          & \ddots & \\
     \text{\huge0} & & A^{NN}
\end{bmatrix}, 
B= 
\begin{bmatrix}
B^{10} &  B^{11} & & \text{\huge0}\\
\vdots     &     & \ddots & \\
B^{N0}  &   \text{\huge0} & & B^{NN}
\end{bmatrix}.
\label{eq:thm_matricesABB}
\end{align}
\end{small}

\begin{singlespace}
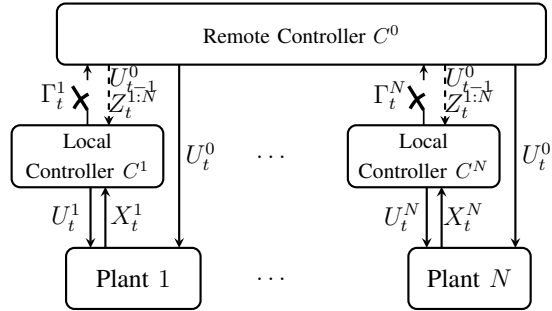
\begin{figure}
\begin{center}
\resizebox{0.45\textwidth}{!}{%
\begin{tikzpicture}[every text node part/.style={align=center}]
\begin{scope}[rotate=180]
\node [rectangle,draw,minimum width=8cm,minimum height=1cm,line width=1pt,rounded corners]at (2,-2) (1) {{\small Remote Controller $C^0$}}; 

\begin{scope}[shift={(0.25,0)}]
\node [rectangle,draw,minimum width=2.2cm,minimum height=1cm,line width=1pt,rounded corners,text width=2.2cm]at (-0.25,0) (2) {{\small Local Controller $C^{N}$}}; 
\node [rectangle,draw,minimum width=2cm,minimum height=1cm,line width=1pt,rounded corners]at (-1,2) (3) {Plant $N$}; 

\path[thick,->,>=stealth,line width=1pt]
           (-0.3,0.5) edge node {}   (-0.3,1.5) 
           (-1.75,-1.5) edge node {}   (-1.75,1.5) 
     ;
      
      \path[thick,->,>=stealth,line width=1pt]
           (-0.6,-1.5) edge node {}   (-0.6,-0.5) [dashed] 
      ;
      
    \path[thick,->,>=stealth, shift left=.30ex,line width=1pt]
        (-0.6,1.5) edge node {}   (-0.6,0.5); 

\node[] at (.1,1) {$U_t^{N}$};
\node[] at (-0.9,1) {$X^N_t$};
\node[] at (-2.1,0) {$U_t^0$};
\node[] at (.3,-1) {$\Gamma_t^N$};
\node[] at (-1,-.85) {$Z_t^{1:N}$};
\node[] at (-1,-1.25) {$U_{t-1}^0$};
\end{scope}

\begin{scope}[shift={(-0.25,0)}]
\node [rectangle,draw,minimum width=2.2cm,minimum height=1cm,line width=1pt,rounded corners,text width=2.2cm]at (5.75,0) (2) {{\small Local Controller $C^{1}$}}; 
\node [rectangle,draw,minimum width=2.2cm,minimum height=1cm,line width=1pt,rounded corners]at (5,2) (4) {Plant $1$}; 

\path[thick,->,>=stealth,line width=1pt]
           (5.7,0.5) edge node {}   (5.7,1.5)           
           (4.25,-1.5) edge node {}   (4.25,1.5) 
      ;
      
      \path[thick,->,>=stealth,line width=1pt]
           (5.4,-1.5) edge node {}   (5.4,-.5)   [dashed]        
      ;
      
    \path[thick,->,>=stealth, shift left=.30ex,line width=1pt]
        (5.4,1.5) edge node {}   (5.4,0.5) ; 

\node[] at (6.1,1) {$U_t^{1}$};
\node[] at (5.1,1) {$X^1_t$};
\node[] at (3.9,0) {$U_t^0$};
\node[] at (6.3,-1) {$\Gamma_t^1$};
\node[] at (5.0,-.85) {$Z_t^{1:N}$};
\node[] at (5.0,-1.25) {$U_{t-1}^0$};
\end{scope}

\node[] at (2.5,2) {$\ldots$};
\node[] at (2.5,0) {$\ldots$};
\end{scope}
\draw [line width=0.8pt,addarrow] (-5.5,0.5) to[cspst] (-5.5,1.5); 
\draw [line width=0.8pt,addarrow] (0,0.5) to[cspst] (0,1.5); 
%
\end{tikzpicture}
}
\caption{System model. The binary random variables $\Gamma_t^{1:N}$ indicate whether packets are transmitted successfully.}
\label{fig:SystemModel}
\end{center}
\end{figure}
\end{singlespace}
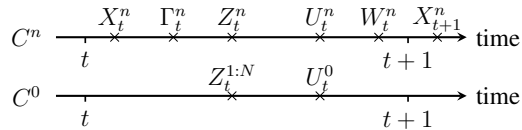
\begin{figure}[h]
\begin{center}
\resizebox{0.43\textwidth}{!}{%
\begin{tikzpicture}[node distance=1cm]
\begin{scope}[scale=0.99, every node/.append style={transform shape}]
\draw [thick,->,>=stealth,line width=1pt] (1,1) -- (8,1) node[right]{time};
\draw [thick] (1.5,1) -- (1.5,0.9) node[below] {$t$};
\draw [thick] (7,1) -- (7,0.9) node[below] {$t+1$};

\draw [thick,->,>=stealth,line width=1pt] (1,0) -- (8,0) node[right]{time};
\draw [thick] (1.5,0) -- (1.5,-0.1) node[below] {$t$};
\draw [thick] (7,0) -- (7,-0.1) node[below] {$t+1$};


\node (Xt) at (2,1.3) {$X_t^n$};
\draw (2,1) node[cross=2.5pt] {};

\node (Gamt) at (3,1.3) {$\Gamma_t^n$};
\draw (3,1) node[cross=2.5pt] {};


\node (Zt) at (4,1.3) {$Z_t^n$};
\draw (4,1) node[cross=2.5pt] {};

\node (Zt) at (4,0.3) {$Z_t^{1:N}$};
\draw (4,0) node[cross=2.5pt] {};

%
%

\node (UtL) at (5.5,1.3) {$U_t^{n}$};
\draw (5.5,1) node[cross=2.5pt] {};

\node (UtL) at (5.5,0.3) {$U_t^{0}$};
\draw (5.5,0) node[cross=2.5pt] {};

\node (Wt) at (6.5,1.3) {$W_t^{n}$};
\draw (6.5,1) node[cross=2.5pt] {};

\node (Xtp) at (7.5,1.3) {$X_{t+1}^n$};
\draw (7.5,1) node[cross=2.5pt] {};

\node (CL) at (.5,1) {$C^{n}$} ;
\node (CR) at (.5,0) {$C^0$} ;


\end{scope}
\end{tikzpicture}
}
\caption{Time-ordering of relevant variables.}
\label{fig:timing}

\end{center}
\end{figure}

\emph{Communication Model:} At each time $t$ the local controller $C^{n}$, $n \in \mathcal{N}$, perfectly observes the state $X^n_t$ and sends the observed state to the remote controller $C^0$ through an unreliable channel with link failure probability $p^n$. 
Let $\Gamma_t^n$ be a Bernoulli random variable describing the state of this channel with $\Gamma_t^n=0$ when the link is broken and $\Gamma_t^n=1$ otherwise.
We assume that $\Gamma_{0:T}^{1:N}$ are independent  random variables and they are independent of $X_0^{1:N}$and $W_{0:T}^{1:N}$.
Furthermore, let $Z_t^n$ be the output of the channel between the local controller $C^{n}$ and the remote controller $C^0$ at time $t$. Thus, we have
 \begin{align}
\Gamma_t^n = &\left\{\begin{array}{ll}
1 & \text{ with probability }(1-p^n),\\
0 & \text{ with probability }p^n.
\end{array}\right.
\\
Z_t^n = &
\left\{\begin{array}{ll}
X_t^n & \text{ when } \Gamma_t^n = 1,\\
\emptyset & \text{ when } \Gamma_t^n = 0.
\end{array}\right.
\label{Model:channel}
\end{align}

Unlike the unreliable uplinks, we assume that there exist perfect links from $C^0$ to $C^{n}$, for $n \in \mathcal{N}$. Therefore, $C^0$ can share $Z_t^{1:N}$ and $U_{t-1}^0$ with all local controllers $C^{1:N}$.
All controllers select their control actions at time $t$ after observing $Z_t^{1:N}$. 
A schematic of the time ordering of  variables is shown in Fig. \ref{fig:timing}. 
We assume that for all $n \in \mathcal{N}$, the links from controllers $C^{n}$ and $C^0$ to the plant $n$ are perfect.

\emph{Information structure and cost:} Let $H^{n}_t$ denote the information available to controller $C^{n}$, $n \in \overline{\mathcal{N}}$, to make decisions at time $t$.
Then,
\begin{align}
H^{n}_t&= \{X^n_{0:t}, U^{n}_{0:t-1}, Z_{0:t}^{1:N}, U^0_{0:t-1}\}, \hspace{2mm} n \in \mathcal{N} \notag \\
H^0_t &= \{Z_{0:t}^{1:N}, U^0_{0:t-1}\}. 
\label{Model:info}
\end{align}
Let $\mathcal{H}^{n}_t$ be the space of all possible realizations of $H_t^n$.
Then, $C^{n}$'s actions are selected according to
\begin{align}
U^{n}_t &= g^{n}_t(H^{n}_t), \hspace{2mm}  n \in \overline{\mathcal{N}},
\label{Model:strategy}
\end{align}
where $g^{n}_t:\mathcal{H}^{n}_t  \to  \R^{d_U^n}$ is a Borel measurable mapping.
The collection of mappings $g_0^n, \ldots, g_T^n$ is called the strategy of controller $C^n$ and is denoted by $g^n$. The collection of all controllers' strategies, $g^{0:N}$, is called the strategy profile\footnote{When it is clear from the context, we will simply use $g$ instead of $g^{0:N}$.}.

The instantaneous cost $c_t(X_t^{1:N}, U^{0:N}_t)$ of the system is a general quadratic function given by
\begin{align}
&c_t(X_t^{1:N}, U^{0:N}_t) = 
S_t^\tp R_tS_t, ~\mbox{where} \notag
\\
&S_t = \vecc(X_t^{1:N},U^{0:N}_t), R_t = \left[\begin{array}{lll}
R^{XX}_t & R^{XU}_t \\
R^{UX}_t & R^{UU}_t 
\end{array}\right],
\label{cost_function}
\end{align}
and
\begin{align}
R^{XX}_t &= \begin{bmatrix}
R^{X^1X^1}_t &\ldots &R^{X^1X^N}_t, \notag \\
\vdots & \ddots & \vdots \\
R^{X^NX^1}_t & \ldots & R^{X^NX^N}_t
\end{bmatrix} =: [R^{X^i X^j}_t]_{i,j \in \mathcal{N}},  \notag \\
R^{XU}_t &= (R^{UX}_t)^{\tp} = [R^{X^i U^j}_t]_{i \in \mathcal{N}, j \in  \overline{\mathcal{N}}}, R^{UU}_t = [R^{U^iU^j}_t]_{i,j \in \overline{\mathcal{N}}}.
\label{matrix_structure}
\end{align}
$R_t$ is a symmetric positive semi-definite (PSD) matrix and $R_t^{UU}$ is a symmetric positive definite (PD) matrix. 

\emph{Problem Formulation:} The performance of strategies $g^{n}:=g^{n}_{0:T}$, $n \in \overline{\mathcal{N}}$, is measured by the total expected cost over a finite horizon $T$:
\begin{align}
J(g^{0:N})=\ee^{g^{0:N}}\left[\sum_{t=0}^T c_t(X^{1:N}_t,U^{0:N}_t)\right].
\label{Model:obj}
\end{align}

Let $\mathcal{G}^{n} $ be the set of all possible control strategies for $C^{n}$, $n \in  \overline{\mathcal{N}}$.
Then, the optimal control problem can be formally defined as follows.

\begin{problem}
\label{problem1}
For the system model described above by \eqref{Model:system}-\eqref{Model:obj}, we would like to solve the following strategy optimization problem,
\begin{align}
\inf_{g^{0} \in \mathcal{G}^{0},\ldots,g^{N} \in \mathcal{G}^{N}  } J(g^{0:N}).
\label{Model:optimization_goal}
\end{align}
\end{problem}

\begin{remark}
Without loss of optimality, we can restrict attention to strategy profiles $g^{0:N}$ that ensure a finite expected cost at each time step. Because $R_t$ is positive semi-definite and $R_t^{UU}$ is positive definite, finite expected cost at each time $t$ is equivalent to 
\begin{align}
\ee^{g^{0:N}} [(U_t^n)^{\tp} U_t^n] 
= \ee^{g^{0:N}} [g_t^n(H_t^n)^{\tp} g_t^n(H_t^n)] 
&< \infty, \notag \\
& \hspace{-1cm} \forall n \in \overline{\mathcal{N}}, \forall t.
\label{strategy_condition}
\end{align}
Therefore, in the subsequent analysis we will implicitly assume that the strategy profile under consideration, $g^{0:N}$, ensures that for all time $t$ and for all $n \in \overline{\mathcal{N}}$, $g^{n}_t:\mathcal{H}^{n}_t \to  \R^{d_U^n}$
has finite second moments, that is, \eqref{strategy_condition} holds.
\end{remark}

Problem \ref{problem1} is a  decentralized optimal control problem with $(N+1)$ controllers. Decentralized optimal control problems are generally believed to be hard because (i) linear strategies may not be globally optimal and (ii) the strategy optimization problem may be a non-convex problem over infinite dimensional spaces \cite{mahajan2015sufficient}. For decentralized linear-quadratic-Guassian (LQG) control problems with partially-nested information structure,  however, linear control strategies are known to be optimal \cite{HoChu:1972}. An information structure is partially-nested if whenever the action of a controller affects the information of another controller, the latter knows whatever the former knows.  
Note that Problem \ref{problem1} is not a  partially nested LQG problem.
In particular, $C^{n}$'s action $U^{n}_{t-1}$, $n \in \mathcal{N}$, affects $X^n_t$, 
and consequently, it affects $Z^n_t$.
Since $Z^n_t$ is a part of the remote controller $C^0$'s information $H^0_t$ at $t$ but $H^{n}_{t-1} \not \subset H^0_t$, the information structure in Problem \ref{problem1} is not partially nested.
Furthermore, in Problem \ref{problem1}, $X_0^{1:N}$ and $W_{0:T}^{1:N}$ are not necessarily Gaussian. 
Therefore, we cannot a priori assume that linear control strategies are  optimal for Problem \ref{problem1}. This means we have to optimize over the full space of control strategies rather than the finite-dimensional subspace of linear strategies.

Our approach to Problem \ref{problem1} is based on the common information approach \cite{nayyar2013decentralized} for decentralized decision-making.
We identify the common information among the $N+1$ controllers and use it to define a common belief on the system state. This common belief can serve as an information state for a dynamic program that characterizes optimal control strategies. Even though our conceptual approach is borrowed from \cite{nayyar2013decentralized}, we have to  deal with  the infinite-dimensional strategy spaces of our problem and we cannot fully rely on the arguments in \cite{nayyar2013decentralized} that explicitly only deal with  finite strategy spaces.

\section{Equivalent Problem and Dynamic Program}
\label{sec:structure}

We first provide a structural result for the local controllers' strategies.
\begin{lemma}
\label{lm:first_structure}
For $n \in \mathcal{N}$, let $\hat H^{n}_t = \{X_{t}^n, H^0_t\}$, and $\hat{\mathcal{G}}^{n}=\{g^{n} \in \mathcal{G}^{n}: g^{n}_t \text{ depends only on }\hat H^{n}_t\}$.
Then,
\begin{align}
\inf_{g^{0} \in \mathcal{G}^{0},\ldots,g^{N} \in \mathcal{G}^{N}  } J(g^{0:N}) =\inf_{\substack{g^0\in
\mathcal{G}^0
\\ g^{1}\in
\mbox{\normalsize{$\hat{\mathcal{G}}^{1}$}}, \ldots, g^{N}\in 
\mbox{\normalsize{$\hat{\mathcal{G}}^{N}$}}
}} J(g^{0:N}).
\end{align}
\vspace{-2mm}
\end{lemma}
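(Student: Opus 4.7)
The plan is to use a person-by-person argument combined with a backward induction. Fix an arbitrary $g^{0} \in \mathcal{G}^{0}$ and $g^{-n} \in \prod_{m\in\mathcal{N}\setminus\{n\}} \mathcal{G}^m$, and show that the resulting one-sided infimum of $J$ over $g^n \in \mathcal{G}^n$ is attained (or approached) by strategies in $\hat{\mathcal{G}}^n$. Iterating this replacement for $n = 1,2,\ldots,N$ then yields the equality in the lemma.

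A preliminary observation is that under any $g^n$, the variables $U^n_{0:t-1}$ are measurable functions of $(X^n_{0:t-1}, H^0_{t-1})$, so $H^n_t$ and $(X^n_{0:t}, H^0_t)$ generate the same $\sigma$-algebra. The central step is then the following conditional independence property: under the fixed $g^0, g^{-n}$ and any $g^n$, the private history $(X^n_{0:t}, U^n_{0:t-1})$ and the other-plants' history $(X^{-n}_{0:t}, U^{-n}_{0:t-1})$ are conditionally independent given $H^0_t$. This holds because (i) once we condition on the realized $Z^{1:N}_{0:t}$ and $U^0_{0:t-1}$ in $H^0_t$, plant $n$'s trajectory is driven by the exogenous $(X^n_0, W^n_{0:t-1}, \Gamma^n_{0:t})$ while plants $m\neq n$ are driven by $(X^m_0, W^m_{0:t-1}, \Gamma^m_{0:t})$, and these sources are mutually independent by assumption; (ii) $U^n$ does not enter the dynamics of $X^{-n}$; and (iii) $C^m$'s information depends on $X^n$ only through $Z^n$, which already lies in $H^0_t$. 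In particular, $P(X^{-n}_t, U^{-n}_t \mid H^n_t) = P(X^{-n}_t, U^{-n}_t \mid H^0_t)$.

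I would then carry out a backward induction on $t$. Setting $V_{T+1} \equiv 0$ and $V_t(H^n_t) := \inf_{u^n} \ee[c_t + V_{t+1}(H^n_{t+1}) \mid H^n_t, U^n_t = u^n]$, the claim is that $V_t(H^n_t)$ depends only on $\hat H^n_t$ and that an optimizing action can be chosen as a measurable function of $\hat H^n_t$. At time $T$, the cost $c_T$ depends on $X^n_T$, $U^n_T$, and $U^0_T = g^0_T(H^0_T)$, all functions of $(\hat H^n_T, U^n_T)$, as well as on $(X^{-n}_T, U^{-n}_T)$ whose conditional law given $H^n_T$ equals its conditional law given $H^0_T$; hence $\ee[c_T \mid H^n_T, U^n_T]$ is a function of $(\hat H^n_T, U^n_T)$ only, giving the base case. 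For the inductive step, $X^n_{t+1}$ and $Z^n_{t+1}$ are deterministic functions of $(\hat H^n_t, U^n_t, W^n_t, \Gamma^n_{t+1})$, while the conditional law of the remaining increment to $H^0_{t+1}$, namely $Z^{-n}_{t+1}$, given $(H^n_t, U^n_t)$ depends only on $\hat H^n_t$ by the conditional independence above. Combining this with the inductive hypothesis $V_{t+1}(H^n_{t+1}) = \hat V_{t+1}(\hat H^n_{t+1})$ makes $\ee[c_t + V_{t+1} \mid H^n_t, U^n_t]$ a function of $(\hat H^n_t, U^n_t)$ only, closing the induction.

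The hard part will be making the conditional independence and measurable-selection arguments fully rigorous on the infinite-dimensional spaces $\mathcal{H}^n_t$ and $\mathcal{H}^0_t$ under arbitrary Borel measurable $g^0, g^{-n}$, and in particular showing that $\inf_{g^n \in \mathcal{G}^n} J$ is not strictly smaller than $\inf_{g^n \in \hat{\mathcal{G}}^n} J$ (the reverse inequality is trivial from $\hat{\mathcal{G}}^n \subset \mathcal{G}^n$). I would also need to verify that the reduced strategies constructed pointwise from the dynamic program preserve the finite-second-moment condition \eqref{strategy_condition} throughout the induction.
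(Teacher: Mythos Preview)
Your proposal is correct and closely parallels the paper's argument. Both use a person-by-person reduction and the same key conditional-independence step (your claim that $P(X^{-n}_t, U^{-n}_t \mid H^n_t) = P(X^{-n}_t, U^{-n}_t \mid H^0_t)$ is exactly the content of the paper's Claim~\ref{lm:independe}). The paper's primary proof then appeals to POMDP belief-state sufficiency---showing that the conditional law of the ``full state'' $(X^{-n}_{0:t-1}, X^{1:N}_t, H^0_t)$ given $H^n_t$ depends only on $(X^n_t, H^0_t)$---whereas you carry out the dynamic program explicitly by backward induction; the paper also sketches an alternative proof via a perfectly-observed MDP with state $(X^n_t, H^0_t)$, which is essentially your route. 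Either path works, and the measurable-selection and second-moment issues you flag are indeed the points where care is needed but are not obstacles.
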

\begin{proof}
See Appendix \ref{Proof_Lm_first_structure}.
\end{proof}

Due to  Lemma \ref{lm:first_structure}, we only need to consider strategies $g^{n} \in \hat{\mathcal{G}}^{n}$ for the local controller $C^{n}$, $n \in \mathcal{N}$. That is, the local controller $C^{n}$ only needs to use $\hat H^{n}_t = \{X_{t}^n, H^0_t\}$ to make the decision at $t$.

According to the information structure \eqref{Model:info} and Lemma \ref{lm:first_structure}, $H_t^0$ is the common information among $C^{0:N}$, and $X_t^n$ (which is $\hat H_t^{n} \setminus H_t^0$) is the \emph{private} information used by the local controller $C^{n}$ in its decision-making.
Note that $C^0$ has no private information (since $H_t^0 \setminus H_t^0 = \emptyset$).
Following the common information approach \cite{nayyar2013decentralized}, we construct below an equivalent centralized problem using the controllers' common information.
\vspace{-2mm}
\subsection{Equivalent Centralized Problem}

Consider arbitrary control strategies $g^{n} \in \mathcal{\hat G}^{n}$, $n \in \mathcal{N},$ and $g^0 \in \mathcal{G}^0$ for the local and the remote controllers, respectively. Under these strategies, $U_t^n$ can be written as
\begin{align}
U_t^{n} &= g_t^{n}(X_t^n, H_t^0)
= \ee^{g} [g_t^{n}(X_t^n, H_t^0) \vert H_t^0] 
\notag \\
&+ \Big\{ g_t^{n}(X_t^n, H_t^0) -  \ee^{g} [g_t^{n}(X_t^n, H_t^0) \vert H_t^0] \Big\}.
\label{decomposition}
\end{align}
We can rewrite \eqref{decomposition} as 
\begin{align}
U_t^{n} = \bar g_t^{n}(H_t^0) + \tilde{g}_t^{n}(X_t^n, H_t^0),
\label{decomposition_abbr}
\end{align}
where 
\begin{align}
&\bar g_t^{n}(H_t^0) = \ee^{g} [g_t^{n}(X_t^n, H_t^0) \vert H_t^0], \notag \\
&\tilde{g}_t^{n}(X_t^n, H_t^0) = g_t^{n}(X_t^n, H_t^0) -  \ee^{g} [g_t^{n}(X_t^n, H_t^0) \vert H_t^0].
\end{align}
Observe that $\tilde{g}_t^{n}(X_t^n, H_t^0) $ is conditionally zero-mean given $H_t^0$, that is, $\ee^g[\tilde{g}_t^{n}(X_t^n, H_t^0) \vert H_t^0 ] =0$.

Note that $\bar g_t^{n}(H_t^0)$ is the conditional mean of $g_t^{n}(X_t^n, H_t^0)$ given the remote controller's information $H_t^0$ and $\tilde{g}_t^{n}(X_t^n, H_t^0)$ can be interpreted as the deviation of $g_t^{n}(X_t^n, H_t^0)$ from the mean $\bar g_t^{n}(H_t^0)$. With this interpretation, \eqref{decomposition_abbr} suggests that, at each time $t$, the problem of finding optimal control action $U_t^{n}$ for $C^{n}$ is equivalent to the problem of finding ``mean value" of $U_t^{n}$ and ``deviation" of $U_t^{n}$ from the mean value.

We will use the above representation of $g_t^{n}$ in terms of $\bar g_t^{n}$ and $\tilde{g}_t^{n}$ to formulate a centralized decision-making problem. In the centralized problem, the remote controller is the only decision-maker. At each time $t$, given the realization $h_t^0$ of the remote controller's information, it makes three decisions:
\begin{enumerate}
\item Remote controller's control action $u_t^0 = \phi_t^0 (h_t^0)$,
\item Mean value of every local controller's control action $\bar u_t^{n} = \bar \phi_t^{n}(h_t^0)$, $n \in \mathcal{N}$,
\item A ``deviation from the mean value" mapping $q_t^n \in \mathcal{Q}^n$, $n \in \mathcal{N}$, where \\ 
$\mathcal{Q}^n =\{q^n:\R^{d_X^n} \to \R^{d_U^n} \text{, Borel measurable} \}$\footnote{In other words, $\mathcal{Q}^n$ is the set of all Borel measurable functions from $\R^{d_X^n}$ to $\R^{d_U^n}$. }\\ and $q_t^n = \tilde \phi_t^{n}(h_t^0)$.
\end{enumerate}

Based on the above decisions, the control actions applied to the system described by \eqref{Model:system}-\eqref{Model:channel} are:
\begin{itemize}
\item  $u_t^0$, as the control action of the remote controller,
\item $U_t^{n} = \bar u_t^{n} + q_t^n(X_t^n)$, as the control action of the $n$-th local controller, $n \in \mathcal{N}$.
\end{itemize}

We call $u^{prs}_t:=(u^0_t,\bar u^{1:N}_t,q_t^{1:N})$ the prescription at time $t$.
We denote $(\phi^0_{t}, \bar \phi^{1:N}_{t}, \tilde \phi^{1:N}_{t})$ by $\phi^{prs}_{t}$ and write $u^{prs}_t = \phi^{prs}_{t}(h^0_t)$ 
to indicate that the prescription is a function of the common information $h^0_t$.
The functions $(\phi_t^{prs},  \hspace{2pt} 0 \leq t \leq T)$ are collectively referred to as the prescription strategy and denoted by $\phi^{prs}$.
The prescription strategy is required to satisfy the following conditions:
\begin{enumerate}[label=(\mylabel{C}{\arabic*})]
\item $\phi^0 \in \mathcal{G}^0$.
\item Define $\phi_t^{n}(X_t^n, H_t^0):= \bar \phi_t^{n}(H_t^0) + [\tilde \phi_t^{n}(H_t^0)](X_t^n) $. Then,
$\phi^{n} \in \mathcal{\hat G}^{n}$ for any $n\in\mathcal{N}$, where the notation $[\tilde \phi_t^{n}(H_t^0)](X_t^n)$ means that we first use $\tilde \phi_t^{n}(H^0_t)$ to find the deviation mapping $q^n_t$ and then evaluate $q^n_t$ at $X^n_t$.

\item 
We require that for any $t$, 
\begin{align}
\label{zero_mean_condition}
\ee^{\phi^{prs}} \Big\{[\tilde \phi_t^{n}(H_t^0)](X_t^n) \vert H_t^0\Big\} =0,
\end{align}
where $\ee^{\phi^{prs}}$ is the probability measure induced by the prescription strategy $\phi^{prs}$.

\end{enumerate}  

Denote by $\Phi^{prs}$ the set of all prescription strategies satisfying the above conditions. Consider the following problem of optimizing the prescription strategies.
\vspace{-2mm}
\begin{problem}
\label{problem:equivalent}
Consider the system described by \eqref{Model:system}-\eqref{matrix_structure}. Given a prescription strategy $\phi^{prs} \in \Phi^{prs}$, let 
\begin{align}
\Lambda  (\phi^{prs}) =& \ee^{\phi^{prs}} \left[\sum_{t=0}^T c_t^{prs}(X^{1:N}_t,U^{prs}_t)\right],
\label{Model:obj_equivalent}
\end{align}
where for any $x_t^{1:N}$ and $u^{prs}_t=(u^0_t,\bar u^{1:N}_t,q_t^{1:N})$,
\begin{align}
c_t^{prs}(x^{1:N}_t,u^{prs}_t) = c_t\Big(x_t^{1:N},u^{0}_t, \{\bar u^{n}_t+q^{n}_t(x_t^n)\}_{n\in\mathcal{N}} \Big).  
\end{align}

Then, we would like to solve the following optimization problem:

\vspace{-6mm}
\begin{align}
\inf_{\phi^{prs} \in \Phi^{prs}}  \Lambda  (\phi^{prs}).
\label{Model:optimization_equivalent}
\end{align}
\end{problem}

We now note that any feasible prescription strategy in Problem \ref{problem:equivalent} can be used to construct control strategies in Problem \ref{problem1}.
On the other hand, any control strategies in Problem \ref{problem1} can be represented by a prescription strategy in Problem \ref{problem:equivalent}. This equivalence between Problems \ref{problem1} and \ref{problem:equivalent} is formally stated in the following lemma.

\begin{lemma}
\label{thm:translation}
Problems \ref{problem1} and \ref{problem:equivalent} are equivalent in the following sense:
\begin{enumerate}
\item For any control strategies $g^{n} \in \hat{\mathcal{G}}^{n}$ and $g^0 \in \mathcal{G}^0$ in Problem \ref{problem1}, there is a prescription strategy $\phi^{prs}\in \Phi^{prs}$ in Problem \ref{problem:equivalent} such that for $0 \leq t \leq T$,

\vspace{-3mm}
\begin{small}
\begin{align}
& \phi^{0}_t(H_t^0) = g^{0}_t(H_t^0),
\label{eq:gR_equivalent1}
\\
&\bar \phi^{n}_t(H_t^0) = \bar g^{n}_t(H_t^0)
= \ee^{g} [g_t^{n}(X_t^n, H_t^0) \vert H_t^0], \hspace{1mm}\forall n \in \mathcal{N},
\label{eq:gLbar_equivalent1}
\\
&[\tilde \phi^{n}_t(H_t^0)](X_t^n)  = \tilde g^{n}_t(X_t^n,H_t^0)  \notag \\
&= g_t^{n}(X_t^n, H_t^0) -  \ee^{g} [g_t^{n}(X_t^n, H_t^0) \vert H_t^0]
, \hspace{1mm}\forall n \in \mathcal{N},
\label{eq:gL_equivalent1}
\\
&\Lambda (\phi^{prs}) = J(g^{0:N}).
\label{eq:obj_equivalent1}
\end{align}
\end{small}
\vspace{-3mm}
\item Conversely, for any prescription strategy $\phi^{prs} \in \Phi^{prs}$ in Problem \ref{problem:equivalent}, there are  
control strategies $g^{n} \in \hat{\mathcal{G}}^{n}$ and $g^0 \in \mathcal{G}^0$ in Problem \ref{problem1} such that for $0 \leq t \leq T$,

\vspace{-3mm}
\begin{small}
\begin{align}
& g^{0}_t(H_t^0) = \phi^{0}_t(H_t^0),
\label{eq:gR_equivalent2}
\\
& g^{n}_t(X_t^n, H_t^0)=  \bar g^{n}_t(H_t^0) + \tilde g^{n}_t(X_t^n,H_t^0) \notag \\
& \hspace{50pt} = \bar \phi^{n}_t(H_t^0) + [\tilde \phi^{n}_t(H_t^0)](X_t^n) , \hspace{1mm}\forall n \in \mathcal{N},
\label{eq:gL_equivalent2}
\\
& J(g^{0:N}) = \Lambda (\phi^{prs}).
\label{eq:obj_equivalent2}
\end{align}
\end{small}
\end{enumerate}
\end{lemma}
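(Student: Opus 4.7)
The plan is to prove the two directions of the equivalence separately, and in each case argue that the prescription/control strategies induce the same joint probability distribution on the trajectory $(X^{1:N}_{0:T}, U^{0:N}_{0:T}, Z^{1:N}_{0:T}, \Gamma^{1:N}_{0:T})$, which immediately yields equality of the expected costs in \eqref{eq:obj_equivalent1} and \eqref{eq:obj_equivalent2}.

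For part (1), starting from $g^0 \in \mathcal{G}^0$ and $g^n \in \hat{\mathcal{G}}^n$ for $n \in \mathcal{N}$, I would take the formulas in \eqref{eq:gR_equivalent1}--\eqref{eq:gL_equivalent1} as the definition of the candidate prescription pieces: $\phi^0_t := g^0_t$, $\bar \phi^n_t(H^0_t) := \ee^g[g^n_t(X^n_t, H^0_t) \mid H^0_t]$, and $[\tilde \phi^n_t(H^0_t)](x) := g^n_t(x, H^0_t) - \bar \phi^n_t(H^0_t)$. The first non-trivial task is to verify that $\bar \phi^n_t$ is well-defined as a Borel measurable function of $H^0_t$; this uses the finite second-moment condition \eqref{strategy_condition} so that the conditional expectation exists and admits a Borel measurable version. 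Next, the map $H^0_t \mapsto \tilde \phi^n_t(H^0_t) \in \mathcal{Q}^n$ must be shown to be measurable in a sense consistent with its downstream use: concretely, it suffices to check that $(x, h^0_t) \mapsto [\tilde \phi^n_t(h^0_t)](x) = g^n_t(x, h^0_t) - \bar \phi^n_t(h^0_t)$ is jointly Borel, which follows from the joint Borel measurability of $g^n_t$ and of $\bar \phi^n_t$. Condition (C1) is then immediate, (C2) follows because the original $g^n_t$ is recovered by adding $\bar \phi^n_t$ and $\tilde \phi^n_t$, and (C3) is exactly the defining property of $\bar \phi^n_t$ as a conditional expectation.

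Having constructed $\phi^{prs}\in\Phi^{prs}$, I would then prove by induction on $t$ that, for every $t$, the joint law of $(X^{1:N}_{0:t}, U^{0:N}_{0:t}, Z^{1:N}_{0:t}, \Gamma^{1:N}_{0:t})$ under $\phi^{prs}$ equals its joint law under $g$. The base case uses the common distributions $\pi_{X^n_0}$, $\pi_{W^n_t}$, and the channel parameters $p^n$, which are problem data and not affected by the strategy. For the inductive step, the common information $H^0_t$ is a deterministic function of the history so its conditional distribution agrees in both problems; by construction of $\bar\phi^n_t$ and $\tilde\phi^n_t$ we have $\bar \phi^n_t(H^0_t)+[\tilde\phi^n_t(H^0_t)](X^n_t)=g^n_t(X^n_t,H^0_t)$ almost surely, so $U^n_t$ and $U^0_t$ have the same conditional laws given the past in both problems; the dynamics \eqref{overall_state_dynamic} and channel rule \eqref{Model:channel} then propagate equality to time $t+1$. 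Equation \eqref{eq:obj_equivalent1} is then immediate because $c^{prs}_t$ is defined precisely so that $c^{prs}_t(X^{1:N}_t, U^{prs}_t) = c_t(X^{1:N}_t, U^{0:N}_t)$ on the event that $U^n_t = \bar u^n_t + q^n_t(X^n_t)$.

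Part (2) is the easier direction. Given $\phi^{prs}\in\Phi^{prs}$, define $g^0_t := \phi^0_t$ and $g^n_t(x, h^0_t) := \bar \phi^n_t(h^0_t) + [\tilde \phi^n_t(h^0_t)](x)$ for $n\in\mathcal{N}$. Borel measurability of $g^n_t$ follows from the joint measurability built into condition (C2), which also ensures $g^n_t\in\hat{\mathcal{G}}^n$; (C1) gives $g^0_t\in\mathcal{G}^0$. A parallel induction to the one above then identifies the joint laws under $g$ and under $\phi^{prs}$, and hence \eqref{eq:obj_equivalent2}.

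The principal obstacle, and the reason some care is required beyond the pointwise rewrite in \eqref{decomposition}--\eqref{decomposition_abbr}, is the measurability of the prescription strategy in part (1): $\tilde\phi^n_t$ must be viewed as a mapping from $\mathcal{H}^0_t$ into the function space $\mathcal{Q}^n$, and the conditional expectation defining $\bar\phi^n_t$ must be pinned down to a specific Borel measurable version before the decomposition is well posed. Once that is handled, the rest is a careful bookkeeping argument that the two formulations induce identical joint distributions of all relevant random variables through the horizon $T$.
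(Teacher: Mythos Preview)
Your proposal is correct and follows the standard common information construction; the paper itself does not give a proof but simply remarks that it is a straightforward extension of the arguments in \cite{nayyar2013decentralized}, which is precisely the route you outline. If anything, you supply more detail than the paper does---in particular on the measurability of $\bar\phi^n_t$ and the joint-law induction---so nothing further is needed.
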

\vspace{-2mm}
\begin{proof}
The proof is a straightforward extension of arguments used in \cite{nayyar2013decentralized}, and is therefore omitted.
\end{proof}

\vspace{-4mm}
\subsection{Information State for Problem \ref{problem:equivalent}}

Since Problem \ref{problem:equivalent} is a centralized decision-making problem for the remote controller $C^0$, $C^0$'s belief on the system states can be used as an information state for decision-making. Note that $C^0$'s information at any time $t$ is the \emph{common information} $H^0_t$.
Therefore, we define the common belief $\Theta_t$ as the conditional probability distribution of $X_t^{1:N}$ given $H^0_t$. That is, under prescription strategies $\phi^{prs}_{0:t-1} $ until time $t-1$, for any measurable set $E \subset \prod_{n=1}^N \R^{d_X^n}$,
\begin{align}
\Theta_t (E) := 
\prob^{\phi^{prs}_{0:t-1}}(\vecc(X_t^{1:N}) \in E | H^0_{t}).
\label{eq:thetat}
\end{align}
Let $\Theta_t^n$ denote the marginal common belief on $X_t^n$. That is, for any measurable set $E^n \subset \R^{d_X^n}$
\begin{align}
\Theta_t^n(E^n) := 
\prob^{\phi^{prs}_{0:t-1}}(X_t^n \in E^n | H^0_{t}).
\label{eq:thetat_margin}
\end{align}
Then, for a given realization $h^0_t$ of $H^0_t$, the corresponding realization $\theta_t$ of $\Theta_t$ 
belongs to $\Delta(\prod_{n=1}^N \R^{d_X^n})$ and the realization $\theta^n_t$ of 
$\Theta^n_t$ belongs to $\Delta(\R^{d_X^n}), n \in \mathcal{N}$. 

Since the plants' dynamics are only coupled through the remote controller's actions which belongs to the common information, the common belief has the following conditional independence property.
\begin{lemma}
\label{lm:independence_of_states}
Consider a feasible prescription strategy $\phi^{prs} \in \Phi^{prs}$. Then, the random vectors $X_t^{1:N}$ are conditionally independent given the common information $H^0_t$. 
That is, for any measurable sets $E^n \subset \R^{d_X^n}$, $n \in \mathcal{N}$,
\begin{align}
&\Theta_t(\prod_{n=1}^N E^n)
=\prod_{n=1}^N  \Theta_t^n(E^n)
\label{eq:indep}
\end{align}
where $\Theta_t$ and $\Theta_t^n$ are given by \eqref{eq:thetat} and \eqref{eq:thetat_margin}.
\end{lemma}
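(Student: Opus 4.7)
The plan is to prove the conditional independence by induction on $t$. The key insight is that, under any feasible prescription strategy, the entire interaction of plant $n$ with the rest of the system at time $t$ is mediated only through the common information $H_t^0$, so plants remain decoupled once we condition on $H_t^0$.

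\textbf{Base case ($t=0$).} At time $0$, $H_0^0 = \{Z_0^{1:N}\}$ (no past actions yet). The primitive random variables $X_0^1,\dots,X_0^N,\Gamma_0^1,\dots,\Gamma_0^N$ are mutually independent by the model's assumptions. Since $Z_0^n$ is a deterministic function of only $(X_0^n,\Gamma_0^n)$, the pairs $(X_0^n,Z_0^n)$ for $n\in\mathcal{N}$ are mutually independent. A standard property of independent pairs then gives that $X_0^{1:N}$ are conditionally independent given $Z_0^{1:N}=H_0^0$.

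\textbf{Inductive step.} Assume $X_t^{1:N}$ are conditionally independent given $H_t^0$ and that the marginals match $\Theta_t^n$. Observe that under a feasible prescription strategy, $U_t^0=\phi_t^0(H_t^0)$ is $H_t^0$-measurable, and $U_t^n=\bar\phi_t^n(H_t^0)+[\tilde\phi_t^n(H_t^0)](X_t^n)$ depends only on $H_t^0$ and $X_t^n$. Substituting into \eqref{Model:system}, $X_{t+1}^n$ is a (measurable) function of $H_t^0$, $X_t^n$, and $W_t^n$; similarly $Z_{t+1}^n$ is a function of $H_t^0$, $X_t^n$, $W_t^n$, and $\Gamma_{t+1}^n$. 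Since $\{W_t^n\}_{n\in\mathcal{N}}$ and $\{\Gamma_{t+1}^n\}_{n\in\mathcal{N}}$ are mutually independent, independent of each other, and independent of $(X_0^{1:N},W_{0:t-1}^{1:N},\Gamma_{0:t}^{1:N})$ (hence of $H_t^0$), the triples $(X_t^n,W_t^n,\Gamma_{t+1}^n)$ are, by the inductive hypothesis, conditionally mutually independent given $H_t^0$. Consequently the pairs $(X_{t+1}^n,Z_{t+1}^n)_{n\in\mathcal{N}}$ are conditionally independent given $H_t^0$.

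\textbf{Conditioning on the new information.} Finally, $H_{t+1}^0=H_t^0\cup\{Z_{t+1}^{1:N},U_t^0\}$, and $U_t^0$ is $H_t^0$-measurable, so conditioning on $H_{t+1}^0$ is the same as conditioning on $(H_t^0,Z_{t+1}^{1:N})$. I will invoke the elementary fact that if pairs $(A_n,B_n)_{n\in\mathcal{N}}$ are mutually independent under some reference distribution, then conditional on $(B_1,\dots,B_N)$, the $A_n$'s remain mutually conditionally independent with each $A_n$ depending only on $B_n$. Applying this (in its regular-conditional-distribution version, valid since we are on Euclidean spaces) with $A_n=X_{t+1}^n$ and $B_n=Z_{t+1}^n$ and under the conditional law given $H_t^0$, yields the product-form representation \eqref{eq:indep} for $\Theta_{t+1}$.

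\textbf{Anticipated difficulties.} The substantive content is essentially a measurability bookkeeping exercise; the main care needed is (i) verifying that the $H_t^0$-measurability of $U_t^0$ and of the prescriptions $\bar\phi_t^n(H_t^0),\tilde\phi_t^n(H_t^0)$ cleanly separates the plants after conditioning on $H_t^0$, and (ii) justifying the passage from ``$X_{t+1}^n$'s independent given $H_t^0$'' to ``independent given $H_{t+1}^0$'' via the regular conditional distribution argument on $\mathbb{R}^{d_X^n}$-valued random variables. No new analytic idea is required beyond these standard measure-theoretic facts.
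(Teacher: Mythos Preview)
Your proposal is correct and follows essentially the same inductive route as the paper: the paper derives Lemma~\ref{lm:independence_of_states} from Claim~\ref{lm:independe} in Appendix~\ref{pre_results}, whose proof is precisely an induction on $t$ that exploits the fact that under the prescription strategy $X_{t+1}^n$ is a measurable function of $(H_t^0,X_t^n,W_t^n)$ and then passes to conditioning on $H_{t+1}^0$. The ``elementary fact'' you invoke about conditioning on the $B_n$'s is exactly what the paper isolates as Claim~\ref{lm:conditional_independence} (stated at the $\sigma$-algebra level); the only minor difference is that the paper carries the stronger hypothesis that the full histories $X_{0:t}^n$ are conditionally independent given $H_t^0$ (needed there to cover general strategies $g^n\in\mathcal{G}^n$), whereas for prescription strategies your Markov-state formulation with $X_t^n$ alone suffices.
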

\begin{proof}
The proof is a direct consequence of Part 2 of Claim \ref{lm:independe} in Appendix \ref{pre_results}. 
\end{proof}

\vspace{-6mm}
\begin{remark}
The conditional independence of the states $X_t^{1:N}$ given the common information, as described above in Lemma \ref{lm:independence_of_states}, is similar the conditional independence of the global state given the common information in \cite[Lemma 6]{mahajan2013optimal}. However, our model is different from the one considered in \cite{mahajan2013optimal}.
\end{remark}

From Lemma \ref{lm:independence_of_states}, the joint common belief $\Theta_t$ can be represented by the collection of marginal common beliefs $\Theta_t^{1:N}$.

We show in the following that the marginal common beliefs $\theta_t^n, n\in\mathcal{N},$ can be sequentially updated. 

\begin{lemma}
\label{lm:beliefupdate}
For any feasible prescription strategy $\phi^{prs} \in \Phi^{prs}$ and for any $h_t^0 \in \mathcal{H}_t^0$, we recursively define $\nu_t^n (h_t^0) \in \Delta(\R^{d_X^n})$ as follows:

For any measurable set $E^n \subset \R^{d^n_X}$,
\begin{align}
[\nu_0^n(h_0^0)] (E^n)
&= \left\{\begin{array}{ll}
\pi_{X_0^n}(E^n)
& \text{ if }z^n_{0}=\emptyset, 
\\
\mathds{1}_{E^n}(x^n_0) & \text{ if }z^n_{0}=x^n_{0}.
\end{array}\right.
\label{eq:theta0}
\\
[\nu_{t+1}^n(h_{t+1}^0)] (E^n)
&= [\psi_t^n(\nu_t^n(h_t^0),u^{prs}_t,z_{t+1}^n)](E^n) ,
\label{eq:theta_update}
\end{align}
where $u^{prs}_t= \phi^{prs}_{t}(h^0_t)$ and $\psi_t^n(\nu_t^n(h_t^0),u^{prs}_t,z_{t+1}^n)$ is defined as follows:
\begin{itemize}
\item If $z_{t+1}^n = x_{t+1}^n$, then 
\begin{align}
[\psi_t^n(\nu_t^n(h_t^0),u^{prs}_t,z_{t+1}^n)](E^n)= \mathds{1}_{E^n}(x^n_{t+1}).
\end{align}
\item If $z_{t+1}^n = \emptyset$, then 
\begin{small}
\begin{align}
&[\psi_t^n(\nu_t^n(h_t^0),u^{prs}_t, \emptyset)] (E^n)=
\nonumber
\\
&\int \int  \mathds{1}_{E^n}\big(f^n_t(x_{t}^n, w_{t}^n,u^{prs}_t)\big) 
\nu_t^n(h_t^0)(dx_t^n) \pi_{W_t^n}(dw_t^n),
\label{eq:psit}
\end{align}
\end{small}
where
\begin{align}
&f^n_t(x_{t}^n, w_{t}^n,u^{prs}_t) \notag \\
&= A^{nn} x_{t}^{n} + B^{nn}(\bar u^{n}_t+q_t^n(x_t^n))+B^{n0} u^0_t +w_t^n.
\label{eq:dynamic_function}
\end{align}
\end{itemize}
Then, $\nu_t^n$ is  the  conditional probability distribution of $X^n_t$ given $H^0_t$, that is $[\nu_{t}^n(H_{t}^0)] (E^n) = \prob^{\phi^{prs}_{0:t-1}}(X_t^n \in E^n | H^0_{t})$.
\end{lemma}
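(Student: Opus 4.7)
The plan is to prove the claim by induction on $t$, using at each step the conditional independence of the plant states given the common information (Lemma \ref{lm:independence_of_states}) together with the mutual independence of the primitive random variables $X_0^{1:N}$, $W_{0:T}^{1:N}$, and $\Gamma_{0:T}^{1:N}$.

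For the base case $t=0$, the common information reduces to $H_0^0 = Z_0^{1:N}$. Since $X_0^{1:N}$ are mutually independent and each $\Gamma_0^m$ is independent of everything else, the pairs $(X_0^m, Z_0^m)$ for $m \in \mathcal{N}$ are mutually independent, so conditioning on $Z_0^{1:N}$ is equivalent to conditioning on $Z_0^n$ alone when computing the marginal distribution of $X_0^n$. If $z_0^n = x_0^n$, the state is revealed and the conditional law is the Dirac mass $\mathds{1}_{E^n}(x_0^n)$; if $z_0^n = \emptyset$, the event $\{\Gamma_0^n = 0\}$ is independent of $X_0^n$, so the posterior equals the prior $\pi_{X_0^n}$. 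This matches \eqref{eq:theta0}.

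For the inductive step, assume $[\nu_t^n(H_t^0)](E^n) = \prob^{\phi^{prs}_{0:t-1}}(X_t^n \in E^n \mid H_t^0)$ for every $n$. Because $U_t^0 = \phi_t^0(H_t^0)$ and the entire prescription $u_t^{prs} = \phi_t^{prs}(H_t^0)$ is determined by $H_t^0$, conditioning on $H_{t+1}^0$ is the same as conditioning on $(H_t^0, Z_{t+1}^{1:N})$. Under the induction hypothesis, $X_t^{1:N}$ are conditionally independent given $H_t^0$; moreover, under the prescription, $X_{t+1}^n = f_t^n(X_t^n, W_t^n, u_t^{prs})$ and $Z_{t+1}^n$ is a measurable function of $(X_{t+1}^n, \Gamma_{t+1}^n)$, where $(W_t^n, \Gamma_{t+1}^n)_{n \in \mathcal{N}}$ are mutually independent and independent of $(X_t^{1:N}, H_t^0)$. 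Therefore the pairs $(X_{t+1}^n, Z_{t+1}^n)$ are conditionally independent across $n$ given $H_t^0$, which lets me reduce the computation to
\begin{align*}
\prob\!\left(X_{t+1}^n \in E^n \,\middle\vert\, H_t^0, Z_{t+1}^{1:N}\right)
= \prob\!\left(X_{t+1}^n \in E^n \,\middle\vert\, H_t^0, Z_{t+1}^n\right).
\end{align*}

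Now I split on $z_{t+1}^n$. If $z_{t+1}^n = x_{t+1}^n$, the state is exactly revealed and the conditional probability is $\mathds{1}_{E^n}(x_{t+1}^n)$, matching the first branch of $\psi_t^n$. If $z_{t+1}^n = \emptyset$, independence of $\Gamma_{t+1}^n$ from everything else gives $\prob(X_{t+1}^n \in E^n \mid H_t^0, Z_{t+1}^n = \emptyset) = \prob(X_{t+1}^n \in E^n \mid H_t^0)$; then applying the dynamics $X_{t+1}^n = f_t^n(X_t^n, W_t^n, u_t^{prs})$, using the induction hypothesis for $X_t^n$ and the independence of $W_t^n$ from $(X_t^n, H_t^0)$, Fubini's theorem gives exactly the integral in \eqref{eq:psit}. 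The main obstacle will be making the conditional independence argument fully rigorous, since $Z_{t+1}^n$ is a hybrid random variable taking the value $\emptyset$ with positive probability and otherwise coinciding with the continuously distributed $X_{t+1}^n$; I expect this is handled cleanly by the machinery invoked in the appendix (in particular Claim \ref{lm:independe}), which also underlies Lemma \ref{lm:independence_of_states}, so the proof reduces to carefully unpacking the joint-independence structure and applying Bayes' rule on the two branches.
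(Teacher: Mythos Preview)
Your proposal is correct and follows essentially the same route as the paper's proof: induction on $t$, reduction of conditioning on $H_{t+1}^0=(H_t^0,Z_{t+1}^{1:N})$ to conditioning on $(H_t^0,Z_{t+1}^n)$ via the conditional independence established in Claim~\ref{lm:independe} (stated in the paper as Corollary~\ref{cor_independence}), and then a two-case split on $\Gamma_{t+1}^n$ using independence of $\Gamma_{t+1}^n$ and $W_t^n$ together with the dynamics $X_{t+1}^n=f_t^n(X_t^n,W_t^n,u_t^{prs})$. The only cosmetic difference is that the paper phrases the case split in terms of the indicator $\ind_{\{\Gamma_{t+1}^n=\cdot\}}$ rather than branching on the realization $z_{t+1}^n$, and invokes the disintegration theorem by name where you invoke Fubini; the substance is identical.
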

\begin{proof}
See Appendix \ref{Proof_Lm_belief_update} for a proof.
\end{proof}
Lemma \ref{lm:beliefupdate} implies that the realization $\theta_t^n$ of the belief $\Theta_t^n$ can be updated according to 
\begin{align}
\theta_{t+1}^n = \psi_t^n (\theta_t^n, u_t^{prs}, z_{t+1}^n).
\label{update_theta}
\end{align}

Recall that $\mathcal{Q}^n$ is the space of all measurable functions $q:\R^{d_X^n} \to \R^{d_U^n}$.
We now define the space $\mathcal{Q}^n(\theta^n) \subset \mathcal{Q}^n$ for any $\theta^n \in \Delta(\R^{d_X^n})$ to be

\vspace{-3mm}
\begin{small}
\begin{align}
&\mathcal{Q}^n(\theta^n) 
=
\Big\{q^n:\R^{d_X^n}\!\! \to \R^{d_U^n}\text{ measurable},\int q^n(x^n) \theta^n(d x^n) = 0
\Big\}.
\end{align}
\end{small}
Note that for any feasible prescription strategy $\phi^{prs}  \in \Phi^{prs}$, \eqref{zero_mean_condition} implies that
for almost every realization $h_t^0$ under $\phi^{prs}$,
\begin{align}
\ee^{\phi^{prs}}[q_t^n(X_t^n) \vert h_t^0] =0,
\label{zero_mean_condition_q}
\end{align}
where $q_t^n = \tilde \phi^{n}_t(h^0_t)$. Then, \eqref{zero_mean_condition_q} and \eqref{eq:thetat_margin} imply  that for almost every realization $h_t^0$,
$\int q_t^n(x_t^n) \theta_t^n (dx_t^n) =0$, that is, $q_t^n$ belongs to $\mathcal{Q}^n(\theta_t^n) $.

\subsection{Dynamic Program for Problem \ref{problem:equivalent}}

We can use the collection of marginal common beliefs $\Theta_t^{1:N}$ as an information state to construct a dynamic program for Problem \ref{problem:equivalent}. For that purpose, we will use the following definitions. 

For every $x \in \R^{d_X}$, we use $\rho(x)$ to denote the Dirac-delta distribution at $x$. Then,
for any $E \subset \R^{d_X}$, $[\rho (x)](E) = \mathds{1}_{E}(x)$.

For all $n \in \mathcal{N}$  and any $\theta_t^n \in \Delta(\R^{d_X^n})$, $q_t^n \in \mathcal{Q}^n(\theta^n_t)$, $\bar  u_t^{n} \in \R^{d_U^n}$,   $u_t^0 \in \R^{d_U^0}$,  and  $u^{prs}_t=(u^0_t,\bar u^{1:N}_t,q_t^{1:N})$, we define
\begin{itemize}
\item $\textit{IC}(\theta_t^{1:N}, u_t^{prs}) := 
\int  c_t^{prs}(x_t^{1:N}, u_t^{prs}) \prod_{n\in\mathcal{N}}\theta_t^n(dx_t^n)
$. This function represents the remote controller's expected instantaneous cost at time $t$ when its beliefs on the system states are $\theta_t^{1:N}$ and it selects $u_t^{prs}$.

\item $\alpha_t^n := \psi_t^n(\theta_t^n,u^{prs}_t, \emptyset)$ (see \eqref{update_theta} and note that $\alpha_t^n \in \Delta(\R^{d_X^n})$).
\item For any realization $\gamma_{t+1}^n \in \{0,1\}$ of $\Gamma_{t+1}^n$, $\textit{NB}(\gamma_{t+1}^n, \alpha_t^n, x_{t+1}^n) := (1- \gamma_{t+1}^n) \alpha_t^n + \gamma_{t+1}^n \rho(x_{t+1}^n)$. 
This function represents the next belief equation for $\theta_t^n$. If $\gamma_{t+1}^n = 0$, $\theta_{t+1}^n = \alpha_t^n$ and if $\gamma_{t+1}^n = 1$, $\theta_{t+1}^n = \rho(x_{t+1}^n)$.

\item $\textit{LS}(p^n, \gamma_{t+1}^n) := (p^n)^{1-\gamma_{t+1}^n}(1-p^n)^{\gamma_{t+1}^n}$. 
If $\gamma_{t+1}^n = 0$, this function represents the link failure probability, that is $p^n$. 
If $\gamma_{t+1}^n = 1$, this function represents the probability that link is active, that is $1- p^n$.

\end{itemize}

The following theorem provides a dynamic program for optimal prescription strategies of Problem \ref{problem:equivalent}.

\begin{theorem}
\label{thm:structure}
Suppose there exist functions $ V_{t}:\prod_{m=1}^N \Delta(\R^{d_X^m}) \to \R \text{ for }t=0,1,\dots,T+1$ such that for each $\theta_t^{1:N}\in \prod_{m=1}^N \Delta(\R^{d_X^m})$, the following are true:
\begin{itemize}
\item $V_{T+1}(\theta_t^{1:N}) = 0$,
\item For any $t = 0,1,\dots,T$
\begin{small}
\begin{align}
&V_t(\theta_t^{1:N}) =  \min_{\{q_t^n\in\mathcal{Q}^{n}(\theta^n_t)\}_{n \in \mathcal{N}} } \Big\lbrace \min_{\{\bar u^{n}_t\in\R^{d_U^n}\}_{n \in \mathcal{N}}, u^0_t\in\R^{d_U^0}} \Big \lbrace
\notag\\
& \textit{IC}(\theta_t^{1:N}, u_t^{prs}) 
+  \sum_{\gamma_{t+1}^1\in\{0,1\}} \ldots \sum_{\gamma_{t+1}^N \in\{0,1\}}  \prod_{n \in \mathcal{N}} \textit{LS}(p^n, \gamma_{t+1}^n)
\notag\\
&\times 
\int  V_{t+1} \Big( \Big\{ 
\textit{NB}(\gamma_{t+1}^n, \alpha_t^n, x_{t+1}^n)
  \Big\}_{n \in \mathcal{N}} \Big)
\prod_{n\in\mathcal{N}}\alpha_t^n(dx_{t+1}^n)  \Big \rbrace \Big \rbrace,
\label{eq:DP_V}
\end{align}
\end{small}
where $u^{prs}_t=(u^{0}_t,\bar u^{1:N}_t,q_t^{1:N})$.
\end{itemize}
Further, suppose there exists a feasible prescription strategy
$\phi^{prs*}\in \Phi^{prs}$
such that for any realization $h^0_t\in\mathcal{H}^0_t$ and its corresponding common beliefs $\theta_t^n=\nu_t^{n}(h_t^0)$, $n \in \mathcal{N}$, (as defined by Lemma \ref{lm:beliefupdate}), the prescription $u^{prs*}_t=(u^{0*}_t,\bar u^{1:N*}_t,q_t^{1:N*})=\phi^{prs*}(h^0_t)$
achieves the minimum in the definition of $V_{t}\big(\theta^{1:N}_t\big)$. Then, $\phi^{prs*}$ is an optimal prescription strategy for Problem \ref{problem:equivalent}.
\end{theorem}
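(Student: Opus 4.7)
The plan is to prove Theorem \ref{thm:structure} by backward induction on $t$, treating the marginal common belief tuple $\theta_t^{1:N}$ as the information state for the centralized decision problem (Problem \ref{problem:equivalent}). For each feasible prescription strategy $\phi^{prs} \in \Phi^{prs}$ I would define the cost-to-go
\begin{align*}
J_t^{\phi^{prs}}(h_t^0) := \ee^{\phi^{prs}}\!\left[\sum_{s=t}^{T} c_s^{prs}(X_s^{1:N}, U_s^{prs}) \,\Big|\, H_t^0 = h_t^0\right],
\end{align*}
and then show that for every $t$, every feasible $\phi^{prs}$, and almost every realization $h_t^0$,
\begin{align*}
V_t\bigl(\{\nu_t^{n}(h_t^0)\}_{n \in \mathcal{N}}\bigr) \leq J_t^{\phi^{prs}}(h_t^0),
\end{align*}
with equality attained along $\phi^{prs*}$. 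Taking expectation at $t=0$ over $H_0^0$ then yields $\Lambda(\phi^{prs*}) \leq \Lambda(\phi^{prs})$ for every feasible $\phi^{prs}$, which is the desired optimality statement.

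The base case $t = T+1$ is trivial since both sides vanish. For the inductive step at time $t$, I would use the tower property to write
\begin{align*}
J_t^{\phi^{prs}}(h_t^0) = \ee^{\phi^{prs}}[c_t^{prs}(X_t^{1:N},U_t^{prs}) \mid h_t^0] + \ee^{\phi^{prs}}[J_{t+1}^{\phi^{prs}}(H_{t+1}^0) \mid h_t^0].
\end{align*}
Setting $u_t^{prs} = \phi_t^{prs}(h_t^0)$ and $\theta_t^n = \nu_t^n(h_t^0)$, Lemma \ref{lm:independence_of_states} identifies the first term with $\textit{IC}(\theta_t^{1:N}, u_t^{prs})$. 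For the second term, $H_{t+1}^0$ augments $h_t^0$ by $(Z_{t+1}^{1:N}, u_t^0)$; by the model's independence of the channel indicators $\Gamma_{t+1}^{1:N}$ and noises $W_t^{1:N}$ from $H_t^0$, together with Lemma \ref{lm:beliefupdate}, the conditional joint law of $(\Gamma_{t+1}^{1:N}, X_{t+1}^{1:N})$ given $h_t^0$ factorizes as $\prod_{n \in \mathcal{N}} \textit{LS}(p^n, \gamma_{t+1}^n)\,\alpha_t^n(dx_{t+1}^n)$, and the updated belief equals $\theta_{t+1}^n = \textit{NB}(\gamma_{t+1}^n, \alpha_t^n, x_{t+1}^n)$. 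Substituting the inductive hypothesis on $J_{t+1}^{\phi^{prs}}$ and integrating then reproduces exactly the second summand in the right-hand side of \eqref{eq:DP_V}. Because (C3) forces $q_t^n \in \mathcal{Q}^n(\theta_t^n)$, the prescription $u_t^{prs}$ is admissible in the outer minimization defining $V_t(\theta_t^{1:N})$, which gives the desired inequality. The strategy $\phi^{prs*}$ attains the minimum at every step by hypothesis, so the induction delivers equality along $\phi^{prs*}$.

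The hard part will be the measurability and well-definedness of the quantities appearing in the induction, since the prescription contains the infinite-dimensional component $q_t^{1:N} \in \prod_{n} \mathcal{Q}^n(\theta_t^n)$. The theorem statement already postulates the existence of a feasible $\phi^{prs*}$ attaining the minima, so I would invoke this assumption rather than prove a measurable-selection result; nevertheless one still has to check that $\phi^{prs*}$ satisfies (C1)--(C3) \emph{globally}. Here (C3) is automatic since $q_t^{n*} \in \mathcal{Q}^n(\theta_t^n)$ combined with Lemma \ref{lm:beliefupdate} ensures the conditional-zero-mean identity \eqref{zero_mean_condition}, while (C1)--(C2) follow from Borel measurability of $\phi^{prs*}$ in $h_t^0$. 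A secondary technical point is to propagate the finite-second-moment restriction (cf.\ \eqref{strategy_condition} and the remark after Problem \ref{problem1}) through the induction so every conditional expectation is well-defined; this follows from the positive semidefiniteness of $R_t$ and the quadratic form of $c_t$, which guarantee absolute integrability along any strategy under consideration.
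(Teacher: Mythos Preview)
Your proposal is correct and follows essentially the same backward-induction argument as the paper: split the cost-to-go via the tower property, identify the instantaneous term with $\textit{IC}$ through the product belief, expand the continuation over $\Gamma_{t+1}^{1:N}$ and the $\alpha_t^n$'s via Lemma~\ref{lm:beliefupdate}, and compare with the minimization defining $V_t$. The only substantive point handled more explicitly in the paper is that measurability of $h_t^0 \mapsto V_t(\{\nu_t^n(h_t^0)\})$ is carried as part of the induction hypothesis and is established precisely by the equality $V_t(\{\nu_t^n(h_t^0)\}) = \ee^{\phi'_t}[\sum_{s\geq t} c_s^{prs}\mid h_t^0]$ (a conditional expectation, hence measurable); you flag measurability as the ``hard part'' but should make this identification explicit so that $\ee[V_{t+1}(\theta_{t+1}^{1:N})\mid h_t^0]$ is well defined at each step.
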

\begin{proof}
See Appendix \ref{Proof_Thm_structure} for a proof.
\end{proof}

If the functions $V_{0:T}$ of Theorem \ref{thm:structure} can be shown to exist, then Theorem \ref{thm:structure} provides a dynamic program to solve Problem \ref{problem:equivalent}. Even if such a dynamic program is available, it suffers from two significant challenges. First, it is a dynamic program on the belief space $\prod_{n=1}^N\Delta(\R^{d_X^n})$ which is infinite dimensional. 
Second, each step of the dynamic program involves  functional optimization over the spaces $\mathcal{Q}^n(\theta^n_t)$, $n \in \mathcal{N}$. 
In the next section, we show that  functions satisfying \eqref{eq:DP_V} exist and that
it is possible to use the dynamic program of  Theorem \ref{thm:structure} to obtain optimal control strategies in Problem \ref{problem1}.

\section{Optimal Control Strategies}
\label{sec:solution}
\subsection{Optimal prescription strategy in Problem \ref{problem:equivalent}}
For a vector $x$ and a matrix $G$, we use
\begin{align}
QF(G,x):= x^\tp G\, x = \tr(Gxx^\tp)
\label{eq:QF}
\end{align}
to denote the quadratic form.
We define the operators $\Omega$ and $\Psi$ as follows,

\vspace{-3mm}
\begin{small}
\begin{align}
&\Omega (P,A,B,R^{11},R^{22}, R^{12}) = R^{11} + A^{\tp} P A \notag \\
& - (R^{12} + A^{\tp} P B) 
(R^{22} + B^{\tp} P B)^{-1} \big((R^{12})^{\tp} + B^{\tp} P A\big), \\
&\Psi (P,A,B,R^{22}, R^{12}) = - (R^{22} + B^{\tp} P B)^{-1} \big((R^{12})^{\tp} + B^{\tp} P A \big).
\end{align}
\end{small}
\vspace{-3mm}
\begin{remark}
$P = \Omega (P,A,B,R^{11},R^{22}, R^{12})$ is the discrete-time algebraic Riccati equation.
\end{remark}

The following theorem presents  functions $V_{0:T}$ satisfying \eqref{eq:DP_V} and an explicit optimal solution of the dynamic program in Theorem \ref{thm:structure}.
\begin{theorem}
\label{thm:Sol_packetdrop}

For $t=0,1,\ldots,T$, the functions
$V_t(\cdot)$ of Theorem \ref{thm:structure} exist and are given by\footnote{Recall that $\mu(\theta_t^n)$ and 
$\cov(\theta_t^n)$ are the mean vector and the covariance matrix for the probability distribution $\theta_t^n$.}  

\vspace{-2mm}
\begin{small}
\begin{align}
V_t(\theta_t^{1:N}) &= 
QF \Big( P_t, \vecc \big(\{\mu(\theta_t^n) \}_{n \in \mathcal{N}} \big) \Big)
 \notag \\& 
 + \sum_{n=1}^N \tr \Big( \tilde{P}_t^{nn} \cov(\theta_t^n) \Big) 
  + e_t,
\label{eq:Vt_PacketDrop}
\end{align}
\end{small}
\vspace{-2mm}
where 
\begin{small}
\begin{align}
e_{t} = \sum_{s = t}^T \sum_{n=1}^N \tr \Big( ((1-p^n)P_{s+1}^{nn}+p^n\tilde P_{s+1}^{nn})\cov(\pi_{W_s^n}) \Big).
\label{eq:error}
\end{align}
\end{small}
The matrices $P_t$,$\tilde P_t^{nn}$, $n \in \mathcal{N}$, defined recursively below, are symmetric positive semi-definite (PSD) and $P^{nn}_t$ is the $n$th diagonal block of $P_t$.

\vspace{-4mm}
\begin{small}
\begin{align}
\label{recursion_P_last}
P_{T+1} &=0 \\ 
\label{recursion_P}
P_{t} &= \Omega (P_{t+1},A,B,R_t^{XX},R_t^{UU}, R_t^{XU}), \\
K_t &= \Psi (P_{t+1},A,B,R_t^{UU}, R_t^{XU});
\label{Gain_K}
\end{align}
\end{small}
For $n \in \mathcal{N}$,
\begin{small}
\begin{align} 
\label{recursion_tilde_P_last}
\tilde P_{T+1}^{nn} &= \mathbf{0}, \\
\tilde P_{t}^{nn} &= \notag \\
&\hspace{-7mm} \Omega \Big((1-p^n) P_{t+1}^{nn} + p^n \tilde P_{t+1}^{nn},A^{nn},B^{nn},R_t^{X^nX^n},R_t^{U^nU^n}, R_t^{X^nU^n} \Big), 
\label{recursion_tilde_P}
\\
\tilde K_t^{nn} &= \Psi \Big((1-p^n) P_{t+1}^{nn} + p^n \tilde P_{t+1}^{nn},A^{nn},B^{nn},R_t^{U^nU^n}, R_t^{X^nU^n} \Big).
\label{Gain_tilde_K}
\end{align}
\end{small}
Furthermore, the optimal prescription strategy is given as,
\begin{align}
\begin{bmatrix}
u^{0*}_t \\
\bar u^{1*}_t \\
\vdots \\
\bar u^{N*}_t 
\end{bmatrix} 
&=
\begin{bmatrix}
\phi^{0*}_t (\theta_t^{1:N}) \\
\bar \phi^{1*}_t (\theta_t^{1:N})\\
\vdots \\
\bar \phi^{N*}_t (\theta_t^{1:N})
\end{bmatrix} 
=K_t
\begin{bmatrix}
\mu(\theta_t^1) \\
\vdots \\
\mu(\theta_t^N)
\end{bmatrix},
\label{eq:opt_ubar}
\\
q^{n*}_t(x_t^n) &= [\tilde \phi^{n*}_t(\theta_t^{1:N})](x_t^n) =\tilde K_t^{nn} \Big( x_t^n - \mu (\theta_t^n) \Big).
\label{eq:opt_gamma}
\end{align}

\end{theorem}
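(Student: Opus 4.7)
I will proceed by backward induction on $t$. The base case $t=T+1$ holds trivially: with $P_{T+1}=0$, $\tilde P_{T+1}^{nn}=0$, and $e_{T+1}=0$ (empty sum), formula \eqref{eq:Vt_PacketDrop} returns $0=V_{T+1}$. For the inductive step I would fix $\theta_t^{1:N}$, plug the inductive form of $V_{t+1}$ into the dynamic program \eqref{eq:DP_V}, carry out the minimization over the prescription $u_t^{prs}=(u_t^0,\bar u_t^{1:N},q_t^{1:N})$ in closed form, and match the resulting expression term-by-term against the claimed form of $V_t$.

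The key reduction exploits both the conditional independence of $X_t^{1:N}$ under $\theta_t^{1:N}$ (Lemma \ref{lm:independence_of_states}) and the zero-mean constraint $\int q_t^n\,d\theta_t^n=0$ for $q_t^n\in\mathcal{Q}^n(\theta_t^n)$. Writing $X_t^n=\mu(\theta_t^n)+\tilde X_t^n$ and $U_t^n=\bar u_t^n+q_t^n(X_t^n)$, I would show that the $\theta_t^{1:N}$-expectation of the quadratic $c_t^{prs}$ splits into a quadratic in the means $(\vecc(\mu(\theta_t^{1:N})),u_t^0,\bar u_t^{1:N})$ plus a sum over $n$ of per-controller quadratic expectations in $(\tilde X_t^n,q_t^n(X_t^n))$: all mean-deviation cross-terms vanish because $\ee[\tilde X_t^n]=\ee[q_t^n(X_t^n)]=0$, and all inter-controller deviation cross-terms vanish by independence. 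A parallel calculation shows $\mu(\alpha_t^n)=A^{nn}\mu(\theta_t^n)+B^{nn}\bar u_t^n+B^{n0}u_t^0$ depends only on the means, while $\cov(\alpha_t^n)$ depends only on $(\theta_t^n,q_t^n)$. Summing the two packet outcomes with weights $\textit{LS}(p^n,\gamma_{t+1}^n)$ then collapses the next-step value into $QF(P_{t+1},\vecc(\{\mu(\alpha_t^n)\}))+\sum_n\tr(M_t^n\cov(\alpha_t^n))+e_{t+1}$, with the effective weight $M_t^n:=(1-p^n)P_{t+1}^{nn}+p^n\tilde P_{t+1}^{nn}$. Thus the DP objective decomposes into a mean subproblem over $(u_t^0,\bar u_t^{1:N})$ and $N$ decoupled deviation subproblems over the $q_t^n$.

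The mean subproblem is a standard one-step LQR with state $\vecc(\{\mu(\theta_t^n)\})$, cost $R_t$, terminal quadratic $P_{t+1}$, and global dynamics $(A,B)$ from \eqref{overall_state_dynamic}; completing the square yields exactly the Riccati map $P_t=\Omega(P_{t+1},A,B,R_t^{XX},R_t^{UU},R_t^{XU})$, the gain $K_t$ of \eqref{Gain_K}, optimum value $QF(P_t,\vecc(\{\mu(\theta_t^n)\}))$, and prescription \eqref{eq:opt_ubar}. Each deviation subproblem is the $\theta_t^n$-expectation of a quadratic in $(\tilde X_t^n,q_t^n(X_t^n))$ with coefficient block $R_t^{U^nU^n}+(B^{nn})^\tp M_t^n B^{nn}\succ 0$; I would minimize the integrand pointwise in $x$, obtaining the affine minimizer $q_t^{n*}(x)=\tilde K_t^{nn}(x-\mu(\theta_t^n))$ with $\tilde K_t^{nn}$ given by \eqref{Gain_tilde_K}, together with minimum value $\tr(\tilde P_t^{nn}\cov(\theta_t^n))$ via the Riccati identity \eqref{recursion_tilde_P}. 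Assembling the two pieces and absorbing the noise residual $\sum_n\tr(M_t^n\cov(\pi_{W_t^n}))$ into $e_t$ reproduces \eqref{eq:Vt_PacketDrop} and the recursion $e_t=e_{t+1}+\sum_n\tr(M_t^n\cov(\pi_{W_t^n}))$, which matches \eqref{eq:error}.

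The main obstacle is the functional nature of the $q_t^n$ optimization over the infinite-dimensional space $\mathcal{Q}^n(\theta_t^n)$. The saving observation above --- that the unconstrained pointwise minimizer of the strictly convex integrand is affine and automatically satisfies $\int q_t^{n*}\,d\theta_t^n=0$ because $\int (x-\mu(\theta_t^n))\,\theta_t^n(dx)=0$ --- reduces this to a finite-dimensional problem while preserving feasibility in $\mathcal{Q}^n(\theta_t^n)$, so the constrained and unconstrained minima coincide. A remaining detail is to verify that $\phi^{prs*}$ is an admissible prescription strategy; this follows because the belief statistics $\mu(\nu_t^n(h_t^0))$ and $\cov(\nu_t^n(h_t^0))$ are measurable in $h_t^0$ through the update of Lemma \ref{lm:beliefupdate}, and the resulting $\phi^{prs*}$ is affine in them and hence Borel, giving $\phi^{prs*}\in\Phi^{prs}$.
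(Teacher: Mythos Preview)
Your proposal is correct and follows essentially the same route as the paper's proof: backward induction, substitution of the quadratic form of $V_{t+1}$, decoupling into a ``mean'' quadratic minimization over $(u_t^0,\bar u_t^{1:N})$ and $N$ independent ``deviation'' functional minimizations over $q_t^n$, and solving each by completing the square (the paper packages these two steps as parts 1 and 2 of Claim~\ref{lm:quadratic_problems}). Two small points worth tightening: (i) the theorem also asserts that $P_t$ and $\tilde P_t^{nn}$ are PSD, and the paper carries this through the induction via Schur complements---you invoke $R_t^{U^nU^n}+(B^{nn})^\tp M_t^n B^{nn}\succ 0$, which relies on $P_{t+1}^{nn},\tilde P_{t+1}^{nn}\succeq 0$, so make the PSD maintenance explicit as part of your induction hypothesis and conclusion; (ii) ``summing the two packet outcomes'' is shorthand for the sum over all $2^N$ configurations of $\gamma_{t+1}^{1:N}$, which factorizes per $n$ because the integrated $V_{t+1}$ expression is additive in $n$ apart from the common $QF(P_{t+1},\cdot)$ term---the paper makes this step explicit in \eqref{eq:Vterm}--\eqref{eq:Vterm_5}.
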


\vspace{-3mm}
\begin{proof}
See Appendix \ref{Proof_Thm_optimal_solution} for a proof.
\end{proof}
\vspace{-2mm}
\begin{remark}
\label{rem:noGaussian}
\mbox{}
\begin{enumerate}
\item Consider a centralized version of our problem where at each time $t$, the remote controller knows all the states $X_t^{1:N}$ and chooses all control actions $U_t^{0:N}$. The solution to this centralized problem is $U_t = K_t X_t$ where $K_t$ is as given in \eqref{Gain_K} and \eqref{recursion_P_last}-\eqref{recursion_P} are the standard Riccati recursion for the centralized problem.
\item The recursions of Theorem \ref{thm:Sol_packetdrop}, and hence the optimal prescription strategy, do not depend on the covariances of the initial state and the noises.
\item Note that the dimension of matrices $P_t$ in \eqref{recursion_P_last}-\eqref{recursion_P} increases with the number $N$ of local controllers. Hence, the complexity of doing the recursions in \eqref{recursion_P_last}-\eqref{recursion_P} increases with $N$.
\end{enumerate}
\end{remark}

\subsection{Optimal control strategies in Problem \ref{problem1}}
From Theorem \ref{thm:structure}, Theorem \ref{thm:Sol_packetdrop} and Lemma \ref{thm:translation}, we can  explicitly compute the optimal control strategies for Problem~\ref{problem1}.

\begin{theorem}
\label{thm:opt_strategies}
The optimal strategies of Problem \ref{problem1} are given by
\begin{align}
\begin{bmatrix}
U^{0*}_t \\
\bar U^{1*}_t \\
\vdots \\
\bar U^{N*}_t 
\end{bmatrix} &=
K_t
\begin{bmatrix}
\hat X_t^1 \\
\vdots \\
\hat X_t^N
\end{bmatrix},
\label{eq:opt_ULbarUR}
\\
U^{n*}_t
&= \bar U^{n*}_t + \tilde K_t^{nn} \left(X_t^n - \hat X_t^n\right), \hspace{1mm} n \in \mathcal{N}
\label{eq:opt_UL}
\end{align}
where $\hat X_t^n$ is the estimate (conditional expectation) of $X_t^n$ based on the common information $H^0_t$. $\hat X_t^n$ can be computed recursively according to
\begin{align}
&\hat X^n_0 = \left\{
\begin{array}{ll}
\mu(\pi_{X_0^n}) & \text{ if }Z_{0}^n= \emptyset,\\
 X_{0}^n & \text{ if }Z_{0}^n = X_{0}^n.
\end{array}\right.
\label{eq:estimator_0}
\end{align}
\begin{align}
&\hat X_{t+1}^n 
= \left\{
\begin{array}{ll}
 A^{nn} \hat X_t^n + B^{nn}\bar U^{n*}_t + B^{n0} U^{0*}_t & \text{if }Z_{t+1}^n= \emptyset,\\
 X_{t+1}^n & \text{if }Z_{t+1}^n = X_{t+1}^n.
\end{array}\right.
\label{eq:estimator_t}
\end{align}
\end{theorem}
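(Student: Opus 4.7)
My plan is to obtain Theorem 4 as a direct consequence of the equivalence established in Lemma 2, the explicit optimal prescription strategy given in Theorem 3, and the belief-update recursion of Lemma 5. First I would invoke Theorem 3 together with Lemma 2: an optimal prescription strategy $\phi^{prs*}$ induces, via \eqref{eq:gR_equivalent2}--\eqref{eq:gL_equivalent2}, optimal control strategies for Problem \ref{problem1} of the form
\begin{align*}
U^{0*}_t &= \phi^{0*}_t(H^0_t), \qquad U^{n*}_t = \bar\phi^{n*}_t(H^0_t) + [\tilde\phi^{n*}_t(H^0_t)](X_t^n).
\end{align*}
Plugging in the explicit prescriptions \eqref{eq:opt_ubar}--\eqref{eq:opt_gamma} from Theorem 3 evaluated at the realized common beliefs $\theta_t^n = \nu_t^n(H_t^0)$ yields $\bar U^{n*}_t = [K_t \vecc(\{\mu(\Theta_t^n)\}_{n \in \mathcal N})]_{(n+1)\bullet}$ and $U^{n*}_t = \bar U^{n*}_t + \tilde K_t^{nn}(X_t^n - \mu(\Theta_t^n))$, matching \eqref{eq:opt_ULbarUR}--\eqref{eq:opt_UL} once we identify $\mu(\Theta_t^n)$ with $\hat X_t^n$.

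The identification $\mu(\Theta_t^n) = \hat X_t^n$ is immediate from the definition of $\Theta_t^n$ in \eqref{eq:thetat_margin}: since $\Theta_t^n$ is the conditional distribution of $X_t^n$ given $H^0_t$, its mean is $\ee[X_t^n \mid H^0_t]$, which is by definition the common estimate $\hat X_t^n$.

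Next I would derive the recursion \eqref{eq:estimator_0}--\eqref{eq:estimator_t} by computing the mean of the updated belief given by Lemma 5. The initial condition \eqref{eq:estimator_0} follows directly from \eqref{eq:theta0}: if $Z^n_0 = \emptyset$ then $\Theta_0^n = \pi_{X_0^n}$ and $\hat X_0^n = \mu(\pi_{X_0^n})$, and if $Z^n_0 = X_0^n$ then $\Theta_0^n$ is a Dirac mass at $X_0^n$ and $\hat X_0^n = X_0^n$. For the one-step update, if $Z_{t+1}^n = X_{t+1}^n$ then \eqref{eq:theta_update} gives $\Theta_{t+1}^n = \rho(X_{t+1}^n)$, so $\hat X_{t+1}^n = X_{t+1}^n$. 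If $Z_{t+1}^n = \emptyset$, I would integrate $f_t^n$ defined in \eqref{eq:dynamic_function} against $\Theta_t^n(dx_t^n)\pi_{W_t^n}(dw_t^n)$ as prescribed by \eqref{eq:psit}, giving
\begin{align*}
\hat X^n_{t+1}
&= A^{nn}\!\int\! x_t^n\,\Theta_t^n(dx_t^n) + B^{nn}\bar U^{n*}_t + B^{nn}\!\int\! q_t^{n*}(x_t^n)\,\Theta_t^n(dx_t^n) \\
&\quad + B^{n0} U^{0*}_t + \ee[W_t^n].
\end{align*}
The first integral is $\hat X_t^n$ by the identification above, $W_t^n$ is zero-mean, and the crucial cancellation of the $q_t^{n*}$ integral is exactly the constraint $q_t^{n*} \in \mathcal{Q}^n(\Theta_t^n)$ from \eqref{zero_mean_condition_q}, which forces $\int q_t^{n*}(x_t^n)\Theta_t^n(dx_t^n) = 0$. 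This yields \eqref{eq:estimator_t}.

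The step I expect to require the most care is verifying that the zero-mean constraint on the deviation prescription $q_t^{n*}$ is indeed satisfied almost surely along the sample path; this is a feasibility check ensuring $\phi^{prs*}\in\Phi^{prs}$ as required in Theorem 3, and it follows from the explicit form $q_t^{n*}(x_t^n) = \tilde K_t^{nn}(x_t^n - \mu(\Theta_t^n))$ since integrating against $\Theta_t^n$ gives $\tilde K_t^{nn}(\mu(\Theta_t^n) - \mu(\Theta_t^n)) = 0$. Once this is confirmed, the recursion collapses cleanly into the form stated, completing the proof.
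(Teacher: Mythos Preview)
Your proposal is correct and follows essentially the same route as the paper: invoke the equivalence of Lemma~\ref{thm:translation} together with the optimal prescription strategy of Theorem~\ref{thm:Sol_packetdrop} (via Theorem~\ref{thm:structure}) to obtain \eqref{eq:opt_ULbarUR}--\eqref{eq:opt_UL}, identify $\hat X_t^n=\mu(\Theta_t^n)$, and then read off the estimator recursion \eqref{eq:estimator_0}--\eqref{eq:estimator_t} by taking means in the belief update of Lemma~\ref{lm:beliefupdate}, using $q_t^{n*}\in\mathcal{Q}^n(\Theta_t^n)$ and $\ee[W_t^n]=0$ to eliminate the extra terms. Your explicit feasibility check that $q_t^{n*}$ has zero mean under $\Theta_t^n$ is a minor addition the paper leaves implicit.
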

\begin{proof}
See Appendix \ref{Proof_Thm_optimal_strategies} for a proof.
\end{proof}

Theorem \ref{thm:opt_strategies} shows that the optimal control strategy of the remote controller $C^0$ is linear in the  state estimate $\hat X_t^{1:N}$, and the optimal control strategy of the local controller $C^{n}$, $n \in \mathcal{N}$, is linear in both the state $X_t^n$ and the  state estimate $\hat X_t^{1:N}$.

\begin{remark}
According to Theorems \ref{thm:Sol_packetdrop} and \ref{thm:opt_strategies}, gain matrices $K_{0:T}$, $\tilde K_{0:T}^{nn}$, $n \in \mathcal{N}$, are calculated offline and only the estimates $\hat X_t^{1:N}$ are computed online at each time $t$ using \eqref{eq:estimator_0}-\eqref{eq:estimator_t}.
\end{remark}

\vspace{-4mm}
\subsection{Simplification of communication from remote controller}
The assumption that the remote controller can perfectly send $Z_t^{1:N}$ and $U_{t-1}^{0*}$ to each local controller requires a lot of communication resources such as bandwidth, especially since the amount of   information  to send grows with the number $N$ of local controllers.
One way to simplify the communication is to transmit only the information needed for computing the optimal control actions.
According to Theorem \ref{thm:opt_strategies}, the local controller $C^{n}$ only needs to know $\bar U^{n*}_t$ and $\hat X_t^n$ (in addition to $X^n_t$ which it observes)
 to compute the optimal control action at time $t$. Hence, instead of sending $Z_t^{1:N}$ and $U_{t-1}^{0*}$ at time $t$, the remote controller can send only $\bar U^{n*}_t$ and $\hat X_t^n$ to  local controller $C^{n}$. By doing so, the communication resources required for each transmission are fixed and do not change with the number of controllers.
 
The above observation also shows that  resource-constrained channels  from $C^0$ to $C^{n}$ can achieve the same performance as perfect channels as long as the channel allows the vector $\vecc(\hat X_t^n,\bar U^{n*}_t) \in \R^{d_X^n+d^n_U}$ to be perfectly transmitted from $C^0$ to $C^{n}$.

Note that the communication can be further simplified when $d_U^0 < d^n_X$. In this case, the remote controller can send $\Gamma_{t}^n$ and $\vecc(U_{t-1}^{0*},\bar U^{n*}_t) \in \R^{d_U^0+d^n_L}$ to the local controller $C^{n}$. Then, the local controller $C^{n}$ can first compute $\hat X_t^n$ using \eqref{eq:estimator_t} and then  use $\hat X_t^n$ and $\bar U^{n*}_t$ to compute the optimal control action $U^{n*}_t$ based on \eqref{eq:opt_UL}.

\vspace{-3mm}
\subsection{Special cases}

\subsubsection{No control action for some controllers}
Our model can also capture the situation when some controllers participate in the communication but do not take any control action. In particular, the situation when controller $C^n$, $n \in  \overline{\mathcal{N}}$ has no action can be captured in the system model of Section \ref{sec:model} by setting $B^{nn} = \mathbf{0}$ (if $n \in \mathcal{N}$), or $B^{m0} = \mathbf{0}$ for all $m\in \mathcal{N}$ (if $n=0$), $R_t^{U^nU^n} = \mathbf{I}$, and $R_t^{XU^n},R_t^{U^nX},R_t^{U^mU^n},R_t^{U^nU^m}$ to be zero matrices for all $m \in \overline{\mathcal{N}} \setminus \{n\}$. Then, from Theorem \ref{thm:opt_strategies}, the optimal action $U^{n*}_t$ is zero which means that controller $n$ takes no action.

\subsubsection{Decoupled systems}
Consider the system model of Section \ref{sec:model} where the dynamics of plant $n$ in \eqref{Model:system}
and the instantaneous cost of sub-system $n$ (that is, plant $n$ and the local controller $C^n$ collectively)
in \eqref{cost_function} are affected only by the $n$-th component of the remote controller's action $U_t^0$, denoted by $[U^{0}_t]{\color{black} _{n \bullet} }$. Specifically,
\begin{small}
\begin{align}
&X_{t+1}^n = A^{nn} X_t^n + B^{nn}U^{n}_t+ \bar B^{n0} [U^{0}_t]{\color{black} _{n \bullet} } + W_t^n, t=0,\dots,T,
\notag
\\
&c_t(X_t^{1:N}, U^{0:N}_t) =
\sum_{n=1}^N c_t^n(X_t^n, [U^{0}_t]{\color{black} _{n \bullet} },U^{n}_t),
\end{align}
\end{small}
where $c_t^n$ is a quadratic function of the form \eqref{cost_function}.

We can still use Theorem \ref{thm:opt_strategies} to find optimal control strategies in this model. However, it is more efficient to consider the system as consisting of $N$ decomposed remote controllers $C^{0n}, n \in \mathcal{N}$, where the remote controller $C^{0n}$ is associated with only subsystem $n$. The problem of finding optimal strategies then decomposes into $N$ separate problems, each with one remote and one local controller. Each subproblem is a special case of Problem \ref{problem1}. Problems with one local and one remote controller were also investigated in our prior work  \cite{Ouyang_Asghari_Nayyar:CDC_2016}.

\subsubsection{Always active links}
Consider an instance of Problem \ref{problem1} where the links from the local controllers to the remote controller are always active, that is, $\Gamma_t^n =1$, for all $n \in \mathcal{N}$, and all $t = 0,\ldots,T$. Note that in this case, we have $Z_t^n = X_t^n$ for all $n \in \mathcal{N}$, $t = 0,\ldots,T$.  Hence, Problem \ref{problem1} effectively becomes a centralized problem. The optimal strategies of this problem can be calculated using Theorem \ref{thm:opt_strategies} as $U_t^* = K_t X_t$ where $K_t$ is computed recursively using \eqref{recursion_P_last}-\eqref{Gain_K}. These results are identical to the standard results for centralized LQ control problem under the cost function of \eqref{cost_function}.

\subsubsection{Always failed links}
Consider an instance of Problem \ref{problem1} where the links from the local controllers to the remote controller are always failed, that is, $\Gamma_t^n =0$, for  all $n \in \mathcal{N}$, $t = 0,\ldots,T$. In this case, the optimal control strategies  are given by,

\vspace{-3mm}
\begin{small}
\begin{align*}
\begin{bmatrix}
U^{0*}_t \\
 U^{1*}_t \\
\vdots \\
 U^{N*}_t
\end{bmatrix} &=
K_t
\begin{bmatrix}
\hat X_t^1 \\
\vdots \\
\hat X_t^N
\end{bmatrix} 
+ \begin{bmatrix}
   \mathbf{0} & & & \\
 &  \tilde K_t^{11} & & \mathlarger{\mathlarger{\mathlarger{\mathlarger{\mathlarger{0}}}}} \\
       &   & \ddots & \\
     & \mathlarger{\mathlarger{\mathlarger{\mathlarger{\mathlarger{0}}}}} & & \tilde K_t^{NN}
\end{bmatrix}
\begin{bmatrix}
\mathbf{0} \\
X_t^1 - \hat X_t^1 \\
\vdots \\
X_t^N - \hat X_t^N
\end{bmatrix},
\end{align*}
\end{small}
where $K_t$ is computed recursively using \eqref{recursion_P_last}-\eqref{Gain_K}. Furthermore, $\tilde K^{nn}_t$ is computed recursively  using \eqref{recursion_tilde_P_last}-\eqref{Gain_tilde_K} by setting $p^n = 1$ for all $n \in \mathcal{N}$. 

Note that in the case of always failed links, we have $Z_t^n = \emptyset$ for all $n \in \mathcal{N}$, $t = 0,\ldots,T$. According to \eqref{Model:info}, in this case 
$H^0_t = \{U^{0}_{0:t-1}\}$, $H^{n}_t= \{X^n_{0:t}, U^{n}_{0:t-1}\} \cup H^0_t$. 
 Problem 1 now becomes a $(N+1)$-controller problem with partially nested information structure. More specifically, in this case  $C^0$'s action $U_{t-1}^0$ affects the remote controller $C^n$'s information $H_t^n$, but since $H_{t-1}^0 \subset H_t^n$, the information structure is partially nested. This partially nested $(N+1)$-controller problem  has been studied in \cite{lessard_2012} with  linear time-invariant plants and controllers,  continuous-time dynamics  and with the objective of finding the optimal linear time-invariant controllers that minimize an infinite-horizon quadratic cost.

\begin{remark}
A  setup  with partially nested information structure similar to \cite{lessard_2012} but with finite state and action spaces was studied in \cite{Wu_Lall}. However, \cite{Wu_Lall} only provides a dynamic program without explicitly solving it. The finite-ness of state/action spaces, the absence of unreliable communication and the lack of an explicit solution make this work very different from ours.
\end{remark}

\section{Discussion}
\label{sec:discussion}

\subsection{Common Information Approach}
As stated before, our approach for Problem \ref{problem1} is conceptually based on the common information approach of \cite{nayyar2013decentralized}. Due to the perfect downlinks, the remote controller's information $H_t^0$ is common information among all controllers and hence it can serve as a \emph{coordinator} who provides \emph{prescriptions} to all controllers to compute their optimal control actions. Although we conceptually follow the common information approach of \cite{nayyar2013decentralized}, we had to come up with some new technical arguments to adapt this approach to our problem. The technical argument in \cite{nayyar2013decentralized} is proven for finite state and action spaces. While the authors in \cite{nayyar2013decentralized} state that their results should apply to more general spaces, this was not explicitly proven.  In our model, both the state and the action spaces are Euclidean. This has several implications: 
\balphlist
\item Firstly, unlike the case with finite state and action spaces where the set of feasible strategies is a finite set, the set of feasible strategies in our problem is an infinite-dimensional space. This is not merely a difference in the size of the problem. This difference means that in our version of the coordinator's problem, the common belief is a conditional probability measure on a Euclidean space and the coordinator's decision is to be selected from an infinite-dimensional space of all mappings from one Euclidean space to another. 

\item Because of the features of the coordinator's problem described above, it is not known a priori whether well-defined, measurable value functions satisfying the dynamic program of Theorem 1 actually do exist. Note that this existence was trivially true in the finite case of \cite{nayyar2013decentralized}.

\item Further, in the dynamic program of Theorem 1, it is not known a priori if a minimizing prescription for the coordinator exists at each step of the dynamic program and for each possible common belief. Even if such minimizing prescriptions were known to exist, it is still unclear whether a coordination strategy that selects the minimizing prescription for each possible common belief is even measurable. Clearly, if a minimizer-selecting strategy is not measurable, it is not feasible because we cannot even define the expectations involved in the problem. Due to these reasons, our Theorem 1 provides only sufficient conditions for optimality--- Theorem 1 is useful only if well-defined, measurable value functions and minimizer-selecting strategies can be shown to exist.  All of these difficulties are trivially absent from \cite{nayyar2013decentralized} due to the assumed finiteness of spaces involved.
\end{list}

While other works have used common information approach for linear strategies \cite{LessardNayyar:2013, mahajan2015sufficient}, these again bypass the technical difficulties described above. This is because (i) linear strategies imply a finite-dimensional strategy space, (ii) in the context of LQG problems, linear strategies result in Gaussian common beliefs which can be replaced by mean and covariance in the coordinator's dynamic program. Thus, both the belief space and the set of prescriptions are finite-dimensional Euclidean spaces and one can use straightforward modifications of \cite{nayyar2013decentralized} here.

 To the best of our knowledge, this is the first paper to explicitly show that the common information approach for decentralized control is not confined to the realm of problems where state/action spaces are finite or problems which pre-suppose linear strategies. Even though our strategy space allows for arbitrary measurable functions, we were able to adapt the common information approach to find explicit optimal strategies.

\vspace{-3mm}
\subsection{Structure of Optimal Controllers}
As discussed in Section \ref{sec:model}, the information structure of Problem \ref{problem1} is not partially nested due to the unreliable links. Nevertheless, the information structure of Problem \ref{problem1} behaves in a way similar to a partially nested structure in the following sense: If the states of uplinks  are fixed a priori to a certain realization, that is, $\Gamma^{1:N}_t = \gamma^{1:N}_t \in \{0,1\}^N$ for all $t$, the information structure of the problem becomes partially nested (see also the special cases 3) and 4) in Section \ref{sec:solution}-D). 
Therefore, we would expect that optimal controllers have a linear structure if the realization of $\Gamma^{1:N}_t, t=0,\ldots,T$,  was known in advance.
Theorem \ref{thm:opt_strategies} shows that the linearity of controllers is true even when $\Gamma^{1:N}_t, t=0,\ldots,T, $ are  not fixed in advance but only causally observed --- given the realization of $\Gamma^{1:N}_{0:t}$, the common estimates $\hat X^{1:N}_t$ are linear in the available common  information, and the optimal control action of $C^n$ is linear in the actual state $X^n_t$ and the common state estimates $\hat X^{1:N}_t$.

\section{Numerical Experiments}
\label{sec:numerical}
 In this section, we apply the result of Theorem \ref{thm:opt_strategies} to an instance of Problem \ref{problem1} and its corresponding centralized LQ problem (see the special case 3 in Section \ref{sec:solution}-D). 
The purpose of this example is to show that finding optimal strategies for an arbitrary instance of Problem \ref{problem1} using our results is computationally efficient and it is computationally comparable to a corresponding centralized LQ problem.

Consider an instance of Problem \ref{problem1} with one remote controller and $N$ local controllers over a time horizon of duration $T=1000$. We assume that $d_X^n = d_X=3$ for all $n \in \mathcal{N}$ and $d_U^n = d_U=3$ for  all $n \in \overline{\mathcal{N}}$. We want to measure the running time of computing the optimal control strategies for this problem and its corresponding centralized LQ problem.
In order to make sure that our comparison does not depend on the particular choices of system matrices, we calculate the running time for $100$ different instances of Problem \ref{problem1} and their corresponding centralized LQ problem. For each iteration,  each entry of the system matrices is chosen randomly and independently according to a uniform distribution on the interval  $[0,20]$. 
In particular, in each iteration, we generate a matrix $A$, a matrix $B$, and a collection of symmetric PD matrices $R_{0:T}$ randomly\footnote{To generate a $d \times d$ symmetric PD matrix, we  generate $\frac{d(d+1)}{2}$ numbers randomly. Then, we check to see whether the resulting symmetric matrix is PD. If not, we repeat this process until we generate a PD matrix.}. Further, the random variables $X_0^{1:N}$ and $W_{0:T}^{1:N}$ are chosen according to independent Gaussian distributions with zero mean and identity covariance matrices\footnote{According to Remark \ref{rem:noGaussian}, computation of optimal strategies does not depend on the covariances of initial states and noises.}. For this problem setup, we perform the following two experiments:

\begin{itemize}

\item We generate a set of random variables $\Gamma_t^n$, $n \in \mathcal{N}$, $t=0,\ldots,T,$ according to a Bernoulli distribution with $\prob(\Gamma_t^n = 0) = p^n =0.5$ for all $n \in \mathcal{N}$. We use Theorem  \ref{thm:opt_strategies} to compute the optimal control strategies and measure the time required for this computation. 

\item Next, we fix $\Gamma_t^n = 1$, for all $n \in \mathcal{N}$, $t=0,\ldots,T$. In this case, Problem 1 effectively becomes a centralized problem. We use special case 3 in Section \ref{sec:solution}-D to calculate the optimal control strategies and measure the time required for this computation. Note that the runtime for this experiment 
is simply the time required for computing the optimal control strategies for a \emph{centralized LQ problem} with the aforementioned system matrices.
\end{itemize}

The experiments were performed on a MacBook Pro, Intel 3 GHz core i7 processor with 16 GB memory. 
Tables \ref{table_example_1} and \ref{table_example_2} show the average running time of  instances of Problem \ref{problem1} with unreliable links and their corresponding centralized LQ problems (with always active links) for  different values of $N$. As can be seen from Tables \ref{table_example_1} and \ref{table_example_2}, applying our results to an arbitrary instance of Problem \ref{problem1} is computationally comparable to finding  optimal strategies in its corresponding centralized LQ problem. 
  
\begin{table}
\caption{Average running time in seconds for computing the optimal strategies for 100 instances of Problem \ref{problem1} with unreliable links}
\vspace{-7mm}
\label{table_example_1}
\begin{center}
\begin{small}
\begin{tabular}{ l | c | c | c | c}
  \hline			
  $N$ (\# of local controllers) & 1 & 10& 100 & 1000  \\  \hline
  Average running time (s)  & 0.347 & 1.191 & 32.353 & 19163.20 \\ 
  \hline  
\end{tabular}
\vspace{-6mm}
\end{small}
\end{center}
\end{table}
\vspace{-6mm}
\begin{table}
\caption{Average running time in seconds for computing the optimal strategies for the corresponding centralized LQ problems (with always active links)}
\vspace{-7mm}
\label{table_example_2}
\begin{center}
\begin{small}
\begin{tabular}{ l | c | c | c | c}
  \hline			
  $N$ (\# of local controllers) & 1 & 10& 100 & 1000  \\  \hline
  Average running time (s)  & 0.101 & 0.441 & 26.015 & 18512.49 \\ 
  \hline  
\end{tabular}
\vspace{-6mm}
\end{small}
\end{center}
\end{table}

\vspace{1mm}
\section{Conclusion}
\label{sec:conclusion}
 We considered a networked control system (NCS) consisting of a remote controller and a collection of linear plants, each associated with a local controller. 
Each local controller directly observes the state of its co-located plant and can inform the remote controller of the plant's state through an unreliable uplink channel. The downlink channels from the remote controller to local controllers are assumed to be perfect. 
The objective of the local controllers and the remote controller is to cooperatively minimize a quadratic performance cost.
This multi-controller NCS problem is not a partially nested LQG problem, hence we cannot directly use prior results in decentralized control to conclude that linear strategies are optimal.

We employed the common information approach to this problem and showed that it is equivalent to a centralized sequential decision-making problem where the remote controller is the only decision-maker. We provided a dynamic program to obtain optimal strategies in the equivalent problem. Then, using these optimal strategies for the equivalent problem, we obtained optimal control strategies for all local controllers and the remote controller in our original problem. In the optimal control strategies, all controllers compute common estimates of the states of the plants based on the common information obtained from the communication network. The remote controller's action is linear in the common state estimates, and the action of each local controller is linear in both the actual state of its corresponding plant and the common state estimates.

Our results sketch a solution methodology for decentralized control with unreliable communication among controllers.
The methodology can potentially be generalized to other communication topologies in decentralized control such as directed acyclic communication graphs with unreliable links.

\vspace{-2mm}
\bibliographystyle{ieeetr}
\bibliography{IEEEabrv,References,packet_drop,collection}

\vspace{-3mm}
\appendices
\section{Preliminary Results}
\label{pre_results}

\newtheorem{claim}{Claim}
In this section, we state and prove a set of claims which are useful in proving the main results of this paper.

\begin{claim}
\label{lm:conditional_independence}
Let $\mathscr{F}^0$, $\mathscr{F}^{1:N}$ and $\mathscr{G}^{1:N}$ be $\sigma-$algebras such that $\mathscr{F}^{1:N}$ are conditionally independent given $\mathscr{F}^0$, and $\mathscr{G}^n \subset \mathscr{F}^n$, $n \in \mathcal{N}$. Then, for $A^n \in \mathscr{F}^n$, $n \in \mathcal{N}$,\footnote{By conditioning on multiple $\sigma-$algebras, we mean conditioning on the smallest $\sigma-$algebra containing these $\sigma-$algebras.}

\vspace{-3mm}
\begin{small}
\begin{align}
\prob(\{A^n\}_{n \in \mathcal{N}} \vert \mathscr{F}^0, \mathscr{G}^{1:N})
= \prod_{n \in \mathcal{N}}  \prob(A^n \vert \mathscr{F}^0, \mathscr{G}^{1:N}).
\label{sigma_independence_1}
\end{align}
\end{small}
\end{claim}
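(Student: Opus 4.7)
The plan is to verify the claimed factorization by a $\pi$-$\lambda$ argument. The $\sigma$-algebra $\mathscr{F}^0 \vee \mathscr{G}^1 \vee \cdots \vee \mathscr{G}^N$ is generated by the $\pi$-system $\Pi = \{B^0 \cap \bigcap_{n \in \mathcal{N}} B^n : B^0 \in \mathscr{F}^0, B^n \in \mathscr{G}^n\}$, so it suffices to check that for every $D \in \Pi$,
$$\prob\Big(D \cap \bigcap_{n} A^n\Big) = \int_D \prod_n \prob(A^n \mid \mathscr{F}^0, \mathscr{G}^{1:N}) \, d\prob.$$

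The first step, which I expect to be the main obstacle, is to establish the simplification
$$\prob(A^n \mid \mathscr{F}^0, \mathscr{G}^{1:N}) = \prob(A^n \mid \mathscr{F}^0, \mathscr{G}^n) \quad \text{for each } n \in \mathcal{N}.$$
This rests on the standard fact that conditional independence is preserved when the conditioning $\sigma$-algebra is augmented by a sub-$\sigma$-algebra of one of the independent $\sigma$-algebras. Writing $\mathscr{F}^{-n} = \bigvee_{m \neq n} \mathscr{F}^m$ and $\mathscr{G}^{-n} = \bigvee_{m \neq n} \mathscr{G}^m$, from $\mathscr{F}^n \perp \mathscr{F}^{-n} \mid \mathscr{F}^0$ together with $\mathscr{G}^n \subset \mathscr{F}^n$, one first derives $\mathscr{F}^n \perp \mathscr{F}^{-n} \mid \mathscr{F}^0 \vee \mathscr{G}^n$, and hence \emph{a fortiori} $\mathscr{F}^n \perp \mathscr{G}^{-n} \mid \mathscr{F}^0 \vee \mathscr{G}^n$. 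The simplification then follows. I would justify this by checking, on a generating $\pi$-system for $\mathscr{F}^0 \vee \mathscr{G}^n$, that $\prob(A^n \mid \mathscr{F}^0 \vee \mathscr{G}^n)$ is a valid version of $\prob(A^n \mid \mathscr{F}^0, \mathscr{G}^{1:N})$; the core of that verification is the identity $\prob(B \mid \mathscr{F}^0 \vee \mathscr{G}^n) = \prob(B \mid \mathscr{F}^0)$ for $B \in \mathscr{F}^{-n}$, which comes from the pairwise conditional independence of $\mathscr{G}^n$ and $\mathscr{F}^{-n}$ given $\mathscr{F}^0$.

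The second step then completes the calculation. For $D = B^0 \cap \bigcap_n B^n \in \Pi$, note that $A^n \cap B^n \in \mathscr{F}^n$. Conditioning the left-hand side on $\mathscr{F}^0$ and invoking the hypothesized conditional independence of $\mathscr{F}^{1:N}$ given $\mathscr{F}^0$ yields
$$\prob\Big(D \cap \bigcap_n A^n\Big) = \ee\Big[\ind_{B^0} \prod_n \prob(A^n \cap B^n \mid \mathscr{F}^0)\Big].$$
For the right-hand side, each factor $\ind_{B^n}\,\prob(A^n \mid \mathscr{F}^0, \mathscr{G}^n)$ is $\mathscr{F}^0 \vee \mathscr{F}^n$-measurable, so the same conditional independence factorizes the inner expectation $\ee[\,\prod_n (\cdot) \mid \mathscr{F}^0\,]$; each resulting term reduces to $\prob(A^n \cap B^n \mid \mathscr{F}^0)$ by pulling the $\mathscr{G}^n$-measurable indicator $\ind_{B^n}$ inside the conditional expectation $\ee[\,\cdot \mid \mathscr{F}^0 \vee \mathscr{G}^n]$. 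The two sides therefore coincide on $\Pi$, and the $\pi$-$\lambda$ theorem extends the equality from $\Pi$ to all of $\mathscr{F}^0 \vee \mathscr{G}^{1:N}$, completing the proof.
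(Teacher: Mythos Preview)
Your argument is correct, and it shares with the paper the one essential lemma: from $\mathscr{F}^n \perp \mathscr{F}^{-n} \mid \mathscr{F}^0$ and $\mathscr{G}^n \subset \mathscr{F}^n$ one gets $\mathscr{F}^n \perp \mathscr{F}^{-n} \mid \mathscr{F}^0 \vee \mathscr{G}^n$, hence the reduction $\prob(A^n \mid \mathscr{F}^0,\mathscr{G}^{1:N}) = \prob(A^n \mid \mathscr{F}^0,\mathscr{G}^n)$. The paper invokes this as the ``chain rule'' together with Doob's conditional independence property, citing Kallenberg.

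The organization differs, however. The paper does not pass through the defining property of conditional expectation on a generating $\pi$-system. Instead it works directly with $\ee\big[\prod_n \ind_{A^n}\,\big|\,\mathscr{F}^0,\mathscr{G}^{1:N}\big]$ and peels off one factor at a time: for a chosen index $k$, it inserts an inner conditioning on $\mathscr{F}^0 \vee \mathscr{G}^k \vee \mathscr{F}^{-k}$ via the tower property, pulls the factors $\ind_{A^m}$, $m\neq k$, outside, collapses the inner expectation to $\ee[\ind_{A^k}\mid \mathscr{F}^0,\mathscr{G}^k]$ by the key lemma, and then pulls this $\mathscr{F}^0\vee\mathscr{G}^k$-measurable factor outside the outer expectation. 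Iterating over $k$ produces $\prod_k \ee[\ind_{A^k}\mid \mathscr{F}^0,\mathscr{G}^k]$, and a final application of the same lemma rewrites each factor as $\ee[\ind_{A^k}\mid \mathscr{F}^0,\mathscr{G}^{1:N}]$. Your $\pi$-$\lambda$ route is more self-contained in that it spells out why the equality holds on a generating class; the paper's iterated-tower route is shorter once one is willing to cite the relevant Kallenberg results, and it avoids having to argue separately that the product factorization extends from indicators to bounded $\mathscr{F}^0\vee\mathscr{F}^n$-measurable functions.
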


\begin{proof}
Showing the correctness of \eqref{sigma_independence_1} is the same as showing 

\vspace{-4mm}
\begin{small}
\begin{align}
\ee[ \prod_{n \in \mathcal{N}} \ind_{A^n} \vert \mathscr{F}^0, \mathscr{G}^{1:N}]
= \prod_{n \in \mathcal{N}} \ee[\ind_{A^n} \vert \mathscr{F}^0, \mathscr{G}^{1:N}].
\label{sigma_independence_2}
\end{align}
\end{small}
The left hand side of \eqref{sigma_independence_2} can be written as,
\begin{small}
\begin{align}
&\ee[ \prod_{n \in \mathcal{N}} \ind_{A^n}  \vert \mathscr{F}^0, \mathscr{G}^{1:N}] \notag \\
&= \ee \Big[ \ee[\prod_{n \in \mathcal{N}} \ind_{A^n} 
\vert \mathscr{F}^0, \mathscr{G}^k, \{\mathscr{F}^m\}_{ m \in \mathcal{N} \setminus \{k\}}
] \Big \vert \mathscr{F}^0,  \mathscr{G}^{1:N} \Big]
\notag \\
&= \ee \Big[\prod_{n \neq k}\ind_{A^n}  \ee[ \ind_{A^k}
\vert \mathscr{F}^0, \mathscr{G}^k, \{\mathscr{F}^m\}_{m \in \mathcal{N} \setminus \{k\}}
] \Big \vert \mathscr{F}^0, \mathscr{G}^{1:N} \Big]
\notag \\
&= \ee \Big[\prod_{n \neq k}\ind_{A^n}   \ee[ \ind_{A^k} 
\vert \mathscr{F}^0, \mathscr{G}^k
] \Big \vert \mathscr{F}^0, \mathscr{G}^{1:N}\Big]
\notag \\
&= \ee \Big[\prod_{n \neq k}\ind_{A^n}  \Big \vert \mathscr{F}^0, \mathscr{G}^{1:N} \Big]
\ee[ \ind_{A^k} \vert \mathscr{F}^0, \mathscr{G}^k] 
\label{sigma_independence_3}
\end{align}
\end{small}
where the first equality is true due to the tower property of conditional expectation, the second property is true due to ``pulling out known factors" property; the third equality is obtained by first using ``chain rule" property to show that $\mathscr{F}^k$ is conditionally independent of $\{\mathscr{F}^m\}_{ m \in \mathcal{N} \setminus \{k\}}$ given $\mathscr{F}^0, \mathscr{G}^k$ and then using Doob's conditional independence property \cite[Chapter 5]{Kallenberg}; the fourth equality is true again due to ``pulling out known factors" property. 

By repeating the procedure of \eqref{sigma_independence_3} one by one for each $k \in \mathcal{N}$, then we get 

\vspace{-3mm}
\begin{small}
\begin{align}
& \ee \Big[\prod_{n \neq k}\ind_{A^n}  \Big \vert \mathscr{F}^0, \mathscr{G}^{1:N} \Big]
\ee[ \ind_{A^k} \vert \mathscr{F}^0, \mathscr{G}^k]  \notag \\
&= \prod_{k \in \mathcal{N}}  \ee[\ind_{A^k} \vert \mathscr{F}^0, \mathscr{G}^k] 
= \prod_{n \in \mathcal{N}}  \ee[\ind_{A^n} \vert \mathscr{F}^0, \mathscr{G}^{1:N} ]
\end{align}
\end{small}
where last equality is true due to the ``chain rule" property and  Doob's conditional independence property.
\end{proof}
\vspace{-4mm}
\begin{claim}
\label{lm:independe}
\begin{enumerate}
\item Consider feasible strategies $g=g^{0:N}$, $g^n \in \mathcal{G}^n$, $n \in \mathcal{N}$, in Problem \ref{problem1}. 
Then, the random vectors $X_{0:t}^{n}$ are conditionally independent of $X_{0:t}^{m}$ for $n, m \in \mathcal{N}, n \neq m$ given $H^0_t$. 
That is, for any measurable sets $E^n_{0:t} \subset \prod_{s=0}^t \R^{d_X^n}$, $n \in \mathcal{N}$,

\vspace{-2mm}
\begin{small}
\begin{align}
&\prob^{g}( \{X_{0:t}^n \in E_{0:t}^n\}_{ n \in \mathcal{N}}|H^0_{t}) 
= \prod_{n \in \mathcal{N}} \prob^{g}(X_{0:t}^n \in E_{0:t}^n | H^0_{t}).
\label{eq:independence_strategies0T}
\end{align}
\end{small}
\item The same result holds under any feasible fixed prescription strategy $\phi^{prs}\in \Phi^{prs}$ in Problem \ref{problem:equivalent}.
\end{enumerate}
\end{claim}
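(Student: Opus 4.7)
The plan is to prove Part 1 by induction on $t$, using Claim 1 at each step to absorb the fresh observations $Z_{t+1}^{1:N}$ into the common information; Part 2 will then follow from Part 1 via the equivalence established in Lemma \ref{thm:translation}.

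For the base case $t=0$, we have $H_0^0 = Z_0^{1:N}$. Since the initial states $X_0^{1:N}$ are mutually independent and each $Z_0^n$ depends only on $X_0^n$ and the independent packet-drop variable $\Gamma_0^n$, the joint law of $(X_0^{1:N}, Z_0^{1:N})$ factorizes as $\prod_{n=1}^N \pi_{X_0^n}(dx_0^n)\,\kappa_0^n(x_0^n, dz_0^n)$, where $\kappa_0^n$ is the channel kernel. This product structure immediately yields conditional independence of $X_0^{1:N}$ given $Z_0^{1:N}$.

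For the inductive step, assume $X_{0:t}^{1:N}$ are conditionally independent given $H_t^0$. Because $U_t^0 = g_t^0(H_t^0)$ is $\sigma(H_t^0)$-measurable, $\sigma(H_{t+1}^0) = \sigma(H_t^0, Z_{t+1}^{1:N})$. A recursive substitution of the earlier actions inside $g_t^n$ shows that $U_t^n$ is a Borel function of $(X_{0:t}^n, H_t^0)$ alone, with no dependence on the private states of other plants. Consequently $X_{t+1}^n = A^{nn} X_t^n + B^{nn} U_t^n + B^{n0} U_t^0 + W_t^n$ is a Borel function of $(X_{0:t}^n, H_t^0, W_t^n)$, and $Z_{t+1}^n$ is a Borel function of $(X_{t+1}^n, \Gamma_{t+1}^n)$; so the pair $(X_{0:t+1}^n, Z_{t+1}^n)$ is a Borel function of $(X_{0:t}^n, W_t^n, \Gamma_{t+1}^n, H_t^0)$. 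The fresh primitives $W_t^{1:N}$ and $\Gamma_{t+1}^{1:N}$ are mutually independent across $n$ and jointly independent of every random variable constructed from the primitives through time $t$, in particular of $(X_{0:t}^{1:N}, H_t^0)$. Combined with the induction hypothesis, this makes the triples $(X_{0:t}^n, W_t^n, \Gamma_{t+1}^n)_{n \in \mathcal{N}}$ conditionally independent given $\sigma(H_t^0)$, and therefore so are the pairs $(X_{0:t+1}^n, Z_{t+1}^n)_{n \in \mathcal{N}}$.

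Now apply Claim \ref{lm:conditional_independence} with $\mathscr{F}^0 := \sigma(H_t^0)$, $\mathscr{F}^n := \sigma(X_{0:t+1}^n, Z_{t+1}^n)$, and $\mathscr{G}^n := \sigma(Z_{t+1}^n) \subset \mathscr{F}^n$. The conditional-independence hypothesis was just verified, and $\sigma(\mathscr{F}^0, \mathscr{G}^{1:N}) = \sigma(H_{t+1}^0)$; specializing the conclusion of Claim \ref{lm:conditional_independence} to events $A^n = \{X_{0:t+1}^n \in E_{0:t+1}^n\} \in \mathscr{F}^n$ yields \eqref{eq:independence_strategies0T} at time $t+1$, closing the induction. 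For Part 2, Lemma \ref{thm:translation} Part 2 produces from any feasible $\phi^{prs} \in \Phi^{prs}$ control strategies $g^n \in \hat{\mathcal{G}}^n \subset \mathcal{G}^n$ under which the joint law of all system variables is identical, so Part 1 applied to $g^{0:N}$ gives the statement. The main obstacle lies in the sigma-algebra bookkeeping at the inductive step, specifically in establishing conditional independence of the triples $(X_{0:t}^n, W_t^n, \Gamma_{t+1}^n)$ across $n$ given $\sigma(H_t^0)$; this requires carefully combining the induction hypothesis with the independence of the fresh noises and channel variables from the past, which is exactly what Claim \ref{lm:conditional_independence} is designed to package.
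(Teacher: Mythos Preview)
Your proof is correct and follows essentially the same inductive strategy as the paper: both arguments express $X_{t+1}^n$ as a Borel function of $(X_{0:t}^n, W_t^n, H_t^0)$ via recursive substitution of past local actions, then invoke Claim~\ref{lm:conditional_independence} to pass from conditional independence given $H_t^0$ to conditional independence given $H_{t+1}^0$. The only cosmetic differences are (i) the paper puts the channel indicators into the base $\sigma$-algebra, taking $\mathscr{F}^0=\sigma(H_t^0,\Gamma_{t+1}^{1:N})$ and $\mathscr{G}^k=\sigma(\Gamma_{t+1}^k X_{t+1}^k)$, whereas you keep $\Gamma_{t+1}^n$ inside $\mathscr{F}^n$ via $Z_{t+1}^n$---either choice works since $\sigma(Z_{t+1}^n)=\sigma(\Gamma_{t+1}^n,\Gamma_{t+1}^n X_{t+1}^n)$; and (ii) for Part~2 the paper simply repeats the argument, while you transfer the result via Lemma~\ref{thm:translation}, which is a clean shortcut because the constructed $g^{0:N}$ generates pathwise the same actions (and hence the same state and observation processes) as $\phi^{prs}$.
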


\begin{proof}
We prove \eqref{eq:independence_strategies0T} by induction. At time $0$, \eqref{eq:independence_strategies0T} is true because random vectors $X_0^{1:N}$ are independent. Suppose \eqref{eq:independence_strategies0T} is true at time $t$. 

At time $t+1$, for all $n\in\mathcal{N}$, define

\vspace{-3mm}
\begin{small}
\begin{align}
&\tilde f^n_t(X_{0:t}^n, W_{t}^n, H_{t}^0)=X_{t+1}^n  \notag \\
&=A^{nn} X_{t}^n + B^{nn} g_{t}^{n}(X_{0:t}^n,H_{t}^0, U_{0:t-1}^n)+ B^{n0} g_{t}^{0}(H_{t}^0)+ W_{t}^n \notag \\
&=A^{nn} X_{t}^n + B^{nn} \check{g}_{t}^{n}(X_{0:t}^n,H_{t}^0)+ B^{n0} g_{t}^{0}(H_{t}^0)+ W_{t}^n
\label{eq:dynamic_function_1}
\end{align}
\end{small}
where $\check{g}_{t}^{n}$ is obtained from $g^n_{0:t}$ by recursively substituting for $U_{0:t-1}^n$.

Then, the left hand side of \eqref{eq:independence_strategies0T} at time $t+1$ becomes

\vspace{-3mm}
\begin{small}
\begin{align}
&\prob^{g} \Big(\{X_{0:t+1}^n \in E_{0:t+1}^n\}_{ n \in \mathcal{N}}|H^0_{t+1} \Big)  
\notag\\
&=\prob^{g} \Big(\{X_{0:t+1}^n \in E_{0:t+1}^n\}_{ n \in \mathcal{N}} \Big \vert H^0_{t},\{\Gamma_{t+1}^{k},\Gamma_{t+1}^{k} X_{t+1}^ {k} \}_{k \in \mathcal{N}}  \Big)  
\notag\\
&=\prob^{g} \Big(\{\tilde f^n_t(X_{0:t}^n, W_{t}^n,H_{t}^0) \in E^n_{t+1},
X_{0:t}^n \in E_{0:t}^n\}_{n \in \mathcal{N}}
\notag \\
&\hspace{1cm}
\Big\vert H^0_{t},
\{\Gamma_{t+1}^{k}, \Gamma_{t+1}^{k} \tilde f^{k}_t(X_{0:t}^{k}, W_{t}^{k},H_{t}^0) \}_{k \in \mathcal{N}}  \Big).
\label{eq:0:T_t_part1}
\end{align}
\end{small}
Note that 1) $\Gamma^{1:N}_{t+1},W_{t}^{1:N}$ are independent of all other variables at time $t$, and 2) $X_{0:t}^{1:N}$ are independent conditioned on $H^0_{t}$ from the induction hypothesis. Hence, if we define $\mathscr{F}^0 = \sigma (H_t^0, \Gamma_{t+1}^{1:N})$, and for $k \in \mathcal{N}$, we define $\mathscr{F}^{k} = \sigma(X_{0:t}^{k}, W_t^{k}, H_t^0, \Gamma_{t+1}^{k})$ and $\mathscr{G}^{k} = \sigma(\Gamma_{t+1}^{k} \tilde f^{k}_t(X_{0:t}^{k}, W_{t}^{k},H_{t}^0))$, then it can be shown that $\mathscr{F}^{1:N}$ are conditionally independent given $\mathscr{F}^0$. 
Then, by using Claim \ref{lm:conditional_independence}, we can write 

\vspace{-3mm}
\begin{small}
\begin{align}
&\prob^{g} \Big(\{\tilde f^n_t(X_{0:t}^n, W_{t}^n,H_{t}^0) \in E^n_{t+1},
X_{0:t}^n \in E_{0:t}^n\}_{ n \in \mathcal{N}}
\notag \\
&\hspace{1cm}
\Big\vert H^0_{0},
\{\Gamma_{t+1}^{k}, \Gamma_{t+1}^{k}\tilde f^{k}_t(X_{0:t}^{k}, W_{t}^{k},H_{t}^0) \}_{k \in \mathcal{N}}  \Big)  \notag \\
& = \prod_{n  \in \mathcal{N}} \prob^{g} \Big(\tilde f^n_t(X_{0:t}^n, W_{t}^n,H_{t}^0) \in E^n_{t+1},
X_{0:t}^n \in E_{0:t}^n
\notag \\
&\hspace{1cm}
\vert H^0_{t},
\{\Gamma_{t+1}^{k}, \Gamma_{t+1}^{k} \tilde f^{k}_t(X_{0:t}^{k}, W_{t}^{k},H_{t}^0) \}_{ k \in \mathcal{N}}  \Big)  \notag \\
& = \prod_{n  \in \mathcal{N}} \prob^{g} \Big(X_{0:t+1}^n \in E_{0:t+1}^n \vert  H^0_{t+1} \Big) .
\label{eq:0:T_t_part1}
\end{align}
\end{small}
Therefore, \eqref{eq:independence_strategies0T} is true at time $t$ and the proof of the first part is complete.

The second part can be proved in a similar way. 
\end{proof}

\vspace{-5mm}
\begin{corollary}
\label{cor_independence}
For any feasible prescription strategy $\phi^{prs} \in \Phi^{prs}$ in Problem \ref{problem:equivalent} ($g=g^{0:N}$, $g^n \in \mathcal{G}^n$, $n \in \mathcal{N}$, in Problem \ref{problem1}), $(X_{t+1}^n,Z_{t+1}^n)$ is conditionally independent of $\{Z_{t+1}^m \}_{ m \in \mathcal{N} \setminus \{n\}}$ given any realization $H_t^0$.
\end{corollary}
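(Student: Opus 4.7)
The plan is to deduce the corollary almost directly from Claim \ref{lm:independe}, using the independence properties of the primitive random variables $W_t^{1:N}$ and $\Gamma_{t+1}^{1:N}$. The central observation is that, conditioned on $H_t^0$, the triples $(X_t^n, W_t^n, \Gamma_{t+1}^n)$ should be mutually independent across $n \in \mathcal{N}$, and $(X_{t+1}^n, Z_{t+1}^n)$ is a measurable function of those triples together with quantities already determined by $H_t^0$.

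First, I would apply Claim \ref{lm:independe} at time $t$ to conclude that $X_{0:t}^{1:N}$ are conditionally independent across $n$ given $H_t^0$. Since $W_t^{1:N}$ and $\Gamma_{t+1}^{1:N}$ are mutually independent random variables that are also independent of $X_{0:T}^{1:N}$ and of $\Gamma_{0:t}^{1:N}$ (by the problem's primitive independence assumptions), I would augment the conditionally independent family: the enlarged collection $\{(X_t^n, W_t^n, \Gamma_{t+1}^n)\}_{n \in \mathcal{N}}$ remains conditionally independent given $H_t^0$. This uses the standard fact that if $\mathscr{F}^n$ are conditionally independent given $\mathscr{F}^0$ and $\mathscr{H}^n$ are jointly independent and independent of $\mathscr{F}^0 \vee \bigvee_n \mathscr{F}^n$, then $\mathscr{F}^n \vee \mathscr{H}^n$ are conditionally independent given $\mathscr{F}^0$; this can be checked via Claim \ref{lm:conditional_independence} or a direct Doob-style argument.

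Second, I would write $(X_{t+1}^n, Z_{t+1}^n)$ as a measurable function of $(X_t^n, W_t^n, \Gamma_{t+1}^n)$ and $H_t^0$. Indeed, under Lemma \ref{lm:first_structure} we may take $U_t^n = g_t^n(X_t^n, H_t^0)$ and $U_t^0 = g_t^0(H_t^0)$; under a feasible prescription strategy, $U_t^n = \bar\phi_t^n(H_t^0) + [\tilde\phi_t^n(H_t^0)](X_t^n)$ and $U_t^0 = \phi_t^0(H_t^0)$. In either case, plugging into \eqref{Model:system} gives $X_{t+1}^n = f_t^n(X_t^n, W_t^n, u_t^{prs})$ with $u_t^{prs}$ determined by $H_t^0$, and $Z_{t+1}^n$ is then a measurable function of $X_{t+1}^n$ and $\Gamma_{t+1}^n$ via \eqref{Model:channel}. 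Applying a measurable function componentwise preserves conditional independence, so $\{(X_{t+1}^n, Z_{t+1}^n)\}_{n \in \mathcal{N}}$ is conditionally independent given $H_t^0$. Specializing, $(X_{t+1}^n, Z_{t+1}^n)$ is conditionally independent of $\{Z_{t+1}^m\}_{m \neq n}$ given $H_t^0$, as claimed.

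The main obstacle is the first step: rigorously justifying the augmentation that converts conditional independence of $\{X_{0:t}^n\}_{n}$ given $H_t^0$ into conditional independence of $\{(X_t^n, W_t^n, \Gamma_{t+1}^n)\}_n$ given $H_t^0$. If I wanted to avoid invoking this $\sigma$-algebra lemma separately, I could instead reprise the inductive argument used in the proof of Claim \ref{lm:independe}, replacing $X_{0:t+1}^n$ by $(X_{0:t+1}^n, Z_{t+1}^n)$ and conditioning on $H_t^0$ rather than $H_{t+1}^0$; the same application of Claim \ref{lm:conditional_independence} with $\mathscr{F}^k = \sigma(X_{0:t}^k, W_t^k, \Gamma_{t+1}^k, H_t^0)$ and $\mathscr{F}^0 = \sigma(H_t^0)$ would then yield the result in one shot.
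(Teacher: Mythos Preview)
Your proposal is correct and follows the route the paper implicitly intends: the corollary is stated immediately after Claim~\ref{lm:independe} with no separate proof, so the authors are relying on exactly the mechanism you spell out---conditional independence of $\{X_{0:t}^n\}_n$ given $H_t^0$ from Claim~\ref{lm:independe}, augmented by the independent primitives $W_t^{1:N},\Gamma_{t+1}^{1:N}$, followed by the observation that $(X_{t+1}^n,Z_{t+1}^n)$ is a measurable function of $(X_{0:t}^n,W_t^n,\Gamma_{t+1}^n)$ and $H_t^0$.

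One small wrinkle: you invoke Lemma~\ref{lm:first_structure} to write $U_t^n=g_t^n(X_t^n,H_t^0)$, but the corollary is stated for arbitrary $g^n\in\mathcal{G}^n$, not only for $g^n\in\hat{\mathcal{G}}^n$; Lemma~\ref{lm:first_structure} is an optimality reduction, not a structural claim about every strategy. For general $g^n$ you need $U_t^n=\check g_t^n(X_{0:t}^n,H_t^0)$ as in \eqref{eq:dynamic_function_1}, so the correct enlarged family is $\{(X_{0:t}^n,W_t^n,\Gamma_{t+1}^n)\}_n$---precisely what you write in your alternative formulation at the end, and what the proof of Claim~\ref{lm:independe} itself uses. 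With that adjustment the argument goes through verbatim.
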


\begin{claim}
\label{lm:quadratic_problems}
\begin{enumerate}[1)]
\item For any constant vector $x \in \prod_{n=1}^N \R^{d_X^n}$, 

\vspace{-4mm}
\begin{small}
\begin{align*}
 &\min_{\hspace{10pt} \{u^n \in \R^{d_U^n}\}_{n \in \overline{\mathcal{N}}}}
\hspace{-10pt} QF\left(R_t,\vecc(x,\{u^n\}_{n \in  \overline{\mathcal{N}}})\right) 
= QF\left(P_t, x\right)
\end{align*}
\end{small}
where $P_t:=R^{XX}_t - R^{XU}_t  \left(R^{UU}_t\right)^{-1} R^{UX}_t$ is the Schur complement of $R^{UU}_t$ of $R_t$ and 
the optimal solution is given by, 

\vspace{-5mm}
\begin{small}
\begin{align*}
\vecc(\{u^{n*}\}_{n \in \overline{\mathcal{N}}}) = &-\left(R^{UU}_t\right)^{-1} R^{UX}_tx.
\end{align*}
\end{small}
\item For any $\theta^{1:N} \in \prod_{m=1}^N \Delta(\R^{d_X^m})$, let $X^{\theta^1},\dots,X^{\theta^N}$ be independent random variables such that 
$X^{\theta^n}$ has distribution $\theta^n, n \in \mathcal{N}$. Then

\vspace{-5mm}
\begin{small}
\begin{align}
&\hspace{-20pt} \min_{
\begin{small}
 \begin{array}{c} 
\{q^n\in\mathcal{Q}^n(\theta^n)\}_{n \in \mathcal{N}}  \end{array}
 \end{small} } \notag \\
& \tr \bigg( R_t \cov \Big( \vecc(\{X^{\theta^{n}}\}_{n \in \mathcal{N}}, 0, \{q^n(X^{\theta^n})\}_{n \in \mathcal{N}}) \Big) \bigg) 
\nonumber\\
&= \sum_{n=1}^N \tr \Big( P_t^{nn} \cov(X^{\theta^n}) \Big)
\label{eq:QP2}
\end{align}
\end{small}
where 
\begin{align}
P_t^{nn}:=R^{X^nX^n}_t - R^{X^nU^n}_t  \left(R^{U^nU^n}_t\right)^{-1} R^{U^nX^n}_t
\label{eq:QP2_sol_matrix}
\end{align}
and  the optimal solution for $n \in \mathcal{N}$ is given by, 
\begin{small}
\begin{align}
\hspace{-3mm} q^{n*}(X^{\theta^n}) = -\left(R^{U^nU^n}_t\right)^{-1} R^{U^nX^n}_t\left(X^{\theta^n}-\mu(\theta^n)\right).
\label{eq:QP2_sol}
\end{align}
\end{small}
\end{enumerate}
\end{claim}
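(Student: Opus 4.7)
The plan is to handle the two parts separately. Part 1 is a classical finite-dimensional quadratic minimization. First I would expand $QF(R_t,\vecc(x,u))$ with $u=\vecc(\{u^n\}_{n\in\overline{\mathcal{N}}})$ using the block structure of $R_t$, obtaining
\begin{align*}
QF(R_t,\vecc(x,u)) = x^\tp R^{XX}_t x + 2 x^\tp R^{XU}_t u + u^\tp R^{UU}_t u.
\end{align*}
Since $R^{UU}_t$ is symmetric positive definite by assumption, the objective is strictly convex in $u$, so the first-order condition gives the unique minimizer $u^{*}=-(R^{UU}_t)^{-1}R^{UX}_t x$. Substituting back and simplifying yields $QF(P_t,x)$ with $P_t$ the stated Schur complement.

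For Part 2, the key observation is that the covariance matrix of $\vecc(\{X^{\theta^n}\}_{n\in\mathcal{N}},0,\{q^n(X^{\theta^n})\}_{n\in\mathcal{N}})$ is block-diagonal in the index $n$. This follows from independence of the $X^{\theta^n}$'s: for $n\neq m$, $X^{\theta^n}$ is independent of both $X^{\theta^m}$ and $q^m(X^{\theta^m})$, so all cross-covariances vanish. Using the block structure of $R_t$ from \eqref{matrix_structure}, the trace then splits as
\begin{align*}
\tr\Big(R_t\,\cov(\vecc(\{X^{\theta^n}\}_{n\in\mathcal{N}},0,\{q^n(X^{\theta^n})\}_{n\in\mathcal{N}}))\Big) = \sum_{n=1}^N \tr\Big(R^n_t\,\cov(\vecc(X^{\theta^n},q^n(X^{\theta^n})))\Big),
\end{align*}
where $R^n_t$ is the $2\times 2$ block built from $R^{X^nX^n}_t,R^{X^nU^n}_t,R^{U^nX^n}_t,R^{U^nU^n}_t$. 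This decouples the problem into $N$ independent subproblems, one per local controller, with the zero-drift block indexed by $u^0$ dropping out.

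Each subproblem reduces to minimizing
\begin{align*}
\ee\Big[2(X^{\theta^n}-\mu(\theta^n))^\tp R^{X^nU^n}_t q^n(X^{\theta^n}) + q^n(X^{\theta^n})^\tp R^{U^nU^n}_t q^n(X^{\theta^n})\Big]
\end{align*}
over $q^n\in\mathcal{Q}^n(\theta^n)$, plus the $q^n$-independent term $\tr(R^{X^nX^n}_t\cov(X^{\theta^n}))$. I would attempt pointwise minimization: for each realization $x$, minimize the integrand as a function of $q^n(x)\in\R^{d_U^n}$ using Part 1 (with $R^{U^nU^n}_t\succ 0$). This produces the candidate $q^{n*}(x)=-(R^{U^nU^n}_t)^{-1}R^{U^nX^n}_t(x-\mu(\theta^n))$, which is Borel measurable and, crucially, has $\int q^{n*}(x)\theta^n(dx)=0$, so the constraint defining $\mathcal{Q}^n(\theta^n)$ is automatically satisfied. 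The pointwise minimum of the integrand equals $-(x-\mu(\theta^n))^\tp R^{X^nU^n}_t(R^{U^nU^n}_t)^{-1}R^{U^nX^n}_t(x-\mu(\theta^n))$, and integrating with respect to $\theta^n$ combined with the fixed $R^{X^nX^n}_t$-term produces $\tr(P_t^{nn}\cov(X^{\theta^n}))$ with $P_t^{nn}$ as in \eqref{eq:QP2_sol_matrix}.

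The main obstacle I anticipate is justifying the passage from pointwise to functional minimization rigorously. Concretely, for any measurable $q^n\in\mathcal{Q}^n(\theta^n)$ the integrand dominates its pointwise minimum, but I must verify integrability of the bound so that taking expectation preserves the inequality: the pointwise minimum is a quadratic in $x-\mu(\theta^n)$, hence integrable under $\theta^n\in\Delta(\R^{d_X^n})$ since this space only contains distributions with finite second moment. Similarly, the candidate $q^{n*}$ is linear and therefore lies in $\mathcal{Q}^n(\theta^n)$ with finite second moment, so both sides of the inequality are finite. Once this integrability check is in place, the inequality together with attainment at $q^{n*}$ gives the claimed minimum value and minimizer, completing Part 2.
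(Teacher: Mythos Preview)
Your proposal is correct and follows essentially the same route as the paper: Part 1 via standard quadratic minimization, Part 2 by decoupling the trace into $N$ independent subproblems using independence and the zero-mean constraint, then pointwise minimization via Part 1 with verification that $q^{n*}\in\mathcal{Q}^n(\theta^n)$. Your added integrability check (finite second moment of $\theta^n$) is a small refinement the paper leaves implicit.
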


\begin{proof}
The first part of Claim \ref{lm:quadratic_problems} can be obtained by a simple completing the square argument.

Now let's consider the functional optimization problem \eqref{eq:QP2} in the second part of Claim  \ref{lm:quadratic_problems}. 
Using properties of trace and covariance matrices, we can write

\vspace{-4mm}
\begin{small}
\begin{align}
& \tr \bigg( R_t \cov \Big( \vecc(\{X^{\theta^{n}}\}_{n \in \mathcal{N}}, 0, \{q^n(X^{\theta^n})\}_{n \in \mathcal{N}}) \Big) \bigg) 
\notag \\
& = \ee\Big[ QF \Big(R_t, \vecc(\{X^{\theta^{n}}\}_{n \in \mathcal{N}}, 0, \{q^n(X^{\theta^n}) \}_{n \in \mathcal{N}} ) 
\notag \\
&- \ee[ \vecc (\{X^{\theta^{n}}\}_{n \in \mathcal{N}}, 0 ,\{q^n(X^{\theta^n}) \}_{n \in \mathcal{N}} )]  \Big) \Big]
\notag \\
& = \ee\Big[ QF \Big(R_t, \vecc( \{X^{\theta^n} - \mu(\theta^n) \}_{n \in \mathcal{N}}, 0, \{q^n(X^{\theta^n}) \}_{n \in \mathcal{N}} ) \Big) \Big]
\notag \\
& = \sum_{n \in \mathcal{N}}  \ee \Big[ QF \Big( \tilde{R}_t^n, \vecc \Big(X^{\theta^n}-\mu(\theta^n),  q^n(X^{\theta^n}) \Big) \Big)  \Big]
\label{eq:QP2_obj1}
\end{align}
\end{small}
where $\tilde R_t^n=\begin{bmatrix}
 R_t^{X^nX^n} & R_t^{X^nU^n} \\
 R_t^{U^nX^n} & R_t^{U^nU^n}
\end{bmatrix}$.
The last equality in \eqref{eq:QP2_obj1} is true because all off-diagonal terms are zero since $\ee[q^n(X^{\theta^n})]=0, \ee[X^{\theta^n}-\mu(\theta^n)]=0$, and $X^{\theta^{n}}$ and $X^{\theta^{m}}$ are independent for all $n\neq m$. 

Note that each term in \eqref{eq:QP2_obj1} only depends on one $q^n$, $n \in \mathcal{N}$. Therefore, the functional optimization problem \eqref{eq:QP2} is equivalent to solving the $N$ optimization problems

\vspace*{-3mm}
\begin{small}
\begin{align}
\min_{q^n\in\mathcal{Q}^{n}(\theta^n)}
\ee \Big[ QF \Big( \tilde{R}_t^n, \vecc \Big(X^{\theta^n}-\mu(\theta^n),  q^n(X^{\theta^n}) \Big) \Big)  \Big].
\label{eq:QP2_obj22}
\end{align}
\end{small}
Since $\theta^n$ is the distribution of $X^{\theta^n}$, we have

\vspace{-3mm}
\begin{small}
\begin{align}
&\ee\left[
QF\left(
\tilde{R}_t^n, \vecc\left(X^{\theta^n}-\mu(\theta^n),q^n(X^{\theta^n})\right)
\right)
\right]
\nonumber\\
=&\int
QF\left(
\tilde{R}_t^n, \vecc\left(y-\mu(\theta^n),q^n(y)\right)
\right)
\theta^n(dy).
\label{eq:OP2_obj2}
\end{align}
\end{small}
Note that the function inside the integral of \eqref{eq:OP2_obj2} is a quadratic function.
As in the first part of Claim \ref{lm:quadratic_problems}, for any $y\in\R^{d_X^n}$ we have

\vspace{-4mm}
\begin{small}
\begin{align*}
&QF\left(
\tilde{R}_t^n, \vecc\left(y-\mu(\theta^n),q^n(y)\right)
\right) \geq
\nonumber\\ 
&QF\left(
\tilde{R}_t^n, \vecc\left(y-\mu(\theta^n),q^{n*}(y)\right)
\right)
=
QF\left(P_t^{nn},y-\mu(\theta^n)\right)
\end{align*}
\end{small}
where $P_t^{nn}$ is given by \eqref{eq:QP2_sol_matrix} and $q^{n*}$ is the function given by \eqref{eq:QP2_sol} .
Note that $q^{n*}\in\mathcal{Q}^{\theta^n}$ because $q^{n*}$ is measurable and 

\vspace{-3mm}
\begin{small}
\begin{align*}
&\int q^{n*}(x^n)\theta^n(dx^n)
= \notag \\
&\int
-\left(R_t^{U^n U^n}\right)^{-1} R_t^{U^n X^n}
\left(x^n-\mu(\theta^n)\right)
\theta^n(dx^n)
=0.
\end{align*}
\end{small}
Thus $q^{n*}$ is the optimal solution for the optimization problem in \eqref{eq:QP2_obj22} for each $n \in \mathcal{N}$
and the optimal value is $\tr( P_t^{nn} \cov(X^{\theta^n}))$. 
Using \eqref{eq:QP2_obj1}, the optimal value in \eqref{eq:QP2} then becomes
$\sum_{n=1}^N \tr\left(P_t^{nn} 
\cov(X^{\theta^n})
\right).$
\end{proof}
\begin{remark}
The functional optimization in part 2 of Claim 3 can be thought of as a static team problem \cite{radner1962team} where players are constrained to use zero-mean strategies.
\end{remark}

\vspace{-4mm}
\section{Proof of Lemma \ref{lm:first_structure}}
\label{Proof_Lm_first_structure}

To show that the local controller $C^{n}$, $n \in \mathcal{N}$, can use only $\hat H^{n}_t = \{X_{t}^n, H^0_t\}$ to make the decision at time $t$ without loss of optimality, we proceed using \emph{person-by-person} approach. 

For any fixed feasible strategies of the remote controller $g^{0}$ and the local controllers $g^{m}$, $m \in \mathcal{N} \setminus \{n\}$, the problem of finding optimal strategy of the local controller $n$ becomes a centralized problem with  the state $\hat X_t = \vecc(X_{0:t-1}^{-n}, X_t^{1:N}, H_t^0)$. From the theory of centralized control problems with imperfect information \cite{bertsekas1978stochastic}, we know
that we can restrict controller $C^n$'s strategy to be of the form:
\begin{align}
U_t^{n*} = \sigma_t^{n}(\prob^{g^{0},g^{-n}}(\hat X_t \vert H_t^{n}) ).
\end{align}
Then, if we denote $\tilde g=  (g^{0},g^{-n})$, for any measurable sets $F \subset \mathcal{H}^{0}_t$, $E_s^m \in \R^{d_X^m}$, $m \in \mathcal{N} \setminus \{n\}$, $s=0,1,\ldots,t-1$, and $E_t^m \in \R^{d_X^m}$, $m \in \mathcal{N}$,

\vspace{-3mm}
\begin{small}
\begin{align}
&\prob^{\tilde g}\Big(\hat X_t \in  E_{0:t-1}^{-n} \times E_t^{1:N} \times F \vert H_t^{n}\Big) \notag\\
&= 
\prob^{\tilde g}\big( \vecc(X_{0:t}^{-n} ) \in E_{0:t}^{-n}, X_t^n \in E_t^n, H_t^0 \in F  \vert  X_{0:t}^n, H_t^0 \big) \notag\\
&= \mathds{1}_{F}(H_t^0)\prob^{\tilde g}\big(\vecc(X_{0:t}^{-n} ) \in E_{0:t}^{-n}, X_t^n \in E_t^n  \vert  X_{0:t}^n, H_t^0 \big)
\notag \\
&= \mathds{1}_{F}(H_t^0) \prob^{\tilde g}(X_t^n \in E_t^n \vert X_{t}^n, H_t^0) \prod_{m \in \mathcal{N} \setminus \{n\}} \prob^{\tilde g}(X_{0:t}^m \in E_{0:t}^m \vert H_t^0)
\notag \\
&=\prob^{\tilde g}\Big(\hat X_t \in E_{0:t-1}^{-n} \times E_t^{1:N} \times F \vert X_t^n, H_t^{0}\Big)
\label{eq:first_structure_proof}
\end{align}
\end{small}
where the second equality is true due to the ``pulling out known factors" property, the third equality is true from Claim \ref{lm:independe}, and the last equality follows from the same reasons as the first three equalities.
Therefore, the local controller $C^{n}$ can use only $\hat H^{n}_t = \{X_{t}^n, H^0_t\}$ to make the optimal decision at time $t$. 

\textit{An alternative proof based on Markov Decision Process (MDP):}\\
To show that the local controller $C^{n}$, $n \in \mathcal{N}$, can use only $\hat H^{n}_t = \{X_{t}^n, H^0_t\}$ to make the decision at time $t$ without loss of optimality, we proceed using \emph{person-by-person} approach. 
For any fixed feasible strategies of the remote controller $g^{0}$ and the local controllers $g^{m}$, $m \in \mathcal{N} \setminus \{n\}$, the problem of finding optimal strategy of the local controller $n$ can be reduced to a MDP with $(X^n_t, H^0_t)$ as the (perfectly observed) state---In particular, it can be shown that this state evolves in a controlled Markovian fashion with $U^n_t$ as the control action. Moreover,  by averaging over $X^{-n}_{0:t}$, the expected cost at time $t$ can be written as a function of this state and the action $U^n_t$. From the theory of centralized control problems with perfect state information \cite{bertsekas1978stochastic}, we know that we can restrict to control strategies for $C^n$ that are of the form:
\begin{align}
U_t^{n*} = \sigma_t^{n}(X^n_t, H^0_t).
\end{align}
\vspace{-4mm}
\section{Proof of Lemma \ref{lm:beliefupdate}}
\label{Proof_Lm_belief_update}
Note that from \eqref{eq:theta0}-\eqref{eq:dynamic_function}, $[\nu_t^n (\cdot)](E^n): \mathcal{H}_t^0  \to \R$ is a measurable function. To show that
$[\nu_{t}^n(H_{t}^0)] (E^n) = \prob^{\phi^{prs}_{0:t-1}}(X_t^n \in E^n | H^0_{t})$,
first note that for any $t$

\vspace{-4mm}
\begin{small}
\begin{align}
&\prob^{\phi^{prs}_{0:t-1}}(X_{t}^n \in E^n | H^0_{t}) = \prob^{\phi^{prs}_{0:t-1}}(X_{t}^n \in E^n | H^0_{t-1}, Z_{t}^{1:N}) \notag \\
&= \prob^{\phi^{prs}_{0:t-1}}(X_{t}^n \in E^n | H^0_{t-1}, Z_{t}^n),
\label{eq:nu_t_proof}
\end{align}
\end{small}
where the second equality is true because of Corollary \ref{cor_independence}.

We now prove by induction that 
\begin{align}
[\nu_{t}^n(H_{t}^0)] (E^n) = \prob^{\phi^{prs}_{0:t-1}}(X_{t}^n \in E^n | H^0_{t-1}, Z_{t}^n).
\label{nu_claim}
\end{align}
At time $t=0$, since $\Gamma_0^n \in \{0,1\}$, consider two cases:
\begin{itemize}
\item If $\Gamma_0^n =1$,
\begin{small}
\begin{align}
&\prob(X_0^n \in E^n | Z_0^{n})\ind_{\{\Gamma_0^n =1\}} = \prob(X_0^n \in E^n | X_0^n, \Gamma_0^n) \ind_{\{\Gamma_0^n =1\}} \notag \\
&= \prob(X_0^n \in E^n | X_0^n) \ind_{\{\Gamma_0^n =1\}} = \ind_{E^n}(X_0^n) \ind_{\{\Gamma_0^n =1\}}.
\label{eq:gamma1_t0}
\end{align}
\end{small}
\item If $\Gamma_0^n =0$,
\begin{small}
\begin{align}
&\prob(X_0^n \in E^n | Z_0^{n}) \ind_{\{\Gamma_0^n =0\}} = \prob(X_0^n \in E^n | \Gamma_0^n) \ind_{\{\Gamma_0^n =0\}}  \notag \\
&= \prob(X_0^n \in E^n) \ind_{\{\Gamma_0^n =0\}} = \pi_{X_0^n}(E^n) \ind_{\{\Gamma_0^n =0\}}. 
\end{align}
\end{small}
\end{itemize}
Hence, \eqref{nu_claim} holds at time $0$. Assume that \eqref{nu_claim} holds at time $t$.
This means that $\prob^{\phi^{prs}_{0:t-1}}(dx_t^n | H^0_{t}) = [\nu_{t}^n(H_{t}^0)] (dx_t^n)$ and since $W_t^n$ is independent of all random variables at and before time $t$, we get
\begin{align}
\prob^{\phi^{prs}_{0:t-1}}(dx_t^n dw_t^n| H^0_{t}) 
= [\nu_{t}^n(H_{t}^0)] (dx_t^n) \pi_{W_t^n}(dw_t^n).
\label{eq:nu_induction}
\end{align}
At time $t+1$, since $\Gamma_{t+1}^n \in \{0,1\}$, consider two cases:
\begin{itemize}
\item If $\Gamma_{t+1}^n =1$, similar to \eqref{eq:gamma1_t0} we obtain
\begin{small}
\begin{align}
&\hspace{-6mm}\prob^{\phi^{prs}_{0:t}}(X_{t+1}^n \in E^n | H^0_t, Z_{t+1}^n) 
\ind_{\{\Gamma_{t+1}^n =1\}} \notag \\
&\hspace{-6mm}=\ind_{E^n}(X_{t+1}^n) \ind_{\{\Gamma_{t+1}^n =1\}}
=[\nu_{t+1}^n(H_{t+1}^0)](E^n) \ind_{\{\Gamma_{t+1}^n =1\}}.
\end{align}
\end{small}
\vspace{-5mm}
\item If $\Gamma_{t+1}^n =0$,
\begin{small}
\begin{align}
 &\hspace{0mm} \prob^{\phi^{prs}_{0:t}}(X_{t+1}^n \in E^n | H^0_t, Z_{t+1}^n) \ind_{\{\Gamma_{t+1}^n =0\}} \notag \\
 &\hspace{0mm}= \prob^{\phi^{prs}_{0:t}}(X_{t+1}^n \in E^n | H^0_t, \Gamma_{t+1}^n) \ind_{\{\Gamma_{t+1}^n =0\}} \notag \\
&\hspace{0mm}=\prob^{\phi^{prs}_{0:t}}(f_{t+1}^n(X_t^n,W_t^n,\phi_t^{prs}(H^0_t)) \in E^n | H^0_t) \ind_{\{\Gamma_{t+1}^n =0\}} \notag \\
&\hspace{0mm}=\ee^{\phi^{prs}_{0:t}}[\mathds{1}_{E^n}(f_{t+1}^n(X_t^n,W_t^n,\phi_t^{prs}(H^0_t))) | H^0_t] \ind_{\{\Gamma_{t+1}^n =0\}} \notag \\
&\hspace{0mm}=\int\int\mathds{1}_{E^n}(f_{t+1}^n(x_t^n,w_t^n,\phi_t^{prs}(H^0_t))) 
\notag \\
&\hspace{13mm}\prob^{\phi^{prs}_{0:t-1}}(dx_t^n dw_t^n \vert H_t^0) \ind_{\{\Gamma_{t+1}^n =0\}} \notag \\
&\hspace{0mm}=\int\int\mathds{1}_{E^n}(f_{t+1}^n(x_t^n,w_t^n,\phi_t^{prs}(H^0_t))) \notag \\
&\hspace{13mm} \nu_{t}^n(H_{t}^0) (dx_t^n) \pi_{W_t^n}(dw_t^n) \ind_{\{\Gamma_{t+1}^n =0\}} \notag \\
&=[\nu_{t+1}^n(H_{t+1}^0)](E^n) \ind_{\{\Gamma_{t+1}^n =0\}},
\end{align}
\end{small}
where the second equality is true due to \eqref{eq:dynamic_function} and the fact that $\Gamma_{t+1}^n$ is independent of $X_{t+1}^n$ and $H^0_t$, the fourth equality is true due to the disintegration theorem \cite{Kallenberg}, and the fifth equality is true due to  \eqref{eq:nu_induction}.
\end{itemize}
Hence, \eqref{nu_claim} holds at time $t+1$ and the proof of Lemma \ref{lm:beliefupdate} is complete.
\vspace{-7mm}
\section{Proof of Theorem \ref{thm:structure}}
\label{Proof_Thm_structure}
For any $\phi^{prs} \in \Phi^{prs}$ and any realization $h^0_t\in\mathcal{H}^0_t$, let the realization of the common belief $\Theta_t^n$ be
$\theta_t^n=\nu_t^{n}(h_t^0)$, $n \in \mathcal{N}$, defined by Lemma \ref{lm:beliefupdate}.
Suppose the prescription strategy $\phi^{prs*} \in \Phi^{prs}$ achieves the minimum of \eqref{eq:DP_V} for $\theta_t^{1:N}$, $t=0,\ldots,T$, and let $u^{prs*}_t=(u^{0*}_t,\bar u^{1:N*}_t,q_t^{1:N*}) = \phi^{prs*}(h_t^0)$ for any realization $h^0_t\in\mathcal{H}^0_t$.

We prove by induction that $V_{t}\big(\{ \nu_{t}^n(h_{t}^0) \}\big)$ is a measurable function with respect to $h_t^0$, and for any $h_t^0 \in \mathcal{H}_t^0$ we have
\begin{align}
 &\ee^{\phi'_t}\left[\sum_{s=t}^T  c_s^{prs}(X^{1:N}_s,U^{prs}_s)\middle| h_t^0 \right]
 \nonumber\\
&= V_{t}\big(\{ \nu_{t}^n(h_{t}^0) \}_{n \in \mathcal{N}}  \big)
\label{eq:Vinduction_part1}
\\
& \leq  \ee^{\phi^{prs}}\left[\sum_{s=t}^T c_s^{prs}(X^{1:N}_s,U^{prs}_s) \middle| h_t^0\right]
\label{eq:Vinduction_part2}
\end{align}
where $\phi'_t= \{\phi^{prs}_{0:t-1},  \phi^{prs*}_{t:T}\}$.
Note that the above equation at $t=0$ gives the optimality of $\phi^{prs*}$ for Problem \ref{problem:equivalent}.

We first consider \eqref{eq:Vinduction_part1}.
At $T+1$, \eqref{eq:Vinduction_part1} is true (all terms are defined to be $0$ at $T+1$). 
Assume $V_{t+1}\big(\{ \nu_{t+1}^n(h_{t+1}^0)\}_{n \in \mathcal{N}}\big)$
is a measurable function with respect to $h_{t+1}^0$ and
\eqref{eq:Vinduction_part1} is true at $t+1$.

From the tower property of conditional expectation we have

\vspace{-3mm}
\begin{small}
\begin{align}
&\ee^{\phi'_t}\left[\sum_{s=t}^T  c_s^{prs}(X^{1:N}_s,U^{prs}_s)  \middle| h_t^0 \right]
\nonumber\\
&=\ee^{\phi'_t}\left[ c_t^{prs}(X^{1:N}_t,U^{prs}_t)  \middle| h_t^0 \right] \nonumber\\
&+\ee^{\phi'_t}\left[\ee^{\phi'_t}\left[\sum_{s=t+1}^T c_s^{prs}(X^{1:N}_s,U^{prs}_s)  \middle| H_{t+1}^0 \right]\middle| h_t^0 \right].
\label{eq:Vinduction_part1_zero}
\end{align}
\end{small}
Note that the first term in \eqref{eq:Vinduction_part1_zero} is equal to

\vspace{-3mm}
\begin{small}
\begin{align}
\int  c_t^{prs} (x_t^{1:N}, u_t^{prs*}) \prod_{n\in\mathcal{N}}\theta_t^n(dx_t^n) = \textit{IC}(\theta_t^{1:N},u_t^{prs*}).
\label{eq:Vinduction_part1_first} 
\end{align}
\end{small}
From the induction hypothesis, $V_{t+1}\big(\{ \nu_{t+1}^n(h_{t+1}^0)\}_{n \in \mathcal{N}}\big)$ is measurable with respect to $h^0_{t+1}$, and \eqref{eq:Vinduction_part1} holds at $t+1$. Since $\nu_{t+1}^n(h_{t+1}^0) =  \psi_t^n(\theta_t^n,u^{prs*}_t,z^n_{t+1}) $, the second term in \eqref{eq:Vinduction_part1_zero} can be written as

\vspace{-3mm}
\begin{small}
\begin{align}
&\ee^{\phi'_t} \Big[V_{t+1} \Big( \Big\{ \nu_{t+1}^n(H_{t+1}^0) \Big\}_{n \in \mathcal{N}} \Big) \Big\vert  h_t^0\Big] \notag \\
&= \ee^{\phi'_t} \Big[V_{t+1} \Big( \Big\{ \psi_t^n(\theta_t^n,u^{prs*}_t,Z^n_{t+1}) \Big\}_{n \in \mathcal{N}} \Big) \Big\vert h_t^0\Big]
\nonumber\\
 &= \sum_{\gamma_{t+1}^1 \in\{0,1\}} \ldots \sum_{\gamma_{t+1}^N \in\{0,1\}} 
 \Big[ \prod_{n\in\mathcal{N}}  \textit{LS}(p_n, \gamma_{t+1}^n) \Big] \times
 \nonumber\\
&\ee^{\phi'_t}\Big[V_{t+1} \Big( \Big\{ \psi_t^n(\theta_t^n,u^{prs*}_t,Z^n_{t+1}) \Big\}_{n \in \mathcal{N}} \Big) \Big\vert h_t^0, \{\Gamma_{t+1}^n=\gamma_{t+1}^n \}_{n \in \mathcal{N}} \Big]
\label{eq:Vinduction_part1_second_p1} 
\end{align}
\end{small}
where \eqref{eq:Vinduction_part1_second_p1} follows from the fact that $\prob(\Gamma_{t+1}^n=0) = 1-\prob(\Gamma_{t+1}^n=1)=p_n$.
From Lemma \ref{lm:beliefupdate} we have

\vspace{-3mm}
\begin{small}
\begin{align}
 \psi_t^n(\theta_t^n,u^{prs*}_t,Z^n_{t+1})
&=(1- \Gamma_{t+1}^n)\alpha_t^{n*} + \Gamma^n_{t+1} \rho(X^n_{t+1})\notag \\
&=
\textit{NB}(\Gamma_{t+1}^n, \alpha_t^{n*}, X_{t+1}^n) 
\end{align}
\end{small}
where $\alpha_t^{n*} = \psi_t^n(\theta_t^n,u^{prs*}_t,\emptyset)$.
Consequently, each inner term in \eqref{eq:Vinduction_part1_second_p1} can be written as

\vspace{-3mm}
\begin{small}
\begin{align}
&\ee^{\phi'_t}\Big[V_{t+1} \Big( \Big\{ 
\textit{NB}(\Gamma_{t+1}^n, \alpha_t^{n*}, X_{t+1}^n) 
  \Big\}_{n \in \mathcal{N}} \Big) \Big\vert h_t^0, \{\Gamma_{t+1}^n=\gamma_{t+1}^n \}_{n \in \mathcal{N}} \Big] 
\nonumber\\
&=\ee^{\phi'_t}\Big[V_{t+1} \Big( \Big\{
\textit{NB}(\gamma_{t+1}^n, \alpha_t^{n*}, X_{t+1}^n) 
  \Big\}_{n \in \mathcal{N}} \Big) \Big\vert h_t^0, \{\Gamma_{t+1}^n=0 \}_{n \in \mathcal{N}} \Big] 
\nonumber\\
&= \int
V_{t+1} \Big( \Big\{ 
\textit{NB}(\gamma_{t+1}^n, \alpha_t^{n*}, x_{t+1}^n) 
  \Big\}_{n \in \mathcal{N}} \Big)
\prod_{n\in\mathcal{N}} \alpha_t^n(dx_{t+1}^n).
\label{eq:Vinduction_part1_second_inner} 
\end{align}
\end{small}
The first equality in \eqref{eq:Vinduction_part1_second_inner} is true because $X_{t+1}^{1:N}$ are independent of $\Gamma_{t+1}^{1:N}$ and
the last equality in \eqref{eq:Vinduction_part1_second_inner}  follows from Lemma \ref{lm:beliefupdate}.

Combining \eqref{eq:Vinduction_part1_first}, \eqref{eq:Vinduction_part1_second_p1}, and \eqref{eq:Vinduction_part1_second_inner}, the right hand side of \eqref{eq:Vinduction_part1_zero} is $V_{t}\big( \theta_t^{1:N} \big)$ from the definition of the value function \eqref{eq:DP_V} which is equal to $V_{t}\big(\{ \nu_{t}^n(h_{t}^0) \}_{n \in \mathcal{N}}\big)$. Hence, \eqref{eq:Vinduction_part1} is true at time $t$.
The measurability  of $V_{t}\big(\{ \nu_{t}^n(h_{t}^0) \}_{n \in \mathcal{N}}\big)$ with respect to $h_t^0$ is also resulted from the fact that 
$V_{t}\big(\{ \nu_{t}^n(h_{t}^0) \}_{n \in \mathcal{N}} \big)$ is equal to the conditional expectation $\ee^{\phi'_t}\left[\sum_{s=t}^T c_s^{prs}(X^{1:N}_s,U^{prs}_s) \middle| h^0_t\right]$ which is measurable with respect to $h^0_t$.

Now let's consider \eqref{eq:Vinduction_part2}.
At $T+1$, \eqref{eq:Vinduction_part2} is true (all terms are defined to be $0$ at $T+1$). 
Assume \eqref{eq:Vinduction_part2} is true at $t+1$.
Let $u^{prs}_t=(u^{0}_t,\bar u^{1:N}_t,q_t^{1:N}) = \phi^{prs}(h^0_t)$. 
Following an argument similar to that of \eqref{eq:Vinduction_part1_zero}-\eqref{eq:Vinduction_part1_second_inner},

\vspace{-3mm}
\begin{small}
\begin{align*}
&\ee^{\phi^{prs}}\left[\sum_{s=t}^T c_s(X^{1:N}_s,U^{0:N}_s)  \middle| h_t^0 \right]
\notag\\
&\geq 
 \textit{IC}(\theta_t^{1:N},u_t^{prs})
+
\sum_{\gamma_{t+1}^1 \in\{0,1\}} \ldots \sum_{\gamma_{t+1}^n \in\{0,1\}}   \Big[
  \prod_{n\in\mathcal{N}} \textit{LS}(p_n, \gamma_{t+1}^n) \Big] \times
\notag\\
&\hspace{1mm}\int  V_{t+1} \Big( \Big\{ 
\textit{NB}(\gamma_{t+1}^n, \alpha_t^{n}, x_{t+1}^n)
  \Big\}_{n \in \mathcal{N}} \Big)
\prod_{n\in\mathcal{N}}\alpha_t^n(dx_{t+1}^n) 
\geq V_t(\theta_t^{1:N}).
\end{align*}
\end{small}
where the last inequality follows from the definition of the value function \eqref{eq:DP_V}.
This completes the proof of the induction step, and the proof of the theorem.
\vspace{-2mm}
\section{Proof of Theorem \ref{thm:Sol_packetdrop}}
\label{Proof_Thm_optimal_solution}
The proof is done by induction.  First note that \eqref{eq:Vt_PacketDrop} is true for $t=T+1$ since $P_{T+1} = \mathbf{0}$, $\tilde P_{T+1}^{nn} = \mathbf{0}$ for all $n \in \mathcal{N}$ and, by definition, $e_{T+1}=0$.
Now, suppose \eqref{eq:Vt_PacketDrop} is true at $t+1$ and the matrices $P_{t+1}$ and $\tilde P_{t+1}^{nn}$, for all $n \in \mathcal{N}$, are all PSD.
Let's  compute the right hand side of  \eqref{eq:DP_V} in Theorem \ref{thm:structure}.
We first consider $V_{t+1}$ term on the right hand side in \eqref{eq:DP_V}. From the induction hypothesis we have

\vspace{-3mm}
\begin{small}
\begin{align} 
 &V_{t+1} \Big( \Big\{ 
\textit{NB}(\gamma_{t+1}^n, \alpha_t^n, x_{t+1}^n)
  \Big\}_{n \in \mathcal{N}} \Big)
\notag \\
&= QF \bigg( P_{t+1}, \vecc \Big( \Big\{ \mu \Big(\textit{NB}(\gamma_{t+1}^n, \alpha_t^n, x_{t+1}^n) \Big) \Big)  \Big\}_{n \in \mathcal{N}} \Big) \bigg)
\notag \\
&+ \sum_{n \in \mathcal{N}} \tr \bigg( \tilde P_{t+1}^{nn} \cov \Big( \textit{NB}(\gamma_{t+1}^n, \alpha_t^n, x_{t+1}^n)\Big) \Big) + e_{t+1}
\notag \\
&= QF \bigg( P_{t+1}, \vecc \Big( \Big\{ \mu(\alpha_t^n) + \gamma_{t+1}^n ( x^n_{t+1}-\mu(\alpha_t^n))\Big\}_{n \in \mathcal{N}}  \Big) \bigg)
\notag \\
&+ \sum_{n \in \mathcal{N}} (1-\gamma^n_{t+1})\tr \bigg( \tilde P_{t+1}^{nn} \cov \Big(  \alpha_t^n   \Big) \Big)  \bigg) + e_{t+1}
\label{eq:Vterm}
\end{align}
\end{small}
where the last equality in \eqref{eq:Vterm} is true because $\textit{NB}(\gamma_{t+1}^n, \alpha_t^n, x_{t+1}^n)= (1-\gamma_{t+1}^n)\alpha_t^n + \gamma_{t+1}^n \rho(x_{t+1}^n)$, $\mu( \rho(x^n_{t+1})) = x^n_{t+1}$, and $\cov( \rho(x^n_{t+1})) = 0$.

The first term on the right hand side of \eqref{eq:Vterm} can be further decomposed into

\vspace{-3mm}
\begin{small}
\begin{align} 
&QF \bigg( P_{t+1}, \vecc \Big( \Big\{ \mu(\alpha_t^n) + \gamma_{t+1}^n ( x^n_{t+1}-\mu(\alpha_t^n))\Big\}_{n \in \mathcal{N}}
\Big) \bigg)
\notag \\
&= QF \bigg( P_{t+1}, \vecc \Big( \Big\{ \mu(\alpha_t^n) \Big\}_{n \in \mathcal{N}} \Big) \bigg) +
\notag \\
& 2 \vecc \Big( \Big\{  \mu(\alpha_t^n) \Big\}_{n \in \mathcal{N}} \Big)^{\tp} P_{t+1}
\vecc \Big( \Big\{\gamma^n_{t+1}( x^n_{t+1}-\mu(\alpha_t^n))\Big\}_{n \in \mathcal{N}} \Big)
\notag \\
&+  QF \bigg( P_{t+1}, \vecc \Big( \Big\{\gamma^n_{t+1} (x^n_{t+1}-\mu(\alpha_t^n))\Big\}_{n \in \mathcal{N}} \Big) \bigg).
\label{eq:Vterm_3}
\end{align}
\end{small}
Note that $\int (x^n_{t+1} - \mu(\alpha_t^n) )  \alpha_t^n(dx_{t+1}^n) =0, \hspace{10pt} \forall n \in \mathcal{N}$ and 

$\int (x^n_{t+1} - \mu(\alpha_t^n) )  (x^m_{t+1} - \mu(\alpha_t^m) ) \alpha_t^n(dx_{t+1}^n)
\alpha_t^m(dx_{t+1}^m)  =0, \hspace{10pt} \forall n \neq m$.
Consequently, integrating the right hand side of \eqref{eq:Vterm} with respect to $\prod_{n\in\mathcal{N}}\alpha_t^n(dx_{t+1}^n)$ we get

\vspace{-3mm}
\begin{small}
\begin{align}
&\int  V_{t+1} \Big( \Big\{ 
\textit{NB}(\gamma_{t+1}^n, \alpha_t^n, x_{t+1}^n)
  \Big\}_{n \in \mathcal{N}} \Big) \prod_{n\in\mathcal{N}}\alpha_t^n(dx_{t+1}^n)=
\notag\\
&QF \bigg( P_{t+1}, \vecc \Big( \Big\{  \mu(\alpha_t^n) \Big\}_{n \in \mathcal{N}}  \Big) \bigg)  
+  \sum_{n \in \mathcal{N}}  \gamma^n_{t+1} \tr \Big(P_{t+1}^{nn} \cov(\alpha_t^n) \Big)  
\notag\\
&+ \sum_{n \in \mathcal{N}}  (1-\gamma^n_{t+1}) \tr \Big( \tilde P_{t+1}^{nn} \cov(\alpha_t^n) \Big) +e_{t+1}.
\label{eq:Vterm_4}
\end{align}
\end{small}
Substituting \eqref{eq:Vterm_4} back into \eqref{eq:DP_V}, the second term (the term after $+$) on the right hand side of  \eqref{eq:DP_V} can be written as

\vspace{-3mm}
\begin{small}
\begin{align}
&QF \bigg( P_{t+1}, \vecc \Big( \Big\{  \mu(\alpha_t^n) \Big\}_{n \in \mathcal{N}} \Big) \bigg)   \notag \\
&+  \sum_{n \in \mathcal{N}}  (1-p^n) \tr \Big(P_{t+1}^{nn} \cov(\alpha_t^n) \Big)  
\notag\\
&+ \sum_{n \in \mathcal{N}}  p^n \tr \Big( \tilde P_{t+1}^{nn} \cov(\alpha_t^n) \Big)
+ e_{t+1}.
\label{eq:Vterm_5}
\end{align}
\end{small}

Let $S_t^{\theta_t} := \vecc(\{X^{\theta_t^{n}}\}_{n \in \mathcal{N}},u^0_t, \{\bar u^{n}_t+q_t^n(X^{\theta_t^n}) \}_{n \in \mathcal{N}})$ where 
$X^{\theta_t^n}$ is a random vector with distribution $\theta_t^n$ and $\{X^{\theta_t^{n}}\}_{n \in \mathcal{N}}$ and $W_t^{1:N}$ are independent.
Let $Y_t^{\theta_t^n}$ be the random vector defined by
\begin{align}
Y_t^{\theta_t^n}:&= \begin{bmatrix}
 A & B
\end{bmatrix}_{n \bullet} S_t^{\theta_t} +W_t^n \nonumber\\
= &A^{nn} X^{\theta_t^n} + B^{nn}(\bar u^{n}_t+q_t^n(X^{\theta_t^n}))+B^{n0} u^0_t +W_t^n.
\nonumber
\end{align}
From \eqref{eq:psit} in Lemma \ref{lm:beliefupdate} we know that $Y_t^{\theta_t^n}$ has distribution $\alpha_t^n$ for all $n \in \mathcal{N}$.
Then, \eqref{eq:Vterm_5} becomes

\begin{small}
\begin{align}
&QF \Big(P_{t+1}, \vecc(\{\ee[Y_t^{\theta_t^n}] \}_{n \in \mathcal{N}} ) \Big) 
\notag \\
&+  \sum_{n \in \mathcal{N}}  (1-p^n) \tr \Big(P_{t+1}^{nn} \cov(Y_t^{\theta_t^n}) \Big)  
\notag \\
&+ \sum_{n \in \mathcal{N}}  p^n \tr \Big( \tilde P_{t+1}^{nn} \cov(Y_t^{\theta_t^n}) \Big) + e_{t+1}
\notag \\ 
&=QF \Big(L_{t}, \ee[S_t^{\theta_t}] \Big) +\sum_{n \in \mathcal{N}} \tr \Big( ((1-p^n) \hat L_{t}^{nn}+p^n \tilde L_{t}^{nn}) \cov(S_t^{\theta_t}) \Big )
\notag \\
&+ \sum_{n \in \mathcal{N}}
\tr \Big( ((1-p^n) P_{t+1}^{nn}+p^n \tilde P_{t+1}^{nn} ) \cov(\pi_{W_t^n}) \Big)  + e_{t+1}
\notag \\
&= QF \Big(L_{t}, \ee[S_t^{\theta_t}] \Big) \notag \\
&+\sum_{n \in \mathcal{N}} \tr \Big( ((1-p^n) \hat L_{t}^{nn}+p^n \tilde L_{t}^{nn}) \cov(S_t^{\theta_t}) \Big ) + e_{t},
\label{eq:Vterm_6}
\end{align}
\end{small}
where we have defined 
\begin{align}
\label{def:L}
L_t &= \begin{bmatrix}
 A & B
\end{bmatrix}^\tp
P_{t+1}
\begin{bmatrix}
 A & B
\end{bmatrix}, \\
\label{def:L_hat}
 \hat L_t^{nn} &= ( \begin{bmatrix}
 A & B
\end{bmatrix}_{n\bullet}) ^\tp
P_{t+1}^{nn}
\begin{bmatrix}
 A & B
\end{bmatrix}_{n\bullet}, \\ 
\tilde L_t^{nn} &= (\begin{bmatrix}
 A & B
\end{bmatrix}_{n \bullet}) ^\tp
\tilde P_{t+1}^{nn}
\begin{bmatrix}
 A & B
\end{bmatrix}_{n \bullet}.
\label{def:L_tilde}
\end{align}
The first equality in \eqref{eq:Vterm_6} is true because $S_t^{\theta_t}$ and $W_t^{1:N}$ are independent, and the second equality in \eqref{eq:Vterm_6} follows from the definition of $e_t$ in \eqref{eq:error}.

Using the random vector $S_t^{\theta_t}$, we can write the first term on the right hand side  of  \eqref{eq:DP_V} as

\begin{small}
\begin{align}
&\ee \left[ QF\left(R_t,S_t^{\theta_t} \right)\right] 
= QF\left(R_t,\ee \left[S_t^{\theta_t} \right]\right)
+ \tr\left(R_t\cov(S_t^{\theta_t}) \right).
\label{eq:Vterm_7}
\end{align}
\end{small}
 
Now putting  \eqref{eq:Vterm_7} and \eqref{eq:Vterm_6} (that is, the first and second terms of the right hand side of  \eqref{eq:DP_V}) together into the right hand side of  \eqref{eq:DP_V} we get

\begin{small}
\begin{align}
&e_t +  \min_{\{q_t^n\in\mathcal{Q}^{n}(\theta^n)\}_{n \in \mathcal{N}}}
\hspace{-2pt} \Big\lbrace \min_{\{\bar u^{n}_t\in\R^{d_U^n}\}_{n \in \mathcal{N}}, u^0_t\in\R^{d_U^0}}  \Big\lbrace
\nonumber\\
&\hspace{1cm} QF\left(G_t,\ee \left[S_t^{\theta_t} \right]\right)
+ \tr\left(\tilde G_t\cov(S_t^{\theta_t}) \right)
\Big\}\Big\},
\label{eq:DP_V_step1}
\end{align}
\end{small}
where we have defined 
\begin{align}
\label{def:G}
G_t &= R_t + L_t \\
\tilde G_t &= R_t + \sum_{n \in \mathcal{N}} \big( (1-p^n) \hat L_t^{nn} + p^n \tilde L_t^{nn} \big).
\label{def:G_tilde}
\end{align}

Note that $\ee [q_t^n(X^{\theta^n})] =0$ for all $n \in \mathcal{N}$, and consequently,
$\ee[S_t^{\theta_t}]= \vecc\Big(\{\mu(\theta_t^n)\}_{n \in \mathcal{N}}, u^0_t, \bar u^{1:N}_t\Big)$ depends only on $u^0_t,\bar u^{1:N}_t$.
Furthermore, $\cov(S_t^{\theta_t}) = \cov\left(\vecc(\{X^{\theta_t^{n}}\}_{n \in \mathcal{N}}, 0, \{q_t^n(X^{\theta_t^n}) \}_{n \in \mathcal{N}})\right)$ depends only on the choice of $q_t^{1:N}$. Consequently,
the optimization problem in the \eqref{eq:DP_V} can be further simplified to be

\vspace{-3mm}
\begin{small}
\begin{align}
  &e_{t} +
\min_{u^0_t,\bar u^{1:N}_t} 
QF\left(G_{t},\vecc(\{\mu(\theta_t^n)\}_{n \in \mathcal{N}}, u^0_t, \bar u^{1:N}_t) \right) +
\nonumber\\
&\min_{\{q_t^n\in\mathcal{Q}^{n}(\theta^n)\}_{n \in \mathcal{N}}}
\hspace{-5mm}
\tr\left(\tilde G_{t}  \cov\left(\vecc(\{X^{\theta_t^{n}}\}_{n \in \mathcal{N}}, 0, \{q_t^n(X^{\theta_t^n}) \}_{n \in \mathcal{N}})\right) \right).
\label{eq:DP_V_step3}
\end{align}
\end{small}

Now we need to solve the two optimization problems
\begin{small}
\begin{align}
 &\min_{u^0_t,\bar u^{1:N}_t} 
QF\left(G_{t},\vecc(\{\mu(\theta_t^n)\}_{n \in \mathcal{N}}, u^0_t, \bar u^{1:N}_t)\right)   ,
\label{eq:DP_V_P1}
\\
& \min_{\{q_t^n\in\mathcal{Q}^{n}(\theta^n)\}_{n \in \mathcal{N}}}
\hspace{-5mm}
\tr\left(\tilde G_{t}  \cov\left(\vecc(\{X^{\theta_t^{n}}\}_{n \in \mathcal{N}}, 0, \{q_t^n(X^{\theta_t^n}) \}_{n \in \mathcal{N}})\right) \right).
\label{eq:DP_V_P2}
\end{align}
\end{small}
We first consider the optimization in \eqref{eq:DP_V_P1}. According to \eqref{def:G} and \eqref{def:L}, we have
\begin{align}
G_t &= \left[\begin{array}{ll}
G^{XX}_t & G^{XU}_t  \\
G^{UX}_t & G^{UU}_t
\end{array}\right]   \notag \\
&=\left[\begin{array}{ll}
R^{XX}_t & R^{XU}_t   \\
R^{UX}_t & R^{UU}_t 
\end{array}\right] +
\left[\begin{array}{l}
A^{\tp} \\
B^{\tp}
\end{array}\right]
P_{t+1}
\begin{bmatrix}
A & B
\end{bmatrix}.
\label{def:G_details}
\end{align}
Since $R_t^{UU}$ is PD and further, $P_{t+1}$ is PSD by induction, $G_{t}^{UU}$ is PD. Then, it follows by the first part of Claim \ref{lm:quadratic_problems} that the optimal solution of \eqref{eq:DP_V_P1} is given by \eqref{eq:opt_ubar} and

\vspace{-3mm}
\begin{small}
\begin{align}
\label{eq:DP_V_part1}
&\min_{u^0_t,\bar u^{1:N}_t} 
QF\left(G_{t},\vecc(\{\mu(\theta_t^n)\}_{n \in \mathcal{N}}, u^0_t, \bar u^{1:N}_t)\right)    \notag \\
&=QF\Big(P_t, \vecc(\{\mu(\theta_t^n) \}_{n \in \mathcal{N}}) \Big),
\end{align}
\end{small}
where $P_t=G^{XX}_t - G^{XU}_t  (G^{UU}_t)^{-1} G^{UX}_t$ and $K_t = - (G^{UU}_t)^{-1} G^{UX}_t$.
From \eqref{def:G_details}, it is straightforward to see that $P_t =  \Omega (P_{t+1},A,B,R_t^{XX},R_t^{UU}, R_t^{XU})$ and $K_t = \Psi (P_{t+1},A,B,R_t^{UU}, R_t^{XU})$. Furthermore, since $P_{t+1}$ is PSD, according to \eqref{def:G_details}, $G_t$ is PSD. Then $P_t$, and consequently $P_t^{nn}$  for all $n \in \mathcal{N}$, are PSD because $P_t$ is the Schur complement of $G_t^{UU}$ of the matrix $G_t$.

Now, we consider the optimization in \eqref{eq:DP_V_P2}. 
We define the matrix $\tilde G_t^n$ as follows

\vspace{-3mm}
\begin{small}
\begin{align}
\tilde G_t^n &:= \left[\begin{array}{ll}
\tilde G^{X^nX^n}_t & \tilde G^{X^nU^n}_t  \\
\tilde G^{U^nX^n}_t & \tilde G^{U^nU^n}_t
\end{array}\right]  = 
 \left[\begin{array}{ll}
R^{X^nX^n}_t & R^{X^nU^n}_t  \\
R^{U^nX^n}_t & R^{U^nU^n}_t
\end{array}\right]   \notag \\
&+ \left[\begin{array}{l}
(A^{nn})^{\tp} \\
(B^{nn})^{\tp}
\end{array}\right]
\big((1-p^n) P_{t+1}^{nn} + p^n \tilde P_{t+1}^{nn}\big) 
\begin{bmatrix}
A^{nn} & B^{nn}
\end{bmatrix}
\label{def:G_tilde_details}
\end{align}
\end{small}
where the last equality is true from \eqref{def:L_hat}, \eqref{def:L_tilde}, and \eqref{def:G_tilde}.
Since $R_t^{U^nU^n}$ is PD and further, $P_{t+1}^{nn}$ and $\tilde P_{t+1}^{nn}$ are PSD by induction, $\tilde G_{t}^{U^nU^n}$ is PD. Then, the second part of Claim \ref{lm:quadratic_problems} implies that the optimal solution of \eqref{eq:DP_V_P2} is given by \eqref{eq:opt_gamma} and

\vspace{-3mm}
\begin{small}
\begin{align}
 &\min_{\{q_t^n\in\mathcal{Q}^{n}(\theta^n)\}_{n \in \mathcal{N}}}
\hspace{-5mm}
\tr\left(\tilde G_{t}  \cov\left(\vecc(\{X^{\theta_t^{n}}\}_{n \in \mathcal{N}}, 0, \{q_t^n(X^{\theta_t^n}) \}_{n \in \mathcal{N}})\right) \right)
\nonumber\\
&= \sum_{n=1}^N \tr\left(\tilde P_{t}^{nn} 
\cov \left(\theta_t^n\right)
\right),
\label{eq:DP_V_part2}
\end{align}
\end{small}
where $\tilde P_t^{nn}= \tilde G^{X^nX^n}_t - \tilde G^{X^nU^n}_t  (\tilde G^{U^nU^n}_t)^{-1} \tilde G^{U^nX^n}_t$ and $\tilde K_t^{nn} = - (\tilde G^{U^nU^n}_t)^{-1} \tilde G^{U^nX^n}_t$. From \eqref{def:G_tilde_details}, it is straightforward to see that $\tilde P_t^{nn} =  \Omega \big((1-p^n) P_{t+1}^{nn} + p^n \tilde P_{t+1}^{nn},A^{nn},B^{nn},R_t^{X^nX^n},R_t^{U^nU^n}, R_t^{X^nU^n} \big)$ and $\tilde K_t^{nn} = \Psi \Big((1-p^n) P_{t+1}^{nn} + p^n \tilde P_{t+1}^{nn},A^{nn},B^{nn},R_t^{U^nU^n}, R_t^{X^nU^n} \Big)$. Furthermore, since $\tilde P_{t+1}^{nn}$ is PSD, according to \eqref{def:G_tilde_details} $\tilde G_t^{n}$ is PSD. Then, $\tilde P_t^{nn}$  is PSD because $\tilde P_t^{nn}$ is the Schur complement of
$\tilde G^{U^nU^n}_t$ of the matrix $\tilde G_t^{n}$.

Finally, substituting \eqref{eq:DP_V_part1} and \eqref{eq:DP_V_part2} into \eqref{eq:DP_V_step3} we obtain that $V_t$ defined by  \eqref{eq:Vt_PacketDrop} is equal to the right hand side of  \eqref{eq:DP_V}.
This completes the proof of the induction step and the proof of the theorem.

\vspace{-3mm}
\section{Proof of Theorem \ref{thm:opt_strategies}}
\label{Proof_Thm_optimal_strategies}
Let $\hat X_t^n$, $n \in \mathcal{N}$, be the estimate (conditional expectation) of $X_t^n$ based on the common information $H^0_t$. Then, for any realization of the marginal common belief $\theta_t^n$, $\hat x_t^n = \mu(\theta_t^n)$ for all $n\in\mathcal{N}$. This together with Theorems \ref{thm:structure} and \ref{thm:Sol_packetdrop} result in \eqref{eq:opt_ULbarUR} and \eqref{eq:opt_UL}.
To show \eqref{eq:estimator_0} and \eqref{eq:estimator_t}, note that 
at time $t=0$, for any $n \in \mathcal{N}$ and for any realization $h^0_t$ of $H^0_t$,
\begin{align}
&\hat x_0^n = \mu(\theta_0^n) = \int y \theta_0^n(dy) \notag \\
&= 
\Big \lbrace \begin{array}{ll}
\int y \pi_{X_0^n}(dy) = \mu(\pi_{X_0^n}) & \text{ if }z_{0}^n= \emptyset,\\
 \int y [\rho (x_0^n)](dy) =  
 x_0^n & \text{ if }z_{0}^n = x_0^n.
\end{array} 
\end{align}
Therefore, \eqref{eq:estimator_0} is true. Furthermore, at time $t+1$ and for any realization $h^0_{t+1}$ of $H^0_{t+1}$,
let $\theta_{t+1}^{1:N}$ be the corresponding common beliefs and $u^{prs*}_t = \phi^{prs*}_t(h^0_t)$, then
\begin{align*}
\hat x_{t+1}^n = \mu(\theta_{t+1}^n) = \int  y [\psi_t^n(\theta_t^n,u^{prs*}_t, z_{t+1}^n)] (dy).
\end{align*}
If $z_{t+1}^n = x_{t+1}^n$, then $\hat x_{t+1}^n = \int y [\rho (x_{t+1}^n)] (dy) = x_{t+1}^n.$
\\
If $z_{t+1}^n = \emptyset$, then,
\begin{align}
&\hat x_{t+1}^n = \int y [\psi_t^n(\theta_t^n,u^{prs*}_t,q_t^n, \emptyset)] (dy)
\notag\\
= &\int  \int \int y [\rho(f^n_t(x_{t}^n, w_{t}^n,u^{prs*}_t))] (dy)
\theta_t^n(dx_t^n) \pi_{W_t^n}(dw_t^n)
\notag \\
= &\int \int  f^n_t(x_{t}^n, w_{t}^n,u^{prs*}_t)
\theta_t^n(dx_t^n) \pi_{W_t^n}(dw_t^n)
\notag\\
= &A^{nn} \hat x_t^n + B^{nn} \bar u_t^{n*} + B^{n0} u^{0*}_t.
\end{align}
where the third equality is true because 
\begin{align*}
&\int  y [\rho(f^n_t(x_{t}^n, w_{t}^n,u^{prs*}_t))](dy)
= f^n_t(x_{t}^n, w_{t}^n,u^{prs*}_t).
\end{align*}
Furthermore, the last equality is true because $q_t^n \in \mathcal{Q}^n(\theta^n)$ and $W_t^n$ is a zero mean random vector.
Therefore, \eqref{eq:estimator_t} is true and the proof is complete.

\end{document}